\providecommand{\tabularnewline}{\\}
\theoremstyle{plain}
\newtheorem{lem}{\protect\lemmaname}
\theoremstyle{plain}
\newtheorem{conjecture}{\protect\conjecturename}
\theoremstyle{plain}
\theoremstyle{definition}
 \newtheorem{example}{\protect\examplename}
 \theoremstyle{definition}
 \theoremstyle{plain}
\newtheorem{proposition}{\protect\propositionname}
\newcommand{\tr}{\operatorname{tr}}
\newcommand{\id}{\mathbbm{1}}
\let\oldsqrt\sqrt
\def\sqrt{\mathpalette\DHLhksqrt}
\def\DHLhksqrt#1#2{%
\setbox0=\hbox{$#1\oldsqrt{#2\,}$}\dimen0=\ht0
\advance\dimen0-0.2\ht0
\setbox2=\hbox{\vrule height\ht0 depth -\dimen0}%
{\box0\lower0.4pt\box2}}
\let\originalleft\left
\let\originalright\right
\renewcommand{\left}{\mathopen{}\mathclose\bgroup\originalleft}
\renewcommand{\right}{\aftergroup\egroup\originalright}
\renewcommand\bra[1]{{\langle{#1}|}}
\renewcommand\ket[1]{%
  \@ifnextchar\bra{\k@t{#1}\!}{\k@t{#1}}%
}
\newcommand\k@t[1]{{|{#1}\rangle}}
\let\OLDthebibliography\thebibliography
\renewcommand\thebibliography[1]{
	\OLDthebibliography{#1}
	\setlength{\parskip}{0pt}
	\setlength{\itemsep}{0pt plus 0.3ex}
}
\providecommand{\conjecturename}{Conjecture}
\providecommand{\corollaryname}{Corollary}
\providecommand{\examplename}{Example}
\providecommand{\lemmaname}{Lemma}
\providecommand{\remarkname}{Remark}
\providecommand{\propositionname{Proposition}}
\begin{document}

\title{Basis-independent system-environment coherence is necessary to detect magnetic field direction in an avian-inspired quantum magnetic sensor}

\author{Thao P. Le}
\email{thao.le.16@ucl.ac.uk}

\affiliation{Dept. of Physics and Astronomy, University College London, Gower Street, London WC1E 6BT}

\author{Alexandra Olaya-Castro}
\email{a.olaya@ucl.ac.uk}
\affiliation{Dept. of Physics and Astronomy, University College London, Gower Street, London WC1E 6BT}

\date{\today}

\begin{abstract}
Advancing our understanding of non-trivial quantum effects in biomolecular complexes operating in physiological conditions requires the precise characterisation of the non-classicalities that may be present in such systems as well as asserting whether such features are required for robust function. Here we consider an avian-inspired quantum magnetic sensor composed of two radicals with a third ``scavenger'' radical under the influence of a collisional environment that allows to capture a variety of decoherence processes. We show that basis-independent coherence, in which the initial system-environment state is non-maximally mixed, is necessary for optimal performance of the quantum magnetic sensor, and appears to be sufficient in particular situations. We discuss how such non-maximally mixed initial states may be common for a variety of biomolecular scenarios.  Our results therefore suggest that a small degree of coherence\textemdash regardless of basis\textemdash is likely to be a quantum resource for biomolecular systems operating at the interface between the quantum and classical domains.
\end{abstract}
\maketitle

\section{Introduction}
The origin and relevance of quantum coherence for biological function is the pivotal, yet elusive,  question driving research on quantum effects in biological systems. The field promises to bring about a paradigm change in our
understanding of the microscopic mechanisms supporting primary life process on Earth, while expanding our knowledge of, and ability to manipulate, the variety of quantum phenomena that may occur within complex molecular systems that operate at the quantum-classical interface.  To advance this field it is of paramount importance to have clear theoretical understanding of both the quantum coherent behaviours that can be present in prototype biomolecular systems and the relevance of such non-classical features for biological function.

Our work focuses on providing theoretical evidence for a clear and robust relationship between system-environment quantum coherence\textemdash regardless of the basis in which is measured\textemdash and the magnetosensitivity of an avian-inspired quantum compass. Besides demonstrating the necessity and, at times, sufficiency of such collective system and environment quantum coherence for the optimal  detection of magnetic field direction, this work stresses the essential role of environmental degrees of freedom for enabling quantum effects in biomolecular function.

Behavioural studies have shown that migratory birds can orientate via the Earth's magnetic field  \citep{Ritz2009,Wiltschko2005,Wiltschko2011,Wiltschko1993,Beason1984,Beason1995}. However, the physical source of the internal magnetic sensor of these birds is still unclear \citep{Wiltschko2001, Hore2016}. Currently, one leading hypothesis for such internal magnetic compass is the radical-pair mechanism    \citep{Schulten1978,Ritz2000,Hore2016} which can occur in cryptochrome proteins \citep{Wiltschko1972,Schulten1978,Ritz2004,Ritz2009,Kominis2009,Gauger2011,Bandyopadhyay2012,Tiersch2012,Cai2013,Hore2016,Hiscock2016,Worster2016,Kattnig2017,Kattnig2017a,Keens2018,Fay2019}. 
Spin-correlated radical pairs are formed under photo-induced electron transfer. Hyperfine and external magnetic fields---such as the Earth's magnetic field---affect the states of the radical pairs, which in turn affects the subsequent measurable product yields of the recombination reactions \citep{Schulten1978,Ritz2000,Hore2016}. The magneto-sensitive behaviour of radical-pairs has been demonstrated experimentally  \citep{Kerpal2019,Henbest2004, Maeda2008}. However, it is unclear how cryptochrome can meaningfully detect \emph{tiny} magnetic fields, as behavioural studies investigating avian sensitivity to oscillating magnetic fields suggest that birds can detect small changes (of the order of $\unit[10]{n T}$) in an already small Earth's magnetic field ($\unit[50]{\mu T}$) \citep{Ritz2009, Engels2014}. One possibility is that nonclassical phenomena can aid this performance \citep{Hiscock2016}.

Previous works have clearly shown that non-trivial quantum features such as entanglement are present on relevant timescales for the spin dynamics in the radical-pair based magnetic sensor \citep{Cai2013,Gauger2011,Bandyopadhyay2012,Cai2010,Cai2012,Tiersch2012,Hogben2012,Pauls2013,Zhang2014,Zhang2015,Kominis2009,Hiscock2016, Guo2017}. But the necessity of such non-classical correlations for magnetic sensitivity is unclear, with some authors in favour for the necessity of quantum effects \citep{Pauls2013,Cai2013}, and other authors arguing entanglement is just a byproduct \citep{Atkins2019, Gauger2011}. Semiclassical studies on the radical pair magnetic sensor \citep{Fay2019} claim that quantumness overall is unnecessary because they can also predict sensitivity albeit not comparable to quantum based predictions. These studies are overall inconclusive in terms of answering whether or not quantum features contribute to the function of the radical-pair magnetic sensor.

Moreso than entanglement, however, a defining feature of quantum behaviour of a system is quantum coherence \cite{Streltsov2017}. Thus a clear understanding of whether quantum coherence can aid magnetic sensitivity is a central open question. Equally significant yet not widely explored is the role of the nuclei environment in the sensing function of the radical pair. A previous work \cite{Cai2013} has put forward the relevance of global system-environment coherence for the sensitivity of a magnetic compass with the caveat that the environment considered in such study is not widely accepted as representative of the biological conditions of an avian compass.  There, they consider coherence relative to the basis of the system-environment Hamiltonian in the absence of any external magnetic field, and found numerically that global coherence over time played a larger role than local coherence in magnetosensitivity.

Here we consider an avian-inspired quantum magnetic sensor composed of two radicals with a third ``scavenger'' radical \citep{Kattnig2017,Kattnig2017a,Keens2018} under the influence of a collisional environment that can capture a variety of possible decoherence processes \citep{Pezzutto2019,Ziman2010,Cakmak2017,Ciccarello2013,Ciccarello2013a,Lorenzo2017,McCloskey2014,Barra2015,Attal2006}. By characterising the global quantum coherence of our bio-inspired magnetic sensor in a basis-independent manner \citep{Ma2019,Radhakrishnan2019,Streltsov2018} with a measure of the distance to a maximally mixed state, we demonstrate a robust and necessary relationship between system-environment quantum coherence and optimal magnetosensing performance. In doing so, we stress the  essential role of the environment in configuring the relevant quantum behaviour that turns out to be necessary, and in situations, sufficient, for optimal and robust performance in biologically relevant conditions. We discuss how non-maximally mixed system-environment states are likely to be ubiquitous in biomolecular scenarios thereby suggesting that global coherence, irrespective of the basis considered, is likely to be a quantum resource relevant for biomolecular systems.


\section{Results}

\subsection{Basis-independent quantum coherence \label{sec:Coherence}}

Quantum coherence is the most general of all quantum correlations and is one of the key features that distinguish quantum mechanics from classical phenomena \citep{Yao2013, Streltsov2017}. Rigorous frameworks on the resourcefulness of coherence have been constructed \citep{Streltsov2017,Winter2016,Baumgratz2014}, showing how quantum coherence enhances tasks such as quantum metrology and quantum transport on quantum networks \citep{Walschaers2016}.

Generally, coherence is defined relative to some basis: given a preferred basis $\left\{ \ket{i}\right\} _{i}$, incoherent states have form $\sum_{i}p_{i}\ket{i}\bra{i}$ and all other states are deemed coherent. In realistic situations, there are preferred bases such as localised position or photon number. A measure of basis-dependent coherence is given by the $\ell_{1}$ norm \citep{Baumgratz2014}:
$C_{\ell_{1}}\left(\rho\right)=\sum_{i\neq j}\left|\braket{i|\rho|j}\right|$,
which picks out all ``off-diagonal'' terms from the density matrix of a quantum state.  However, this means that basis-dependent coherence is not universal---changing into a system's diagonal basis causes coherence to vanish. Such basis-dependence of this specific form of quantifying coherence has ignited controversy about the relevance and role of quantum coherence in biological systems \cite{Tomasi2020}.

The problem of the choice of basis can be neutralised by considering basis-\emph{independent} quantum coherence \citep{Ma2019,Radhakrishnan2019,Yao2016,Yao2015,Streltsov2018}. If a quantum state has coherence in some basis, then it has basis-independent coherence. The only state without any coherence is the maximally mixed state, $\id_{d}/d=\sum_{i=0}^{d-1}\left(1/d\right)\ket{i}\bra{i}$, which is the same regardless basis chosen.  We can think of basis-independent coherence as a maximisation of coherence over all possible bases. Maximal coherence is equivalent to state purity \cite{Streltsov2018} and  has applications in randomness \cite{Ma2019}. Basis-independent coherence can also be used to describe the degree of polarisation coherence of optical fields \cite{Setaelae2002,Luis2007,Yao2016}.

The amount of basis-independent coherence can be measured with the relative entropy distance to the maximally mixed state \cite{Ma2019}:
\begin{equation}
C_{\id}\left(\rho\right)=S\left(\rho\bigl|\bigr|\id_{d}/d\right)=\log_{2}d-S\left(\rho\right),
\end{equation}
where $S\left(\rho\right)=-\tr\rho\left[\log_{2}\rho\right]$ is the von Neumann entropy and $S\left(\rho||\sigma\right)=-\tr\left[\rho\log\sigma\right]-S\left(\rho\right)$ is the relative entropy. Since basis-independent coherence is equivalent to maximal coherence, $C_{\id}(\rho)$ is also called the relative entropy of purity \cite{Streltsov2018}. 

Note that the term \emph{coherence} is a better descriptor of the underlying quantumness rather than \emph{purity}. In particular, the use of basis-independent coherence is relevant in situations where there is an unstable preferred basis, as is the case here with the quantum magnetic sensor where the magnetic field can change direction. By considering basis-independent coherence, we can robustly quantify the resourcefulness of quantum coherence \emph{without} needing to discuss some preferred basis. Its successful application here gives a precedent to characterising quantum effects in further quantum biological systems. All past works examining basis-dependent coherence remain valid under this approach, while basis-independent coherence further suggests that there could be other sources of coherence (in different basis) that could also contribute to the operation of a quantum system.

In what follows, we often call basis-independent coherence simply as ``coherence'' for succinctness.


\subsection{Avian-inspired quantum magnetic sensor \label{sec:Quantum-magnetic-sensor}}

A source of anisotropy is key in the operation of the radical-pair magnetic sensor \cite{Hore2016}. Typically, this can be derived from hyperfine coupling \citep{Schulten1978,Lee2014,Hore2016,Hiscock2016} or different $g$-factors \citep{Gauger2011,Veselov2001}. Recently, Refs. \citep{Kattnig2017,Kattnig2017a,Keens2018} considered a RPM mechanism with a third ``external scavenger'' particle that used an electron-electron dipolar interaction between radicals to create anisotropy. Here, we introduce anisotropy through system-environment interactions.

We consider a sensor system comprised of three radicals $\mathcal{S}=\left(A,B,C\right)$ which interact via their spin states. While a three-spin radical pair mechanism (RPM) \citep{Kattnig2017,Kattnig2017a,Keens2018} has internal electron-electron dipolar interactions and  electron-exchange interactions, we consider only internal electron-exchange interactions. This is to isolate and highlight the effect of the system-environment interaction on the performance of the magnetic sensor, while also showing that the dipolar interactions are not necessary.
Hence, internal Hamiltonian of the sensor we consider consists of the electron-exchange interaction $\hat{H}_{0}=\hat{H}_{ex}$:
\begin{align}
\hat{H}_{ex} & =-\sum_{\alpha<\beta}J_{\alpha \beta}\left(\dfrac{1}{2}+2\boldsymbol{\hat{S}}^{\alpha}\cdot\boldsymbol{\hat{S}}^{\beta}\right),\label{eq:Hex}
\end{align}
where $\alpha,\beta = A, B,C$ are the system radicals, $\boldsymbol{\hat{S}}^\alpha=\left(S^{\alpha}_{x},S^{\alpha}_{y},S^{\alpha}_{z}\right)=\frac{\hbar}{2}\boldsymbol{\hat{\sigma}}$ are the spin angular momentum operators and $\boldsymbol{\hat{\sigma}}=\left(\sigma_{x},\sigma_{y},\sigma_{z}\right)$ are the Pauli spin operators. Note that $\hbar=1$ is taken throughout this paper.

The quantum magnetic sensor is affected by an external magnetic field
\begin{equation}
\vec{\boldsymbol{B}}_{0}\left(\theta,\phi\right)=B_{0}\left(\cos\theta\sin\phi,\sin\theta\sin\phi,\cos\phi\right),
\end{equation}
via the Zeeman interaction:
\begin{align}
\hat{H}_{\vec{\boldsymbol{B}}}\left[\theta,\phi\right] & =\gamma\vec{\boldsymbol{B}}_{0}\cdot\sum_{\alpha=A,B,C}\boldsymbol{\hat{S}}^{\alpha},\quad\gamma\equiv\dfrac{g\mu_{B}}{\hbar},
\end{align}
where $\phi\in\left[0,\pi\right]$ and $\theta\in\left[0,2\pi\right)$, $\mu_{B}$ is the Bohr magneton, and $g\approx2$. Note that the magnetic field does not affect the environment in our model.

The complex protein environment of the radical pair mechanism in animals has not been fully experimentally characterised nor decided \citep{Solovyov2007, Hore2016, Liedvogel2009}. 
As such, it is important to chose a flexible model framework.
We have chosen to use a collisional model for our environment because collisional models can simulate and represent a wide range of environment models, both Markovian and non-Markovian \citep{Scarani2002,Pezzutto2019,Ziman2010,Cakmak2017,Ciccarello2013,Ciccarello2013a,Lorenzo2017,McCloskey2014,Barra2015,Attal2006,Rybar2012, Amato2019}, which allows our framework to easily extend to more general situations and describe the unavoidable system-environment interactions in biological systems.

We model the environment as three separate baths of two-level particles that interact with the quantum sensor radicals. For the purpose of illustration, we consider a system-environment interaction which causes the system radical spins to decohere. In particular, we consider a ``controlled-NOT (CNOT)''-like interaction which induces the standard Lindblad decoherence provided the environment initial states are maximally mixed. This interaction furthermore promotes the spread of system information into the environment \citep{Balaneskovic2015}.  The CNOT interaction is a `worse case' scenario: if the system information is perfectly copied into the environment, the system will become classical and objective \citep{Zurek2009} thus affecting the operation of the sensor. This allows us to see ``how close'' to the quantum-classical boundary the sensor can operate.

We also suppose that the environment spins have no internal interactions, and we assume their initial states are all identical.  The system-environment interaction proceeds in two steps that are repeated, such that each collisional iteration $\#n$ takes time $T=\tau_{se}+\tau_{ee}$. This is shown in Fig.   \ref{fig:Collisional/repeated-interaction}.
\begin{enumerate}
\item For a time $\tau_{se}$ period, environment particles $\left(\mathcal{E}_{n,A},\mathcal{E}_{n,B},\mathcal{E}_{n,C}\right)$ will collide with system radicals $A$, $B$, and $C$ separately with the following controlled-NOT-like (CNOT) interaction:
\begin{equation}
\hat{V}_{\mathcal{SE}}\left(t\right)=\dfrac{J_{se}}{2}\sum_{\alpha=A,B,C}\left(\id -\sigma_z\right)^{\alpha}\otimes\left(\id-\sigma^{\mathcal{E}_{n,\alpha}}_{x}\right),\label{eq:V_SE}
\end{equation}
where $t\in\left[nT,nT+\tau_{se}\right)$ for $n=0,1,\ldots$. Note that if $J_{se}\tau_{se}=\pi/2$, then we would have the CNOT interaction between the system and environment if there were no other interactions:  $e^{-i\hat{V}_{\mathcal{SE}}\pi/2}=U_{\text{CNOT}}$.
\item Then, for a time $\tau_{ee}$ period, the system continues to evolve but there is no system-environment interaction, i.e., $\hat{V}_{\mathcal{SE}}\left(t\right)=0$.  This represents a constant time between environment collisions.\footnote{It is possible to consider probabilistic times between successive environment collisions \citep{Garcia-Perez2020}. It is also possible to implement a collision model with memory, in which the $\#n$ environment spins interact with the future $\#n+1$ environment spins \citep{Pezzutto2019}.}
\end{enumerate}

\begin{figure}
\begin{centering}
\includegraphics[width=1\columnwidth]{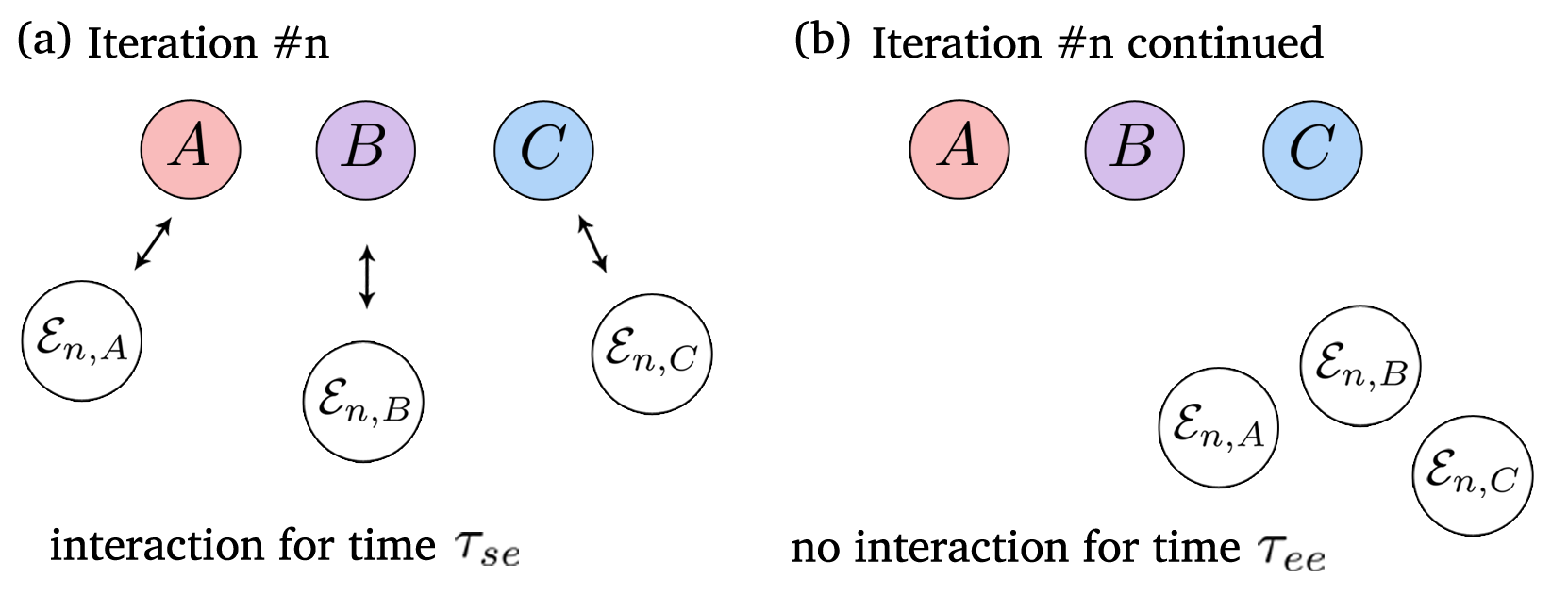}
\par\end{centering}
\caption{Collisional model of the environment. The quantum magnetic sensor consists of radical spins $A$, $B$ and $C$. At iteration $\#n$, \textbf{(a) }three environment particles (also two-level) interact separately with the radicals for time $\tau_{se}$. \textbf{(b) }Then for a period of $\tau_{ee}$ there is no interaction. A new set of environment particles interacts with the systems in iteration $\#n+1$.\label{fig:Collisional/repeated-interaction}}
\end{figure}

The complete evolution of the system and environment is given by \citep{Keens2018}:
\begin{equation}
\dfrac{d\rho_{\mathcal{SE}}\left(t\right)}{dt}=-i\left[\hat{H}_{total}(t),\rho_{\mathcal{SE}}\left(t\right)\right]-k\rho_{\mathcal{SE}}\left(t\right),\label{eq:evolution}
\end{equation}
where $\hat{H}_{total}\left(t\right)=\hat{H}_{0}+\hat{H}_{\vec{\boldsymbol{B}}}\left[\theta,\phi\right]+\hat{V}_{\mathcal{SE}}\left(t\right)$, and  $k$ is the expected lifetime of the recombination. This is disjoint for the different collisional iteration. The recursive solution to the evolution is given in the \hyperref[sec:methods]{Methods}.

In this paper, we consider the radicals to lie on the $z$-axis, $A-B-C$ with equal spacing between $A-B$ and $B-C$, i.e. $A:\left(0,0,-0.5\right)$, $B:\left(0,0,0.5\right)$ and $C:\left(0,0,1.5\right)$, in implicit units of $\left(d_{0}/d_{AB}\right)^{\frac{1}{3}}$. Other parameters are given in Table \ref{tab:Parameters-and-configurations}, with the magnetic field strength equal to the Earth's magnetic field $B_{0}=\unit[50]{\mu T}$, and with an expected recombination decay lifetime of $k=\unit[1]{\mu s}$ \citep{Hore2016}. All initial states (system radicals and environment particles) are product, and hence have no initial correlations.

\begin{table}
\caption{Parameters and configurations used in the paper (unless otherwise specified). The system $A,B,C$ set in a straight line on the $z$ axis, space apart as: $z_{A}=-0.5$ $z_{B}=0.5$ and $z_{C}=1.5$, in implicit units of $\left(d_{0}/d_{AB}\right)^{\frac{1}{3}}$ and with $d_{AB}=\unit[20]{\text{{\normalfont\AA}}}$ ($\hbar=1$). S-E refers to system-environment. \label{tab:Parameters-and-configurations}}

\begin{centering}
\begin{tabular}{|>{\centering}m{3.2cm}|>{\centering}m{4.8cm}|}
\hline 
Parameters & \multicolumn{1}{c|}{Values} \tabularnewline
\hline 
\hline 
dipolar exchange $\hat{H}_{dd}$ & N/A \tabularnewline
\hline 
electron-exchange $\hat{H}_{ex}$ & isotropic $J_{ABC}/d_{AB}=1$ \tabularnewline
\hline 
decay lifetime  & $k/d_{AB}=0.0245$ ($k=\unit[1]{\mu s}$)\tabularnewline
\hline 
magnetic field & $\gamma B_{0}/d_{AB}=0.215$ ($B_{0}=\unit[50]{\mu T}$)\tabularnewline
\hline 
initial $\rho_{n,\mathcal{E}}\left(0\right)$ & $\id/2$\tabularnewline
\hline 
S-E interaction time & $d_{AB}\tau_{se}=1.0$ \tabularnewline
\hline 
S-E interaction strength & $J_{se}\tau_{se}=\dfrac{\pi}{2}$\tabularnewline
\hline 
time between collisions & \centering{}$d_{AB}\tau_{ee}=1.0$\tabularnewline
\hline 
\end{tabular}
\par\end{centering}
\centering{}%
\end{table}

\subsection{Performance measures}

Over time, the radical pairs $A,B$ chemically recombine. The quantum yield of this singlet recombination product is:
\begin{equation}
\varphi_{\text{singlet}}\left[\theta,\phi\right]=k\int_{0}^{\infty}d\tau\tr\left[P_{\text{singlet}}^{\left(A,B\right)}\rho_{\mathcal{SE}}\left(\tau|\theta,\phi \right)\right],\label{eq:quantum_yield}
\end{equation}
where the evolution of the state $\rho_{\mathcal{SE}}$ depends on the magnetic field directions $(\theta,\phi)$, and $P_{\text{singlet}}^{\left(A,B\right)}$ is a singlet state in primary radical pair $A$ and $B$:
\begin{equation}
P_{\text{singlet}}^{\left(A,B\right)}=\dfrac{1}{4}\left(\id-\boldsymbol{\hat{\sigma}}^{A}\cdot\boldsymbol{\hat{\sigma}}^{B}\right).
\end{equation}
Different field directions causes the recombination value $\varphi_{\text{singlet}}\left[\theta,\phi\right]$ to change: by detecting these changes, we can in turn infer the possible field direction. It is not known exactly how the difference in recombination yield could be perceived \emph{in vivo}. We will be considering primarily the \emph{absolute anisotropy} of the yield \citep{Kattnig2017}
\begin{align}
\Delta\left[\varphi_{\text{singlet}}\right] & =\max_{\left(\theta,\phi\right)}\varphi_{\text{singlet}}\left[\theta,\phi\right]-\min_{\left(\theta,\phi\right)}\varphi_{\text{singlet}}\left[\theta,\phi\right].
\end{align}

We find that the results for the \emph{relative anisotropy} tend to be qualitatively similar,
\begin{align}
RA\left[\varphi_{\text{singlet}}\right] & =\Delta\left[\varphi_{\text{singlet}}\right]/\left\langle \varphi_{\text{singlet}}\right\rangle ,
\end{align}
where the orientation-averaged singlet yield is
\begin{equation}
\left\langle \varphi_{\text{singlet}}\right\rangle =\dfrac{1}{4\pi}\int_{0}^{2\pi}d\theta\int_{0}^{\pi}d\phi\sin\left(\phi\right)\varphi_{\text{singlet}}\left[\theta,\phi\right].
\end{equation}

We are primarily interested in the effect of the \emph{initial} coherence and how it affects subsequent evolution. 
However it is also possible to quantify the contribution of coherence over the time period where recombination occurs. To do so we can consider \emph{coherence yields}, akin to the recombination yield and the entanglement yield of Ref. \citep{Fassioli2010}, and coherence-like yields in prior works on the avian magnetic sensor \citep{Cai2010}. The coherence yield is an integration of the coherence $C\left(t\right)$ over time, weighted by the integrand factor $\tr[P_{\text{singlet}}^{\left(A,B\right)}\rho_{\mathcal{S}}\left(\tau\right)]$ of the recombination yield:
\begin{equation}
C_{\text{yield}}\left[\theta,\phi\right]=\dfrac{k}{\varphi_{\text{singlet}}}\int_{0}^{\infty}d\tau  C\left(\tau\right)\tr[P_{\text{singlet}}^{\left(A,B\right)}\rho_{\mathcal{SE}}\left(\tau\right)].\label{eq:coherence_yield}
\end{equation}
The absolute anisotropy yield of the coherence \emph{etc.} can be defined analogously.

\subsection{Initial basis-independent coherence is necessary for magnetic sensing}

In this section, we show that initial basis-independent coherence is necessary for a nontrivial magnetic sensor. We proceed by first by considering the crucial Zeeman magnetic interaction and examining the coherence required for a nontrivial evolution. We then show how basis-independent system-environment coherence is necessary when the evolution of the system-environment is unital.

The Zeeman interactions between the magnetic sensor and the geomagnetic field, $\hat{H}_{\vec{\boldsymbol{B}}}$, is crucial for the performance of the magnetic sensor. A necessary condition for the system state $\rho_{\mathcal{S}} (0)$ to evolve nontrivially under the Zeeman interaction is that $[\rho_{\mathcal{S}}(0), \hat{H}_{\vec{\boldsymbol{B}}} (\theta,\phi)]\neq 0$ for at least one set of angles $(\theta,\phi)$. The following Lemma gives all the states are \emph{trivial} under the Zeeman interaction:

\begin{lem}
$\left[\rho_{\mathcal{S}}\left(0\right),\hat{H}_{\vec{\boldsymbol{B}}}(\theta,\phi)\right]=0$ for all angles $(\theta,\phi)$ if and only if
\begin{align}
\rho_{\mathcal{S}} =&\dfrac{\id_{ABC}}{8}+p_{AB}\vec{\sigma}^{A}\cdot\vec{\sigma}^{B}+p_{AC}\vec{\sigma}^{A}\cdot\vec{\sigma}^{C}+p_{BC}\vec{\sigma}^{B}\cdot\vec{\sigma}^{C} \nonumber \\
&+p_{ABC}(\sigma_{x}\otimes\sigma_{y}\otimes\sigma_{z}+\sigma_{z}\otimes\sigma_{x}\otimes\sigma_{y}  \nonumber\\
&\qquad\qquad  +\sigma_{y}\otimes\sigma_{z}\otimes\sigma_{x} - \sigma_{x}\otimes\sigma_{z}\otimes\sigma_{y} \nonumber\\
&\qquad\qquad  -\sigma_{y}\otimes\sigma_{x}\otimes\sigma_{z}  -\sigma_{z}\otimes\sigma_{y}\otimes\sigma_{x}).\label{eq:reduced_rho_S_HB_only}
\end{align}
\label{lemma:reduced_rho_S_HB_only}
\end{lem}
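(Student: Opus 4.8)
The plan is to first recast the commutation condition in a more symmetric form. Writing $\vec{\boldsymbol B}_{0}(\theta,\phi)=B_{0}\,\hat{\boldsymbol n}(\theta,\phi)$ with $\hat{\boldsymbol n}=(\cos\theta\sin\phi,\sin\theta\sin\phi,\cos\phi)$, and introducing the total spin $\boldsymbol{\hat J}\equiv\sum_{\alpha}\boldsymbol{\hat S}^{\alpha}$, one has $\hat{H}_{\vec{\boldsymbol B}}(\theta,\phi)=\gamma B_{0}\,\hat{\boldsymbol n}(\theta,\phi)\cdot\boldsymbol{\hat J}$. As $(\theta,\phi)$ sweeps the sphere, $\hat{\boldsymbol n}$ takes every value, so $[\rho_{\mathcal S},\hat{H}_{\vec{\boldsymbol B}}(\theta,\phi)]=0$ for all $(\theta,\phi)$ is equivalent to $[\rho_{\mathcal S},\hat J_{x}]=[\rho_{\mathcal S},\hat J_{y}]=[\rho_{\mathcal S},\hat J_{z}]=0$ (take $\hat{\boldsymbol n}=\hat x,\hat y,\hat z$), i.e. $\rho_{\mathcal S}$ is invariant under the collective $SU(2)$ rotation acting identically on $A$, $B$, $C$. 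With this reformulation, the ``if'' direction is just the check that every operator on the right-hand side of \eqref{eq:reduced_rho_S_HB_only} is a collective-rotation scalar: $\id_{ABC}$ manifestly, $\boldsymbol{\hat{\sigma}}^{\alpha}\cdot\boldsymbol{\hat{\sigma}}^{\beta}$ as a dot product of two vectors transforming identically, and the six-term combination as the scalar triple product $\boldsymbol{\hat{\sigma}}^{A}\cdot(\boldsymbol{\hat{\sigma}}^{B}\times\boldsymbol{\hat{\sigma}}^{C})$. The substance is therefore the ``only if'' direction.

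For ``only if'' I would expand $\rho_{\mathcal S}$ in the three-qubit Pauli basis, $\rho_{\mathcal S}=\tfrac18\sum_{a_1,a_2,a_3}c_{a_1a_2a_3}\,\sigma_{a_1}^{A}\sigma_{a_2}^{B}\sigma_{a_3}^{C}$ with $a_i\in\{0,x,y,z\}$ (where $\sigma_0=\id$) and $c_{000}=1$ from $\tr\rho_{\mathcal S}=1$. Under the adjoint action a single-qubit Pauli vector transforms by an $SO(3)$ rotation, and the collective action rotates all three by the same element, so collective invariance says exactly that, in each sector labelled by the subset $S\subseteq\{A,B,C\}$ of sites carrying a nontrivial Pauli, the coefficient array is an $SO(3)$-invariant tensor in $(\mathbb R^{3})^{\otimes|S|}$. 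Invariant theory of $SO(3)$ then does all the work: $|S|=0$ gives the $\id_{ABC}/8$ term; $|S|=1$ gives nothing, since there is no nonzero invariant vector, so all single-$\sigma$ terms vanish; for $|S|=2$ the only invariant rank-two tensor is $\delta_{jk}$, so each pair sector is forced proportional to $\boldsymbol{\hat{\sigma}}^{\alpha}\cdot\boldsymbol{\hat{\sigma}}^{\beta}$, giving the $p_{AB},p_{AC},p_{BC}$ terms; and for $|S|=3$ the space of invariant rank-three tensors is one-dimensional, spanned by $\epsilon_{jkl}$, so that sector is forced proportional to $\sum_{jkl}\epsilon_{jkl}\,\sigma_{j}^{A}\sigma_{k}^{B}\sigma_{l}^{C}$, which expands precisely to the antisymmetric six-term combination multiplying $p_{ABC}$ in \eqref{eq:reduced_rho_S_HB_only}. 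Finally, Hermiticity of $\rho_{\mathcal S}$ forces $p_{AB},p_{AC},p_{BC},p_{ABC}$ to be real, since each of these five operators is individually Hermitian; this completes the characterisation.

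An equivalent, more structural route is Schur--Weyl duality: the commutant of the collective $SU(2)$ on $(\mathbb C^{2})^{\otimes3}$ is spanned by the permutation operators $\{V_{\pi}\}_{\pi\in S_{3}}$, and because three qubits admit no totally antisymmetric state the antisymmetriser $\sum_{\pi}\mathrm{sgn}(\pi)V_{\pi}$ vanishes, so this commutant is five-dimensional; one then uses $V_{(\alpha\beta)}=\tfrac12(\id+\boldsymbol{\hat{\sigma}}^{\alpha}\cdot\boldsymbol{\hat{\sigma}}^{\beta})$ and $V_{(ABC)}-V_{(ACB)}=i\sum_{jkl}\epsilon_{jkl}\sigma_{j}^{A}\sigma_{k}^{B}\sigma_{l}^{C}$ (while $V_{(ABC)}+V_{(ACB)}$ is already a combination of $\id$ and the three dot products, by the antisymmetriser relation) to arrive at the same five-operator basis. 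Either way, the only genuinely fiddly step is the bookkeeping that converts the three-cycles / the $\epsilon_{jkl}$ contraction into the explicit six Pauli strings with the correct relative signs, so that the normalisations reproduce \eqref{eq:reduced_rho_S_HB_only} exactly; everything else is either the reduction in the first paragraph or standard $SO(3)$ invariant theory.
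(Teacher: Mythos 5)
Your proof is correct and takes a genuinely different route from the paper's. The paper's argument is a direct Pauli-basis computation: it writes out the full commutator, splits it into linearly independent pieces according to which sites carry nontrivial Paulis, and for each piece extracts constraints on the coefficients $p_{ijk}$ by substituting enough field directions. You instead first observe that commuting with $\hat{\boldsymbol n}\cdot\hat{\boldsymbol J}$ for all unit vectors $\hat{\boldsymbol n}$ is the same as commuting with $\hat J_x,\hat J_y,\hat J_z$, i.e.\ invariance under collective $SU(2)$ rotation; the ``only if'' direction then becomes $SO(3)$ invariant theory applied sector by sector (no invariant vector at rank one, only $\delta_{jk}$ at rank two, only $\epsilon_{jkl}$ at rank three), and the ``if'' direction is a one-line check that the five operators in the lemma are rotation scalars. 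Your approach buys conceptual transparency: it makes explicit that the right-hand side of the lemma is precisely the commutant of the collective $SU(2)$ action on $(\mathbb{C}^2)^{\otimes 3}$, and your Schur--Weyl remark explains why this space is exactly five-dimensional ($|S_3|=6$ minus the antisymmetriser, which vanishes for three qubits). The paper's approach, in exchange, is entirely elementary, requiring nothing beyond Pauli commutation relations and linear independence, without presupposing the classification of $SO(3)$-invariant tensors or the commutant theorem. Both routes reach the same answer; your identification of the three-body sector with $\sum_{jkl}\epsilon_{jkl}\sigma_j^A\otimes\sigma_k^B\otimes\sigma_l^C$ expands to the lemma's six-term combination with exactly the stated signs.
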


The proof is given in the Appendix \ref{sec:Lemmarho_HB}. It proceeeds by writing the general system state as $\rho_{\mathcal{S}} = \sum_{ijk=0,x,y,z} p_{ijk} \sigma_i^A \otimes \sigma_j^B \otimes \sigma_k^C$ and enforces the zero commutation condition for various field directions of $\hat{H}_{\vec{\boldsymbol{B}}}$ until the reduced for satisfies the commutation for all angles.

Lemma~\ref{lemma:reduced_rho_S_HB_only} says that initial basis-independent coherence is necessary but not sufficient, as there exists trivial states in the lemma that contain correlations between systems. However, note that the trivial states all have zero local coherence, \emph{i.e.} their reduced states are maximally mixed $\rho_A=\rho_B=\rho_C=\id/2$. Hence, having coherent local states is sufficient (but not necessary) for the system to evolve nontrivially under the geomagnetic field.

However, the Zeeman interaction alone is not sufficient for a magnetic sensor. Since $\hat{H}_{\vec{\boldsymbol{B}}} (\theta,\phi)$ commutes with the singlet state $P^{(A,B)}_{\text{singlet}}$ regardless of field angle, the integrand of the recombination yield from Eq.~(\ref{eq:quantum_yield}) takes the same value if we only have  the Zeeman interaction:
\begin{align}
 &\tr[P^{(A,B)}_{\text{singlet}} \rho_\mathcal{SE}(t)] \nonumber \\
&=e^{-kt} \tr[P^{(A,B)}_{\text{singlet}} e^{-i \hat{H}_{\vec{\boldsymbol{B}}} t}\rho_{\mathcal{SE}}(0)  e^{i \hat{H}_{\vec{\boldsymbol{B}}} t}] \label{eq:HB_trace_only_1}\\
&=e^{-kt} \tr[P^{(A,B)}_{\text{singlet}}\rho_{\mathcal{SE}}(0)],\label{eq:HB_trace_only_2}
\end{align}
rendering the same recombination yield value $\varphi_\text{singlet}$, regardless of field direction $ (\theta,\phi)$. System-system and/or system-environment interactions are needed to break the symmetry of the magnetic field acting on the sensor. In the next section, we will consider the detailed effects of such interactions. For now, in the last part of this section, we show that basis-independent system-coherence can be necessary for a sensor.

Any radical-pair magnetic sensor that has an evolution equation described by Eq.~(\ref{eq:evolution}), regardless of the details of the Hamiltonian $\hat{H}_{total}$, has \emph{unital evolution}, which means that initially maximally mixed states \emph{remain maximally mixed} (up to trace). Because of this, nontrivial evolution requires basis-independent coherence.

\begin{lem}
A radical-pair magnetic sensor described by a unital evolution that is nontrivial must have initial states that are not $\rho_\mathcal{SE}(0) \neq \id_{d_\mathcal{SE}}/d_\mathcal{SE}$, where $d_\mathcal{SE}$ is the dimensions of the combined system-environment.\label{lem:RPM_unital_coherence}
\end{lem}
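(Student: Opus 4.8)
\emph{Proof proposal.} The plan is to prove the contrapositive: assuming the initial state is maximally mixed, $\rho_{\mathcal{SE}}(0)=\id_{d_{\mathcal{SE}}}/d_{\mathcal{SE}}$, I would show that the singlet recombination yield $\varphi_{\text{singlet}}[\theta,\phi]$ from Eq.~(\ref{eq:quantum_yield}) does not depend on the field direction, so that $\Delta[\varphi_{\text{singlet}}]=RA[\varphi_{\text{singlet}}]=0$ and the sensor is trivial. It is worth stating this notion of ``trivial'' explicitly at the outset, since the Lemma is otherwise vacuous.

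First I would solve Eq.~(\ref{eq:evolution}) in closed form. Substituting $\sigma(t)=e^{kt}\rho_{\mathcal{SE}}(t)$ converts Eq.~(\ref{eq:evolution}) into the pure von~Neumann equation $\dot\sigma(t)=-i[\hat{H}_{total}(t),\sigma(t)]$, because the decay term $-k\rho_{\mathcal{SE}}$ is a uniform scalar multiple of the state and simply factors out; this remains true across the piecewise collisional schedule, where $\hat{H}_{total}(t)$ is merely switched between $\hat{H}_0+\hat{H}_{\vec{\boldsymbol{B}}}+\hat{V}_{\mathcal{SE}}$ and $\hat{H}_0+\hat{H}_{\vec{\boldsymbol{B}}}$. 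Hence $\sigma(t)=U(t)\rho_{\mathcal{SE}}(0)U(t)^{\dagger}$ with $U(t)$ the time-ordered unitary propagator generated by $\hat{H}_{total}(t)$, and $\rho_{\mathcal{SE}}(t)=e^{-kt}U(t)\rho_{\mathcal{SE}}(0)U(t)^{\dagger}$. This is the precise sense in which the evolution is unital up to trace: for the maximally mixed input one has $U(t)\id_{d_{\mathcal{SE}}}U(t)^{\dagger}=\id_{d_{\mathcal{SE}}}$ for every propagator $U(t)$, regardless of what $\hat{H}_{total}$ (and in particular the Zeeman term $\hat{H}_{\vec{\boldsymbol{B}}}[\theta,\phi]$) contains, so $\rho_{\mathcal{SE}}(t)=e^{-kt}\id_{d_{\mathcal{SE}}}/d_{\mathcal{SE}}$ for all $t$ and all $(\theta,\phi)$.

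Next I would feed this into the yield. Since $P^{(A,B)}_{\text{singlet}}=\tfrac14(\id-\vec{\sigma}^{A}\cdot\vec{\sigma}^{B})$ acts on the full space as the rank-one $AB$-singlet projector tensored with the identity on $C$ and the environment, its trace over the whole space is $d_{\mathcal{SE}}/4$, the $\vec{\sigma}^{A}\cdot\vec{\sigma}^{B}$ part being traceless. Therefore $\tr[P^{(A,B)}_{\text{singlet}}\rho_{\mathcal{SE}}(t)]=e^{-kt}/4$, and Eq.~(\ref{eq:quantum_yield}) gives $\varphi_{\text{singlet}}[\theta,\phi]=k\int_{0}^{\infty}\tfrac14 e^{-kt}\,dt=\tfrac14$ independently of $(\theta,\phi)$. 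Thus $\Delta[\varphi_{\text{singlet}}]=0$: a maximally mixed $\rho_{\mathcal{SE}}(0)$ forces a trivial sensor. Contraposing yields the Lemma, and since the maximally mixed state is the unique zero of the basis-independent coherence $C_{\id}$, a nontrivial unital radical-pair sensor must carry nonzero basis-independent system-environment coherence.

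The computation is routine; the only points deserving care are (i) verifying that the decay term factors out as a single $e^{-kt}$ so that the combined $\mathcal{SE}$ map is genuinely $e^{-kt}$ times a unitary conjugation even through the collisional iterations (so that ``unital up to trace'' is literally correct, not just asymptotically), and (ii) keeping the argument manifestly uniform in $(\theta,\phi)$, which it is precisely because $U(t)\id U(t)^{\dagger}=\id$ holds identically. I do not expect any substantive obstacle beyond making these two observations explicit.
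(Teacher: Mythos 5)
Your proposal is correct and takes essentially the same route as the paper: prove the contrapositive by noting that the maximally mixed $\mathcal{SE}$ state is a fixed point (up to the global $e^{-kt}$ decay factor) of every unitary propagator generated by $\hat{H}_{total}(t)$, so $\rho_{\mathcal{SE}}(t)=e^{-kt}\id/d_{\mathcal{SE}}$ for all times and all field angles, forcing $\tr[P^{(A,B)}_{\text{singlet}}\rho_{\mathcal{SE}}(t)]$ and hence $\varphi_{\text{singlet}}[\theta,\phi]$ to be $(\theta,\phi)$-independent. The paper's proof does the same thing but unwinds the argument through the explicit collisional recursion ($\mathbb{U}_t$, $\mathbb{W}_t$), whereas you package it as ``substitute $\sigma=e^{kt}\rho$, get a time-ordered unitary, and note $U\id U^\dagger=\id$''; your version is cleaner and also computes the constant $\varphi_{\text{singlet}}=\tfrac14$, which the paper does not bother to do since only angle-independence is needed. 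The one place your phrasing could mislead is the claim of ``genuinely $e^{-kt}$ times a unitary conjugation through the collisional iterations'': between iterations the environment qubits are refreshed, so the map on a fixed-dimensional $\mathcal{SE}$ space is not unitary conjugation; one either views $U(t)$ as acting on the ever-growing environment Hilbert space, or (as the paper does) notes that tracing out old environment spins and tensoring in fresh maximally mixed ones still maps $\id/d$ to $\id/d$. Either patch suffices and does not change the substance of your argument.
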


The proof is given in the \hyperref[sec:methods]{Methods}.

The initial system-environment state should be non-maximally mixed state for nontrivial performance given a unital model.  The necessity of initial coherence in the system-environment state is due to the unital nature of the evolution: if the initial state is maximally mixed, unital evolution keeps the maximally mixed state and does not generate any new coherence. Hence, coherence must be `injected' into the radical-pair at the beginning.

\subsection{The necessity and sufficiency of basis-independent system-environment coherence and non-trivial system-environment interactions}

Quantum biological systems are not closed: thus, their environmental interactions and influences are unavoidable and play a nontrivial role in the evolution of biological systems.

As found in the previous section, although the geomagnetic field interaction is crucial, there must also be other interactions in order to have a nontrivial magnetic sensor. These other interactions must introduce anisotropy, breaking the symmetry of the Zeeman interaction \cite{Hore2016}.   In past works, this is typically done with anisotropic hyperfine interactions with nuclear spins \cite{Hore2016}, or with a dipolar interaction with third scavenger \cite{Kattnig2017,Keens2018}. Here, we do this with a system-environment interaction $\hat{V}_{\mathcal{SE}}$.

The quantum magnetic sensor requires the system to affected differently under different field angles, and that furthermore these changes must be detectable in the recombination yield of Eq.~(\ref{eq:quantum_yield}), which recall has integrand $\tr[P_{\text{singlet}}^{\left(A,B\right)}\rho_{\mathcal{S}}\left(t\right)]$. As such, noncommutativity of the evolution Hamiltonians with the system state \emph{and} with the singlet state is necessary.

\begin{lem}
For nontrivial performance, it is necessary that the following hold:
\begin{align}
&[\rho_{\mathcal{SE}}\left(0\right),\hat{H}_0+\hat{V}_{\mathcal{SE}}+\hat{H}_{\vec{\boldsymbol{B}}}(\theta,\phi)]\neq0, \\
&[P_{\text{singlet}}^{\left(A,B\right)},\hat{H}_0+\hat{V}_{\mathcal{SE}}]\neq0, \label{eq:necessary_non_trivial_P_H_V}, \\
&[\hat{H}_{\vec{\boldsymbol{B}}} (\theta,\phi),\hat{H}_0+\hat{V}_{\mathcal{SE}}]\neq 0,
\end{align}
for at least one angle $(\theta,\phi)$, where $\hat{H}_0$ is some internal system evolution. \label{lemma:necessary}
\end{lem}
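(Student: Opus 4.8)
The plan is to prove the contrapositive for each of the three conditions: if any one of the three commutators vanishes for \emph{every} angle $(\theta,\phi)$, then the sensor is trivial, meaning $\varphi_{\text{singlet}}[\theta,\phi]$ is independent of $(\theta,\phi)$ (equivalently $\Delta[\varphi_{\text{singlet}}] = 0$). The common thread is the integrand $\tr[P_{\text{singlet}}^{(A,B)}\rho_{\mathcal{SE}}(t)]$ of Eq.~(\ref{eq:quantum_yield}): since $k$ and the $e^{-kt}$ prefactor of Eq.~(\ref{eq:HB_trace_only_1}) are $(\theta,\phi)$-independent, it suffices to show that this integrand, or its integral, does not depend on the field angles.

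First, for the condition $[\rho_{\mathcal{SE}}(0),\hat{H}_0+\hat{V}_{\mathcal{SE}}+\hat{H}_{\vec{\boldsymbol{B}}}(\theta,\phi)]=0$: I would argue that if $\rho_{\mathcal{SE}}(0)$ commutes with the full generator $\hat{H}_{total}(t)$ at all times (note $\hat{V}_{\mathcal{SE}}(t)$ is piecewise either $\hat{V}_{\mathcal{SE}}$ or $0$, so commuting with $\hat{H}_0+\hat{V}_{\mathcal{SE}}+\hat{H}_{\vec{\boldsymbol{B}}}$ implies commuting with $\hat{H}_0+\hat{H}_{\vec{\boldsymbol{B}}}$ as well by taking $J_{se}\to 0$ in the argument, or more carefully by noting the two pieces of the commutator with the fixed parts), then the commutator-plus-decay evolution of Eq.~(\ref{eq:evolution}) gives $\rho_{\mathcal{SE}}(t) = e^{-kt}\rho_{\mathcal{SE}}(0)$ for all $t$, independent of $(\theta,\phi)$, hence $\varphi_{\text{singlet}}$ is angle-independent. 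One subtlety: the hypothesis is that the \emph{sum} commutes for at least one angle; for the contrapositive I assume it is zero for all angles, and I should check that zero-for-all-angles of the sum forces the state to be stationary under each time-slice generator. Since $\hat{H}_{\vec{\boldsymbol{B}}}(\theta,\phi)$ sweeps out a family whose span (over angles, after subtracting) generates all of $\hat{H}_{\vec{\boldsymbol{B}}}$-directions, and $\hat{H}_0+\hat{V}_{\mathcal{SE}}$ is angle-fixed, vanishing of $[\rho,\hat{H}_0+\hat{V}_{\mathcal{SE}}] + \gamma B_0[\rho,\vec{n}(\theta,\phi)\cdot\sum\boldsymbol{\hat{S}}^\alpha]$ for all $\vec{n}$ forces both terms to vanish separately, recovering stationarity.

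Second, for $[P_{\text{singlet}}^{(A,B)},\hat{H}_0+\hat{V}_{\mathcal{SE}}]=0$: combined with the already-established fact from Eqs.~(\ref{eq:HB_trace_only_1})--(\ref{eq:HB_trace_only_2}) that $P_{\text{singlet}}^{(A,B)}$ commutes with $\hat{H}_{\vec{\boldsymbol{B}}}$ for all angles, we get $[P_{\text{singlet}}^{(A,B)},\hat{H}_{total}(t)]=0$ at all times. Then cycling $P_{\text{singlet}}^{(A,B)}$ through the unitary parts in $\tr[P_{\text{singlet}}^{(A,B)}\rho_{\mathcal{SE}}(t)]$ (using $\tr[P e^{-iHt}\rho e^{iHt}] = \tr[e^{iHt}Pe^{-iHt}\rho] = \tr[P\rho]$ when $[P,H]=0$) shows the integrand equals $e^{-kt}\tr[P_{\text{singlet}}^{(A,B)}\rho_{\mathcal{SE}}(0)]$, angle-independent. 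The piecewise-in-time structure is handled by applying this slice by slice. Third, for $[\hat{H}_{\vec{\boldsymbol{B}}}(\theta,\phi),\hat{H}_0+\hat{V}_{\mathcal{SE}}]=0$ for all angles: here $\hat{H}_{total}(t)$ splits into commuting pieces $\hat{H}_{\vec{\boldsymbol{B}}}$ and $\hat{H}_0+\hat{V}_{\mathcal{SE}}(t)$, so the propagator factorises as $e^{-i\hat{H}_{\vec{\boldsymbol{B}}}t}\,U_{\text{rest}}(t)$; since $P_{\text{singlet}}^{(A,B)}$ commutes with $e^{-i\hat{H}_{\vec{\boldsymbol{B}}}t}$ the Zeeman factor cancels out of the trace entirely, leaving $\tr[P_{\text{singlet}}^{(A,B)} U_{\text{rest}}(t)\rho_{\mathcal{SE}}(0)U_{\text{rest}}^\dagger(t)]e^{-kt}$ with no residual $(\theta,\phi)$ dependence.

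I expect the main obstacle to be the careful bookkeeping of the piecewise-constant, \emph{disjoint} nature of the collisional evolution (the recursive solution referenced in the Methods) — one must verify that "vanishing commutator at all times" genuinely propagates the angle-independence through every collision iteration and every $\tau_{se}/\tau_{ee}$ sub-interval, rather than just within a single unitary segment, and that the decay term $-k\rho_{\mathcal{SE}}$ (which is angle-independent and commutes with everything) poses no complication. A secondary subtlety is the logical structure: the lemma asserts necessity of the \emph{disjunction} "there exists an angle for which \dots", so each contrapositive must negate "for all angles" correctly, and I must make sure that in the first item the vanishing of a $\theta,\phi$-dependent sum for all angles really does decouple into separate stationarity statements, which relies on the linear independence of the Zeeman generators $\vec{n}(\theta,\phi)\cdot\sum_\alpha\boldsymbol{\hat{S}}^\alpha$ as $\vec{n}$ ranges over the sphere.
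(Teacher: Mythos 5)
Your proposal follows essentially the same route as the paper's proof: prove the contrapositive of each condition, using the fixed-point argument for the first two items (together with the fact that $[P_{\text{singlet}}^{(A,B)},\hat{H}_{\vec{\boldsymbol{B}}}]=0$ for all angles) and the factorisation $e^{-it(\hat{H}_{\vec{\boldsymbol{B}}}+\hat{H}_{\text{other}})}=e^{-it\hat{H}_{\vec{\boldsymbol{B}}}}e^{-it\hat{H}_{\text{other}}}$ plus cyclicity of the trace for the third.

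Two small remarks. First, the parenthetical about ``taking $J_{se}\to 0$'' is not a valid step: $J_{se}$ is a fixed model parameter, and the hypothesis that a commutator vanishes at one value of $J_{se}$ does not let you vary it. The angle-variation argument you give immediately afterwards --- differences $\hat{H}_{\vec{\boldsymbol{B}}}(\theta_1,\phi_1)-\hat{H}_{\vec{\boldsymbol{B}}}(\theta_2,\phi_2)$ span the full set of Zeeman generators, so $[\rho_{\mathcal{SE}}(0),\hat{H}_{\vec{\boldsymbol{B}}}(\theta,\phi)]=0$ for all $(\theta,\phi)$ and hence $[\rho_{\mathcal{SE}}(0),\hat{H}_0+\hat{V}_{\mathcal{SE}}]=0$ --- is the correct way to decouple, and is in fact more explicit than the paper's terse ``they become fixed points of the evolution''; it should simply replace the $J_{se}$ aside. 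Second, you correctly flag the piecewise (collision / free-evolution) structure as the chief bookkeeping concern, and indeed the decoupling you obtain gives $[\rho_{\mathcal{SE}}(0),\hat{H}_0+\hat{V}_{\mathcal{SE}}]=0$ rather than $[\rho_{\mathcal{SE}}(0),\hat{H}_0]=0$, which is what strict stationarity during the $\tau_{ee}$ sub-intervals (generated by $\hat{H}_0+\hat{H}_{\vec{\boldsymbol{B}}}$ alone) would demand. The paper's proof does not address this piecewise-generator subtlety at all --- it treats $\hat{H}_{total}$ as the generator throughout --- so you have surfaced a genuine, inherited lacuna rather than introduced one; your proof is not weaker than the paper's on this point.
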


The proof is in the \hyperref[sec:methods]{Methods}. Each statement is a necessary condition for the system-environment state to evolve nontrivially and a necessary condition for the singlet recombination to change.

For our particular model, initial basis-independent coherence is necessary and sufficient for non-commutativity of the various commutators:

\begin{proposition}
For the model we consider [Sec. \ref{sec:Quantum-magnetic-sensor} with parameters in Tab.~\ref{tab:Parameters-and-configurations}, i.e. with initial environment state $\rho_{\mathcal{E}}=\id_{\mathcal{E}}/d_{\mathcal{E}}$, $\hat{V}_{\mathcal{SE}}$ in Eq.~(\ref{eq:V_SE}) and $\hat{H}_0 = \hat{H}_{ex}$ in Eq.~(\ref{eq:Hex}), $J_{\alpha,\beta}=J_{ABC}$ for all $\alpha,\beta =A,B,C$],
the terms of Lemma \ref{lemma:necessary} hold if and only if the initial state on the system is not maximally mixed $\rho_{\mathcal{S}} \neq \id/d_\mathcal{S}$. \label{prop:necessary_conditions_hold}
\end{proposition}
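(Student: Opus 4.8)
The plan is to treat the three commutator conditions of Lemma~\ref{lemma:necessary} one at a time, exploiting that for the model of Table~\ref{tab:Parameters-and-configurations} the internal Hamiltonian is especially simple: with equal couplings, $\hat{H}_{0}=\hat{H}_{ex}=-\sum_{\alpha<\beta}J_{ABC}\left(\tfrac12+2\boldsymbol{\hat{S}}^{\alpha}\cdot\boldsymbol{\hat{S}}^{\beta}\right)$ equals $-J_{ABC}\,\boldsymbol{\hat{S}}_{\mathrm{tot}}^{2}$ up to an additive constant, where $\boldsymbol{\hat{S}}_{\mathrm{tot}}=\sum_{\alpha}\boldsymbol{\hat{S}}^{\alpha}$. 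Since $\boldsymbol{\hat{S}}_{\mathrm{tot}}^{2}$ is a Casimir, it commutes with $\hat{H}_{\vec{\boldsymbol{B}}}(\theta,\phi)$; and since $P_{\text{singlet}}^{(A,B)}$ is a function of $(\boldsymbol{\hat{S}}^{A}+\boldsymbol{\hat{S}}^{B})^{2}$, which commutes with $\boldsymbol{\hat{S}}_{\mathrm{tot}}^{2}$, it too commutes with $\boldsymbol{\hat{S}}_{\mathrm{tot}}^{2}$. Hence $[\hat{H}_0,P_{\text{singlet}}^{(A,B)}]=0$ and $[\hat{H}_0,\hat{H}_{\vec{\boldsymbol{B}}}(\theta,\phi)]=0$, so the second and third conditions of Lemma~\ref{lemma:necessary} reduce to $[\hat{V}_{\mathcal{SE}},P_{\text{singlet}}^{(A,B)}]\neq0$ and $[\hat{V}_{\mathcal{SE}},\hat{H}_{\vec{\boldsymbol{B}}}(\theta,\phi)]\neq0$ for some angle, statements that do not involve the initial state. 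I would verify both by direct computation: $\hat{V}_{\mathcal{SE}}$ of Eq.~(\ref{eq:V_SE}) carries the single-site operators $\sigma_z^{A},\sigma_z^{B}$, which fail to commute with $P_{\text{singlet}}^{(A,B)}$, and $\sigma_z^{\alpha}$, which fail to commute with $\hat{H}_{\vec{\boldsymbol{B}}}$ for any field not along $z$; in each case the surviving terms carry distinct environment operators $\sigma_x^{\mathcal{E}_{\alpha}}$ on the different environment spins, so no accidental cancellation occurs.

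It then remains to show that the first condition, $[\rho_{\mathcal{SE}}(0),\hat{H}_{0}+\hat{V}_{\mathcal{SE}}+\hat{H}_{\vec{\boldsymbol{B}}}(\theta,\phi)]\neq0$ for at least one angle, is equivalent to $\rho_{\mathcal{S}}\neq\id_{\mathcal{S}}/d_{\mathcal{S}}$. One direction is immediate: since the initial state is product and $\rho_{\mathcal{E}}=\id_{\mathcal{E}}/d_{\mathcal{E}}$, a maximally mixed $\rho_{\mathcal{S}}$ gives $\rho_{\mathcal{SE}}(0)=\id_{\mathcal{SE}}/d_{\mathcal{SE}}$, which commutes with everything, so the condition fails. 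For the converse I would argue in contrapositive form: assume the commutator vanishes for all $(\theta,\phi)$ and deduce $\rho_{\mathcal{S}}=\id_{ABC}/8$. Because $\hat{H}_{\vec{\boldsymbol{B}}}$ is linear in the field unit vector, which sweeps the whole sphere as $(\theta,\phi)$ vary, the commutator is an affine function of that vector vanishing identically, so its constant and linear parts vanish separately: $[\rho_{\mathcal{SE}}(0),\hat{H}_{ex}+\hat{V}_{\mathcal{SE}}]=0$ and $[\rho_{\mathcal{SE}}(0),\sum_{\alpha}S^{\alpha}_{j}]=0$ for $j=x,y,z$. The latter family, since $\sum_{\alpha}S^{\alpha}_{j}$ acts trivially on the environment and $\rho_{\mathcal{SE}}(0)=\rho_{\mathcal{S}}\otimes\id_{\mathcal{E}}/d_{\mathcal{E}}$, is equivalent to $[\rho_{\mathcal{S}},\hat{H}_{\vec{\boldsymbol{B}}}(\theta,\phi)]=0$ for all angles, so Lemma~\ref{lemma:reduced_rho_S_HB_only} applies and $\rho_{\mathcal{S}}$ has the form~(\ref{eq:reduced_rho_S_HB_only}).

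Finally I would squeeze the remaining constraint $[\rho_{\mathcal{SE}}(0),\hat{H}_{ex}+\hat{V}_{\mathcal{SE}}]=0$. Expanding $\hat{V}_{\mathcal{SE}}$ from Eq.~(\ref{eq:V_SE}) into a piece proportional to $\id_{\mathcal{E}}$ (acting on $\mathcal{S}$ as $\propto\sum_{\alpha}\sigma_z^{\alpha}$) plus a piece carrying $\sigma_x^{\mathcal{E}_{\alpha}}$ on the $\alpha$-th environment spin, and noting that $\hat{H}_{ex}\otimes\id_{\mathcal{E}}$ lies in the $\id_{\mathcal{E}}$ sector, vanishing of the $\sigma_x^{\mathcal{E}_{\alpha}}$ sector forces $[\rho_{\mathcal{S}},\sigma_z^{\alpha}]=0$ for each $\alpha=A,B,C$ (the $\id_{\mathcal{E}}$ sector then gives $[\rho_{\mathcal{S}},\hat{H}_{ex}]=0$, which is automatic from the previous step). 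I then combine the two constraints on $\rho_{\mathcal{S}}$: commuting with each $\sigma_z^{\alpha}$ restricts every single-site slot of $\rho_{\mathcal{S}}$ to $\{\id,\sigma_z\}$, whereas the nontrivial terms in~(\ref{eq:reduced_rho_S_HB_only})---the pairwise $\vec{\sigma}^{\alpha}\cdot\vec{\sigma}^{\beta}$ and the chirality $\vec{\sigma}^{A}\cdot(\vec{\sigma}^{B}\times\vec{\sigma}^{C})$---each carry a $\sigma_x$ or $\sigma_y$; computing $[\sigma_z^{A},\rho_{\mathcal{S}}]$ one finds its $\sigma_y^{A}$-component equals $p_{AB}\sigma_x^{B}+p_{AC}\sigma_x^{C}+p_{ABC}(\sigma_y^{B}\sigma_z^{C}-\sigma_z^{B}\sigma_y^{C})$, whose four Pauli strings are linearly independent, forcing $p_{AB}=p_{AC}=p_{ABC}=0$; cycling the same computation to $[\sigma_z^{B},\rho_{\mathcal{S}}]$ kills $p_{BC}$ as well, leaving $\rho_{\mathcal{S}}=\id_{ABC}/8$, as required.

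The main obstacle I anticipate is the bookkeeping in this last step: decomposing the commutator by environment operator sector so as to isolate $[\rho_{\mathcal{S}},\sigma_z^{\alpha}]=0$, and then the linear-independence argument that eliminates the coefficients $p_{AB},p_{AC},p_{BC},p_{ABC}$ in~(\ref{eq:reduced_rho_S_HB_only}). Each individual commutator is elementary, but tracking which Pauli strings land in which sector and confirming that nothing cancels is where care is needed; the affine-in-field-direction separation, the trivial direction, and the reduction of the state-independent conditions via $\hat{H}_{ex}\propto\boldsymbol{\hat{S}}_{\mathrm{tot}}^{2}$ are routine by comparison.
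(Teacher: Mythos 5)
Your proposal is correct, and it reaches the same conclusion as the paper's Proposition~\ref{prop:rho_S_H_our_model} plus Lemmas~\ref{lemma:necessary-VSE}, \ref{lem:P_AB_H_VSE}, and \ref{lem:H_B_H_0_V}, but by a genuinely different route. For the two state-independent commutator conditions you exploit the fact that, with equal couplings, $\hat{H}_{ex}$ is a Casimir $-J_{ABC}\boldsymbol{\hat{S}}_{\mathrm{tot}}^2$ up to a multiple of the identity, so it drops out of both $[P^{(A,B)}_{\text{singlet}},\cdot]$ and $[\hat{H}_{\vec{\boldsymbol{B}}},\cdot]$, leaving only $\hat{V}_{\mathcal{SE}}$ to be checked. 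The paper instead proves Lemmas~\ref{lemma:necessary-VSE}, \ref{lem:P_AB_H_VSE}, \ref{lem:H_B_H_0_V} for \emph{arbitrary} nontrivial $\hat{V}_{\mathcal{SE}}$ and arbitrary system-Hamiltonian $\hat{H}_0$, using linear independence between $\id_{\mathcal{E}}$-sector and $\sigma_{x,y,z}^{\mathcal{E}}$-sector Pauli strings so no cancellation can occur; that argument makes no appeal to the isotropy of $\hat{H}_{ex}$, so those two lemmas survive if the $J_{\alpha\beta}$ are unequal, whereas your Casimir shortcut would not. For the state-dependent condition, your affine-in-$\hat{n}$ separation (constant plus linear part of the commutator both vanish) is a clean way to split off $[\rho_{\mathcal{S}},\hat{H}_{\vec{\boldsymbol{B}}}]=0$ and hook directly into Lemma~\ref{lemma:reduced_rho_S_HB_only}; you then kill the remaining coefficients using $[\rho_{\mathcal{S}},\sigma_z^{\alpha}]=0$, which you extract from the $\sigma_x^{\mathcal{E}_{\alpha}}$ sector of $[\rho_{\mathcal{SE}}(0),\hat{V}_{\mathcal{SE}}]$. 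The paper's Appendix~\ref{sec:proof_of_rho_S_H_our_model} takes the reverse order: first uses the $\sigma_x^{\mathcal{E}_{\alpha}}$ sector to restrict $\rho_{\mathcal{S}}$ to the $\{\id,\sigma_z\}^{\otimes 3}$ subalgebra, then imposes $[\rho_{S|z},\hat{H}_{ex}]=0$, then evaluates $[\rho_{S|z}',\hat{H}_{\vec{\boldsymbol{B}}}]$ at a single transverse field direction, and never invokes Lemma~\ref{lemma:reduced_rho_S_HB_only} at all. Your re-use of Lemma~\ref{lemma:reduced_rho_S_HB_only} buys less bookkeeping; the paper's ordering is slightly more self-contained and its auxiliary lemmas are stated more generally. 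One minor remark: the $\sigma_y^{A}$-component you display for $[\sigma_z^A,\rho_{\mathcal{S}}]$ is correct up to an overall factor of $2i$, and the linear-independence argument goes through as you describe, with $[\sigma_z^B,\rho_{\mathcal{S}}]=0$ then eliminating $p_{BC}$.
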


\begin{proof}
The three components of Lemma \ref{lemma:necessary} are the commutators:
\begin{enumerate}
    \item $[\rho_{\mathcal{S}}\left(0\right)\otimes \id_{\mathcal{E}} / d_{\mathcal{E}},\hat{H}_0+\hat{V}_{\mathcal{SE}}+\hat{H}_{\vec{\boldsymbol{B}}}(\theta,\phi)]\neq0$ : see Proposition \ref{prop:rho_S_H_our_model}, which we prove for the particular model we consider.
    \item $[P_{\text{singlet}}^{\left(A,B\right)},\hat{H}_0+\hat{V}_{\mathcal{SE}}]\neq0$ : see Lemma~\ref{lemma:necessary-VSE} and Lemma \ref{lem:P_AB_H_VSE} where we prove that this holds whenever $\hat{V}_{\mathcal{SE}}$ has any nontrivial system-environment interactions.
    \item $[\hat{H}_{\vec{\boldsymbol{B}}} (\theta,\phi),\hat{H}_0+\hat{V}_{\mathcal{SE}}]\neq 0$ : see Lemma \ref{lem:H_B_H_0_V} which also holds for general nontrivial system-environment interactions $\hat{V}_{\mathcal{SE}}$ provided that the magnetic field does not interact with the environment.
\end{enumerate}
\end{proof}

Note that we consider a system configuration \emph{without} the dipolar interaction. It shows that the dipolar interaction is \emph{not} required for the performance of the magnetic sensor, unlike \citep{Keens2018}\textemdash this is due to the environment taking up the role of breaking symmetry.

Furthermore, note that only item (1) of the proof of Proposition~\ref{prop:necessary_conditions_hold} uses the particular form for the various interactions Hamiltonians. While it remains open whether this holds other radical-pair mechanism models, past work suggests that it will \cite{Cai2013}.

Prompted by our numerical results later in the paper (Fig.~\ref{fig:Plot-of-initial-vs-yield-random}), we propose the following conjecture, which we believe is likely to be true, but we have not proven:

\begin{conjecture}
The necessary conditions in Lemma \ref{lemma:necessary} are also sufficient for nontrivial sensing.\label{conj:sufficient}
\end{conjecture}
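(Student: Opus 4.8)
\section*{Proof proposal for Conjecture~\ref{conj:sufficient}}

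The plan is to work contrapositively within the regime carved out by Lemma~\ref{lemma:necessary}: assume all three commutators are nonzero for some field angle $(\theta,\phi)$ and show that the anisotropy $\Delta[\varphi_{\text{singlet}}]$ is strictly positive. The natural first step is to express the recombination yield explicitly in terms of the resolvent of the (non-Hermitian) generator in Eq.~(\ref{eq:evolution}). Writing the full generator over one collisional period as a superoperator $\mathcal{L}_{\theta,\phi}$ and integrating Eq.~(\ref{eq:quantum_yield}), the yield becomes a rational function of the angle-dependent matrix elements; the quantity $\partial_\theta \varphi_{\text{singlet}}$ or $\partial_\phi \varphi_{\text{singlet}}$ at a generic angle is then an analytic function of $(\theta,\phi)$ that vanishes identically if and only if the yield is constant. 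So it suffices to show: if the yield is constant in $(\theta,\phi)$, then at least one of the three commutators in Lemma~\ref{lemma:necessary} vanishes for \emph{every} angle.

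The second step is to translate ``yield constant in all directions'' into an algebraic statement. Because the only $(\theta,\phi)$-dependence enters through $\hat{H}_{\vec{\boldsymbol{B}}}[\theta,\phi] = \gamma \vec{\boldsymbol{B}}_0 \cdot \sum_\alpha \boldsymbol{\hat{S}}^\alpha$, which is linear in the three components of $\vec{\boldsymbol{B}}_0$, one can differentiate the yield with respect to each Cartesian component of the field at $B_0 = 0$ (or expand in powers of $B_0$) and demand that all such derivatives vanish. Each order in this expansion produces nested commutators of the total spin operators $\hat{S}_a^{\text{tot}} = \sum_\alpha S_a^\alpha$ (for $a=x,y,z$) against $\hat{H}_0 + \hat{V}_{\mathcal{SE}}$, $\rho_{\mathcal{SE}}(0)$, and $P_{\text{singlet}}^{(A,B)}$, traced together. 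The goal is a combinatorial/representation-theoretic argument that this infinite family of trace identities forces one of the three ``seed'' commutators to be zero. Here I would exploit the structure already established: $P_{\text{singlet}}^{(A,B)}$ and $\hat{H}_{\vec{\boldsymbol{B}}}$ both commute (Eq.~(\ref{eq:HB_trace_only_2})), so the rotational symmetry is broken purely by how $\hat{H}_0 + \hat{V}_{\mathcal{SE}}$ fails to commute with $\hat{S}_a^{\text{tot}}$; combined with the decoherence term $-k\rho$, which is rotationally invariant, the angle dependence must survive unless an intertwining relation collapses it.

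The main obstacle I anticipate is precisely this last combinatorial step: ruling out ``accidental'' cancellations in which none of the three commutators vanishes yet the infinite tower of trace identities still holds, making the yield constant. Necessity (Lemma~\ref{lemma:necessary}) is easy because a single vanishing commutator kills the anisotropy; sufficiency is hard because one must show no conspiracy among generically-nonzero terms produces a flat yield. A promising route is to first prove it perturbatively to low order in $B_0$ (the physically relevant regime, $\gamma B_0 / d_{AB} \approx 0.2$): show that the leading nonvanishing term in the $B_0$-expansion of $\Delta[\varphi_{\text{singlet}}]$ is a sum of squared moduli of trace functionals, each of which is nonzero under the Lemma~\ref{lemma:necessary} hypotheses, and that these cannot all cancel because they are ``orthogonal'' contributions (e.g.\ the $\theta$- and $\phi$-harmonics of different order are independent). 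If a fully general argument proves intractable, the fallback is to restrict to the explicit model of Proposition~\ref{prop:necessary_conditions_hold}, where $\hat{H}_0 = \hat{H}_{ex}$ and $\hat{V}_{\mathcal{SE}}$ are fixed, reducing the claim to a finite linear-algebra computation over the $64\times 64$ (system $\otimes$ one-collision environment) space supplemented by an induction over collisions — which is also why the statement is flagged as a conjecture rather than a theorem.
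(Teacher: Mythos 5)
The paper gives no proof of this statement: it is explicitly presented as an unproven conjecture, motivated only by the numerical evidence in Fig.~\ref{fig:Plot-of-initial-vs-yield-random}. So there is no paper proof to compare against. Your proposal is a reasonable high-level strategy, and you correctly pinpoint the unresolved step — ruling out ``accidental cancellations'' in which all three commutators of Lemma~\ref{lemma:necessary} are nonzero and yet the yield is flat — but you do not close it, and you are candid that this is exactly why the statement is a conjecture. That self-assessment is accurate.

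Two technical wrinkles in the plan as written are worth flagging. First, the generator is piecewise-constant and the solution (see Methods) is an unbounded concatenation of unitary conjugations on ever-larger system$\otimes$environment spaces; the yield $\varphi_{\text{singlet}}$ is an infinite sum of time-integrals over these, not obviously a rational function of finitely many matrix elements, so the ``rational function / analyticity'' framing needs justification (e.g.\ truncating after finitely many collisions, tracing out past ancillas, and controlling the tail via the $e^{-kt}$ factor). Second, expanding at $B_0=0$ probes field-\emph{strength} rather than field-\emph{direction} dependence, while $\Delta[\varphi_{\text{singlet}}]$ is the max minus min over a sphere of fixed $|\vec{\boldsymbol{B}}_0|=B_0$; one can pass between the two by analyticity in the three Cartesian components, but it is precisely at the level of the resulting tower of nested-commutator trace identities that the conspiracy-ruling-out argument becomes combinatorially hard. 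Your fallback — reduce to the explicit model of Proposition~\ref{prop:necessary_conditions_hold} and do finite linear algebra plus induction over collisions — is plausible but also not executed, and still requires the truncation argument just mentioned. In short, the proposal diagnoses the difficulty correctly without resolving it, which is consistent with the paper leaving this as a conjecture.
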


Combining Conjecture \ref{conj:sufficient} with  Proposition \ref{prop:necessary_conditions_hold}, the following may also be true, though once again unproven:

\begin{conjecture}
For the model we consider  (Sec. \ref{sec:Quantum-magnetic-sensor} with parameters in Tab.~\ref{tab:Parameters-and-configurations}), a sufficient condition for nontrivial sensing, is $\rho_{\mathcal{S}}\left(0\right)\neq\id_{\mathcal{S}}/d_{\mathcal{S}}$.
\end{conjecture}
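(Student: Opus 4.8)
The statement follows by chaining two ingredients already in the paper: Proposition~\ref{prop:necessary_conditions_hold} shows that, for this model, $\rho_{\mathcal{S}}(0)\neq\id_{\mathcal{S}}/d_{\mathcal{S}}$ is \emph{equivalent} to the three non-commutativity conditions of Lemma~\ref{lemma:necessary}, so the only missing piece is Conjecture~\ref{conj:sufficient} specialised to this model, namely that those three conditions force the singlet yield $\varphi_{\text{singlet}}[\theta,\phi]=k\int_{0}^{\infty}d\tau\,\tr[P_{\text{singlet}}^{(A,B)}\rho_{\mathcal{SE}}(\tau|\theta,\phi)]$ to depend on the field direction. The plan is therefore to prove Conjecture~\ref{conj:sufficient} for the model of Sec.~\ref{sec:Quantum-magnetic-sensor}.

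First I would exploit the rotational symmetry of the fixed operators to locate where the field direction can possibly enter. With isotropic exchange, $\hat{H}_{0}=\hat{H}_{ex}=-J_{ABC}(\boldsymbol{\hat{S}}^{\mathrm{tot}})^{2}+\text{const}$ with $\boldsymbol{\hat{S}}^{\mathrm{tot}}=\sum_{\alpha}\boldsymbol{\hat{S}}^{\alpha}$, so $[\hat{H}_{\vec{\boldsymbol{B}}},\hat{H}_{0}]=0$; moreover $P_{\text{singlet}}^{(A,B)}$ is invariant under total-spin rotations, so $[\hat{H}_{\vec{\boldsymbol{B}}},P_{\text{singlet}}^{(A,B)}]=0$. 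Hence the only field-bearing commutator is $[\hat{H}_{\vec{\boldsymbol{B}}},\hat{V}_{\mathcal{SE}}]$, which (since $\hat{V}_{\mathcal{SE}}$ of Eq.~(\ref{eq:V_SE}) is built only from $\sigma_{z}^{\alpha}$ on the system) vanishes precisely when $\vec{\boldsymbol{B}}_{0}\parallel z$; repeating the argument of Eqs.~(\ref{eq:HB_trace_only_1})--(\ref{eq:HB_trace_only_2}) with $\hat{H}_{\vec{\boldsymbol{B}}}$ now commuting with the \emph{whole} $\hat{H}_{total}(t)$ shows $\varphi_{\text{singlet}}$ at any field along $z$ equals the field-free value $\varphi_{0}$. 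So the target becomes: $\rho_{\mathcal{S}}(0)\neq\id/8\ \Rightarrow\ \varphi_{\text{singlet}}[\theta,\phi]\neq\varphi_{0}$ for some transverse direction, say $\hat{x}$. I would also use that the environment is refreshed every iteration, so the system-only state evolves by repeated application of a single field-dependent completely-positive trace-preserving (CPTP) map on $8\times8$ matrices, which makes the whole problem finite-dimensional and the yield an absolutely convergent series in the iteration index (geometric factor $e^{-kT}$).

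Next I would argue analyticity and reduce to a single Taylor coefficient. Because $k>0$ controls the tail of the $\tau$-integral and everything inside each collision interval is entire in $\tau$ and analytic in the field, $\varphi_{\text{singlet}}$ is real-analytic in the field strength and direction; a real-analytic function on a connected domain is either constant (trivial sensing) or non-constant with $\Delta[\varphi_{\text{singlet}}]>0$. It thus suffices, for each $\rho_{\mathcal{S}}(0)\neq\id/8$, to exhibit one nonvanishing coefficient in the expansion $\varphi_{\text{singlet}}=\varphi_{0}+\sum_{n\ge1}c_{n}(\theta,\phi)(\gamma B_{0})^{n}$, where each $c_{n}$ is an explicit finite functional of $\rho_{\mathcal{S}}(0)$ and the fixed operators $\hat{H}_{0},\hat{V}_{\mathcal{SE}},\boldsymbol{\hat{S}}^{\mathrm{tot}},P_{\text{singlet}}^{(A,B)},k$ obtained from the Methods recursion (equivalently, from the $\mathrm{ad}$-expansion of the integrand; the first-collision part of the linear term is proportional to $\tr[[\hat{S}_{x}^{\mathrm{tot}},[\hat{V}_{\mathcal{SE}},P_{\text{singlet}}^{(A,B)}]]\,\rho_{\mathcal{S}}(0)\otimes(\id/2)^{\otimes3}]$). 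Writing $\rho_{\mathcal{S}}(0)$ in the Bloch form $\sum_{ijk}p_{ijk}\sigma_{i}^{A}\otimes\sigma_{j}^{B}\otimes\sigma_{k}^{C}$ and using the classification of states killed by $[\cdot,\hat{H}_{\vec{\boldsymbol{B}}}]$ from Lemma~\ref{lemma:reduced_rho_S_HB_only}, one would then show that these functionals cannot all vanish unless every $p_{ijk}$ with $(i,j,k)\neq(0,0,0)$ is zero, i.e.\ $\rho_{\mathcal{S}}(0)=\id/8$.

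The main obstacle is exactly this last step: bridging ``the relevant commutators are nonzero'' --- which is all that Proposition~\ref{prop:necessary_conditions_hold} delivers --- to ``the time-integrated, direction-dependent yield is non-constant.'' There are three distinct cancellation risks to exclude: (i) a nested commutator can have vanishing trace against $\rho_{\mathcal{SE}}(0)$ even though the commutators do not vanish; (ii) a direction-dependent Taylor coefficient of the integrand can still integrate (over $\tau$ and sum over iterations) to a direction-independent number; and (iii) a uniform argument over the whole set $\rho_{\mathcal{S}}(0)\neq\id/8$ must permit different states to reveal their sensitivity at different orders $n$ or along different directions. A cleaner but harder route would skip the order-by-order expansion: prove that $\varphi_{\text{singlet}}\equiv\varphi_{0}$ forces $\rho_{\mathcal{S}}(0)-\id/8$ to be Hilbert--Schmidt orthogonal to the entire operator subspace obtained by applying all finite strings of $[\hat{H}_{0},\cdot]$, $[\hat{V}_{\mathcal{SE}},\cdot]$ and $[\hat{S}_{x,y,z}^{\mathrm{tot}},\cdot]$ to $P_{\text{singlet}}^{(A,B)}$ (after tracing out the environment), and then establish that this subspace is all traceless operators on the system. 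That is a controllability/irreducibility statement whose rigorous proof --- honestly accounting for the discrete collisional structure and the environment partial trace --- is the genuine difficulty, and is why the authors leave it as a conjecture, with Fig.~\ref{fig:Plot-of-initial-vs-yield-random} as the supporting numerical evidence.
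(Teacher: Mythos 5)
The statement you are asked to prove is, in the paper, an explicitly unproven \emph{conjecture}: the authors give no proof, only the numerical evidence of Fig.~\ref{fig:Plot-of-initial-vs-yield-random}. Your write-up does not close it either --- you are upfront that what you supply is a plan plus a diagnosis of why the plan is hard --- so there is nothing in the paper to ``match'' against. What you \emph{do} contribute is correct and goes somewhat beyond the paper. Your symmetry reduction is sound: since $\hat{H}_{ex}=-J_{ABC}(\boldsymbol{\hat{S}}^{\mathrm{tot}})^{2}+\text{const}$ and $P_{\text{singlet}}^{(A,B)}$ are both $SU(2)$-invariant, only $\left[\hat{H}_{\vec{\boldsymbol{B}}},\hat{V}_{\mathcal{SE}}\right]$ carries field dependence, and because $\hat{V}_{\mathcal{SE}}$ is diagonal in $\sigma_{z}$ on the system and acts trivially under the field on the environment, $\hat{H}_{\vec{\boldsymbol{B}}}$ along $\hat{z}$ commutes with the \emph{entire} $\hat{H}_{total}(t)$ in both collision sub-intervals and factors out of $\tr[P_{\text{singlet}}^{(A,B)}\rho_{\mathcal{SE}}(t)]$, so $\varphi_{\text{singlet}}$ at any $z$-aligned field equals the field-free value $\varphi_{0}$. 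That is a genuine observation that sharpens the target to showing a transverse field produces $\varphi_{\text{singlet}}\neq\varphi_{0}$. The analyticity-in-$\gamma B_{0}$ reduction is also plausible given $k>0$ and the piecewise-entire recursion of the Methods section.

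The gap, however, is exactly where you say it is, and it is real: Proposition~\ref{prop:necessary_conditions_hold} only delivers non-vanishing commutators, and non-vanishing commutators do not by themselves preclude the integrated, partial-traced, direction-dependent contributions from cancelling. Your three ``cancellation risks'' (vanishing trace against $\rho_{\mathcal{SE}}(0)$ despite nonzero nested commutators; direction dependence washing out after the $\tau$-integral and the geometric sum over collisions; different initial states revealing sensitivity only at different orders or directions) are precisely the obstacles, and neither your order-by-order route nor the alternative controllability/irreducibility route (density of the operator subspace generated by $[\hat{H}_{0},\cdot]$, $[\hat{V}_{\mathcal{SE}},\cdot]$, $[\hat{S}_{x,y,z}^{\mathrm{tot}},\cdot]$ acting on $P_{\text{singlet}}^{(A,B)}$, after tracing out $\mathcal{E}$) is carried through. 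So this is a correctly-identified open problem, not a proof; the paper does not resolve it and neither do you. As a minor caution, the candidate leading functional $\tr\left[\left[\hat{S}_{x}^{\mathrm{tot}},\left[\hat{V}_{\mathcal{SE}},P_{\text{singlet}}^{(A,B)}\right]\right]\rho_{\mathcal{S}}(0)\otimes(\id/2)^{\otimes3}\right]$ you propose is only the first of a hierarchy and you should not expect a single order to suffice uniformly over all $\rho_{\mathcal{S}}(0)\neq\id_{\mathcal{S}}/d_{\mathcal{S}}$; any completed proof along your lines will have to control all orders at once, which is why the irreducibility route is likely the cleaner one to pursue.
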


\begin{figure*}
\begin{centering}
\includegraphics[width=1\textwidth]{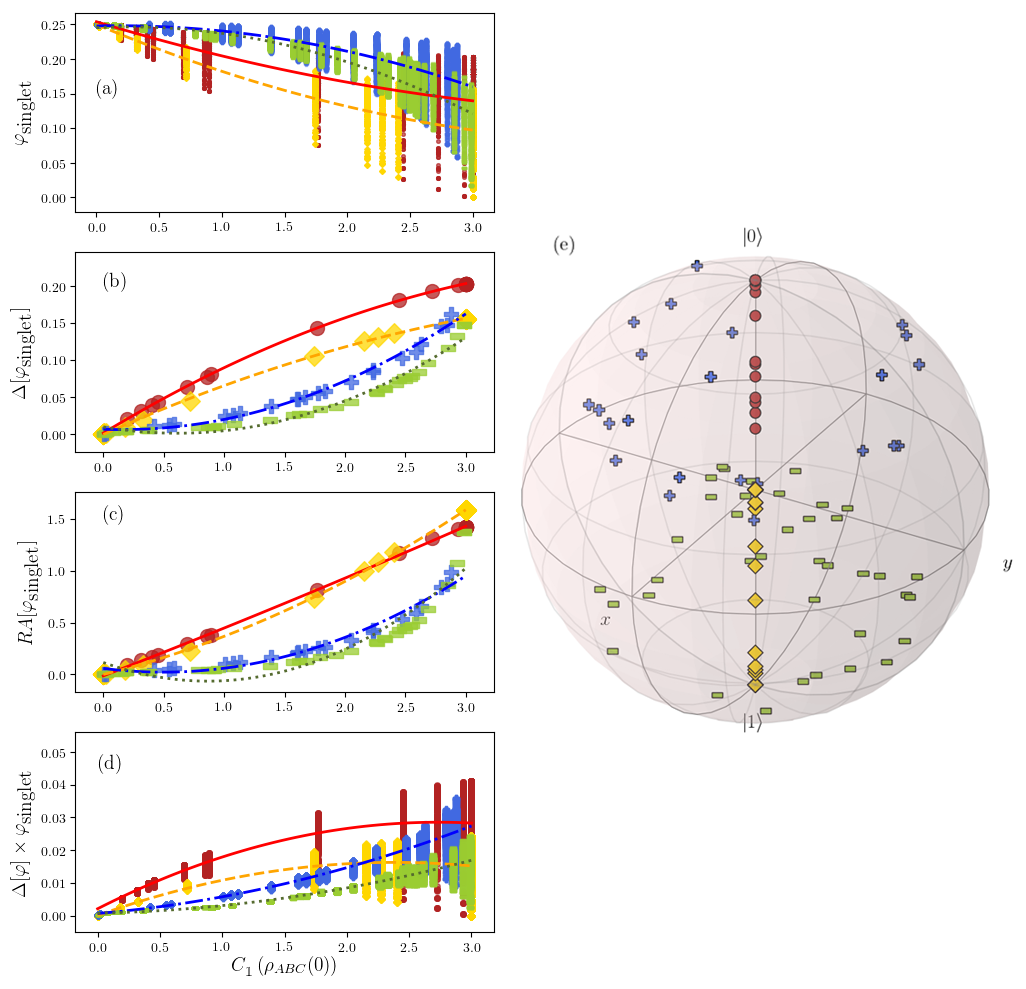}
\par\end{centering}
\caption{ Plots of initial system state's basis-independent coherence $C_{\id}(\rho_{ABC}(0))$ with the (a) recombination yield $\varphi_{\text{singlet}}$, (b) absolute anisotropy of the yield $\Delta[\varphi_{\text{singlet}}]$, (c) relative anisotropy of the yield $RA[\varphi_{\text{singlet}}] = \Delta[\varphi_{\text{singlet}}]/\langle\varphi_{\text{singlet}} \rangle$, and (d) product of the absolute anisotropy of the yield with the yield $\Delta[\varphi_{\text{singlet}}] \times \varphi_{\text{singlet}}$. (e) The initial state $\rho_0$ is randomly chosen (150 different random initial states), and displayed here on the Bloch sphere. The complete initial system state is $\rho_{ABC} (0) =\rho_0\otimes \rho_0 \otimes \id^C /2$
Model parameters are listed in Table \ref{tab:Parameters-and-configurations}.\label{fig:Plot-of-initial-vs-yield-random}}
\end{figure*}

Fig.~\ref{fig:Plot-of-initial-vs-yield-random} supports our conjecture that nonzero initial coherence is sufficient for a nontrivial sensor. In Fig. \ref{fig:Plot-of-initial-vs-yield-random}, we see that there is a general trend between greater initial coherence and large range of recombination yield $\varphi_{\text{singlet}}$ which is integral to a sensitive magnetic sensor.  There are four sub-trends that we identify with different colours and shapes: initial states diagonal in $\sigma_z$ versus those not, and initial states with positive versus negative $z$ coordinate (given by $\tr[\sigma_z \rho_A(0)]$). The extra dependence on $z$ is due to basis-dependence of the system-environment interaction $\hat{V}_\mathcal{SE}$, while the sign dependence is due to the sign of the interaction. This can be seen clearly in the Bloch sphere depicting the initial state on $\rho_A(0)$.

In a realistic situation, we must balance both the variation in the recombination yield and the actual value of the yield---\emph{i.e.}, while a large varition is ideal, the value of the yield itself needs to be large enough to be feasibly detectable. One measure of this balance is the objective function $\Delta[\varphi_\text{singlet}]\cdot \varphi_\text{singlet}$ \emph{i.e.} the product of the anisotropy with the value of recombination. If this function is large, then we have both good variation and a large overall recombination values. We plot this in  Fig. \ref{fig:Plot-of-initial-vs-yield-random}(d). We see that for states lying on the $z$-axis in particular, that there is an initial rise before a partial plateau starting from $C_\id (\rho_{ABC}(0))\approx 1.5$, which suggests that after a certain point, increasing more initial coherence leads to diminishing returns in the performance of the sensor.

In Appendix \ref{sec:coherence_yield_results}, we also numerically examine the relationship between recombination yield and coherence yield (the contribution of coherence to the yield over time). We find that they tend to be inversely linearly correlated, similar to the inverse correlation between the initial coherence and the recombination yield. Thus, while the increase in basis-independent coherence during the  recombination process suppresses the yield amount, a certain amount of initial coherence is needed to simply have a wide \emph{variation} in the recombination yield. This tradeoff is exemplified back in Fig.~\ref{fig:Plot-of-initial-vs-yield-random}(d).

While these sections have been focused on the initial coherence in the system state, we also examine the situation where the system is initially maximally mixed and the environment has coherence in Appendix~\ref{app:SWAP}, which is then a resource in the performance of that magnetic sensor.

Beyond coherence, the system and environment can also share correlations, which we can use to formally characterise how the system operates at the quantum-to-classical boundary. In Appendix \ref{sec:quantum_Darwinism}, we consider the quantum correlations between the system and environment, through the lens of Quantum Darwinism \cite{Zurek2009,Le2019,Horodecki2015}. Quantum Darwinism which extends the notion of nonclassicality to \emph{nonobjectivity}, where macroscopic objects are objective. We find that the sensor can operate with imperfect classical correlations between system and environment and very little quantum discord. This lack of perfect classical correlations corresponds to a non-objective situation \cite{Le2019, Horodecki2015}. Note that perfect correlations at all times would imply that the environment is constantly, perfectly, monitoring the system\textemdash which is a quantum zeno-like situation. The very low quantum discord also confirms that quantum coherence is the key quantum effect involved in the performance of the avian-inspired quantum magnetic sensor.

\section{Discussion \label{sec:discussion}}

The radical pair mechanism is the strongest contender for explaining high magnetic sensitivity in birds. To understand how that sensitivity is possible, it is crucial for us to investigate the role of quantum coherence in the radical pair mechanism. Here, we provide a concrete demonstration that coherence indeed aids magnetic sensing based on radical pairs.
In the radical-pair model considered, we show that initial system-environment coherence---in \emph{any} basis---is necessary for a nontrivial sensor, and our results suggest that this is sufficient when combined with a symmetry-breaking interaction.

In an avian magnetic sensor, there is no (immediately) preferred basis. The Earth's external magnetic field can be aligned in any direction relative to the magnetic sensor, which we assume is mobile. As the relative magnetic field direction varies, the instantaneous basis changes: hence a state that was once incoherent now appears coherent. As such, consideration of basis-\emph{independent} coherence become important. 

In Ref. \citep{Tomasi2020}, the authors examine the bases of coherence and noise in light-harvesting processes. There, they noted that coherence has an effect if noise processes act on that coherence---which means that basis in which a state has coherence is \emph{different} from the basis in which noise processes act. We see a similar situation in the magnetic sensor: as the external magnetic field can have any relative direction, it can act in any basis, hence coherence in any basis becomes useful.

Our results hence suggest that basis-independent coherence is a quantum resource for the magnetic sensor. In realistic fluctuating biological environments, the initial system radicals are unlikely to be in the maximally mixed state due to fluctuations. Furthermore, the chemical process producing radical pairs in cryptochrome produce singlet and triplet states which are non-maximally mixed states \cite{Hore2016}. Hence, the relevance of coherence for the magnetic sensor performance is robust: the sensor continues to perform under noise, so long as it does not cause the system to become maximally mixed.

Our results also confirm the statements made by the previous work of \citet{Cai2013}, i.e. that global system-environment coherence is crucial for the sensitivity of a magnetic compass. From the perspective of our work, we can understand the results of  \citet{Cai2013} as follows: they start with an initially coherent singlet state, and randomly generate the dynamical Hamiltonians in their unital model. Different random Hamiltonians have different anisotropies and coherence-generation ability, thus giving rise to differing levels of coherence over time and different performance. Note that \citet{Cai2013} consider coherence that is dependent on the basis of their system-environment Hamiltonian and assume an environment that may not have a strong bio-inspired basis. In contrast, our results do not depend on basis, our model has the ability to represent a wide range of environments, and the majority of our results are independent of the specific interaction details.

Note that our model is \emph{unital} (and trace-decreasing). Under this evolution, the maximally mixed state is preserved (up to a real factor) $\Lambda(\id/d) = \lambda \id /d$. The supplemental of \citet{Hogben2012}'s paper contains a sensor that has maximally mixed initial state $\id/4$ on the radicals, $\id/2$ on the single nuclear spin. There, the sensor is nontrivial whenever the singlet and triplet reaction rates are unequal $k_S \neq k_T$---this leads to non-unital dynamics in the master equation, which therefore shifts the resource from the initial states with coherence to the coherence-generating dynamics.

Previously, \citet{Cai2010} had suggested that \emph{entanglement} rather than coherence in the initial state played the greater role, as their optimal magnetic sensitivity for incoherent states (with respect to the standard computational basis) was the same as the magnetic sensitivity for separable initial states. Meanwhile initial entangled system states preformed better than both. However, their definition of magnetic field sensitivity is focused on sensitivity to changing \emph{field strength} rather than changing field angle. Furthermore,  \citet{Cai2010} also showed that incoherent and separable initial states give a nontrivial magnetic sensor\textemdash and this is consistent with our results that initial basis-independent coherence is required. 
We have located the source for magnetic field direction sensitivity: \emph{basis-independent} coherence. Our work then shows that avian magnetic sensors fundamentally require coherence, in any basis, whether it be in the initial state, or during the evolution itself.

The measures of basis-independent coherence correspond to measures of purity, which is more easily experimentally accessible than quantum correlations such as discord and entanglement: there exist schemes that do not require quantum state tomography, especially for the linear purity, $\tr[\rho^2]$ \cite{Nakazato2012, Filip2002, Ekert2002, Islam2015, Pichler2013,Lee2011}.

In general, we show that the magnetic sensor benefiting from coherence needs to have a symmetry breaking interaction---which can be a nuclear hyperfine interaction, or dipolar interaction with scavenger radicals, or a general system-environment interaction as we have shown here. \citet{Keens2018} have argued that the dipolar exchange was necessary, provided there were no hyperfine interactions and no environment. We found that in the presence of environmental interactions, which are ubiquitous in biological systems, the dipolar exchange is not necessary for a magnetic sensor. The added environment helps break the symmetry, analogous to the symmetry breaking performed by the dipolar exchange and hyperfine interactions in previous works. Our work therefore emphasises how integral it is to consider the environment when examining candidates for avian magnetosensing. 

Our results also open the door for further investigations of the role of the environment in the sensitivity of the magnetic sensor. For instance, while we have considered the scenario of a memoryless environment, collisional models can be also used to investigate non-Markovian environments \citep{Pezzutto2019,Cakmak2017,Ciccarello2013a} which could enhance performance further. Coherence in the environment itself could also act as a resource to further enhance performance of bio-inspired sensors  \citep{Rodrigues2019}.

Overall we have shown strong connections between symmetry breaking dynamics, system-environment interactions and initial system-environment coherence, and the subsequent performance of a quantum magnetic sensor based on the radical pair mechanism. Radical pairs play an pivotal role in a variety of biological functions, including, for instance, enzyme catalysis and associated cell metabolic process enabled by coenzyme $B_{12}$ which has displayed interesting magnetic effects \cite{Hughes2019}. Our work therefore provides an important framework to understand the advantages that quantum coherence may provide in a variety of biological and chemical process involving radical pair dynamics \cite{Hore2020}.


\section{Methods \label{sec:methods}}

\subsection{System-environment evolution \label{app:System-environment-evolution}}

The system-environment evolution is composed of a series of interactions between the system and fresh environment states, and the solution is recursive.

Firstly, define the eigenstates and eigenenergies of the total Hamiltonian for times $t\in\left[nT,nT+\tau_{se}\right)$ and $t\in\left[nT+\tau_{se},\left(n+1\right)T\right)$ respectively:
\begin{align}
\hat{H}_{total}\left(t\right) & =\hat{H}_{0}+\hat{H}_{\vec{\boldsymbol{B}}}\left[\theta,\phi\right]+\hat{V}_{\mathcal{SE}}\left(t\right) \nonumber \\
&=\sum_{i}\omega_{i}\ket{\phi_i}\bra{\phi_i}_{\mathcal{SE}_{n}}\\
\hat{H}_{total}^{\prime}\left(t\right) & =\hat{H}_{0}+\hat{H}_{\vec{\boldsymbol{B}}}\left[\theta,\phi\right]=\sum_{\alpha}\xi_{\alpha}\ket{\alpha}\bra{\alpha}_{\mathcal{S}},
\end{align}
leading to the unital operators $\mathbb{U}_{\tau}$ and $\mathbb{W}_{\tau}$:
\begin{align}
\mathbb{U}_{\tau} & \equiv\exp\left[-i\tau\hat{H}_{total}\right]=\sum_{j}e^{-i\omega_j\tau}\ket{\phi_j}\bra{\phi_j},\\
\mathbb{W}_{\tau} & \equiv\exp\left[-i\tau\hat{H}_{total}^{\prime}\right]=\sum_{\alpha}e^{-i\xi_{\alpha}\tau}\ket{\alpha}\bra{\alpha}.
\end{align}

For time $t\in\left[0,\tau_{se}\right)$, the system-environment interacts with Hamiltonian $\hat{H}_{total}=\hat{H}_{\mathcal{SE}_{0}}=\hat{H}_{0}+\hat{H}_{\vec{\boldsymbol{B}}}+\hat{V}_{SE_0}$ on the system environment $\rho_{\mathcal{SE}_{0}}$:
\begin{equation}
\dfrac{d\rho_{\mathcal{SE}_{0}}\left(t\right)}{dt}=-i\left[\hat{H}_{\mathcal{SE}_{0}},\rho_{\mathcal{SE}_{0}}\left(t\right)\right]-k\rho_{\mathcal{SE}_{0}}\left(t\right).
\end{equation}
Then, each element $\ket{\phi_i}\bra{\phi_i}$ of the joint state evolves as 
\begin{align}
 & \dfrac{d}{dt}\braket{\phi_i|\rho_{\mathcal{SE}_{0}}\left(t\right)|\phi_j}\nonumber \\
 & =\bra{\phi_i}\left(-i\left[\hat{H}_{\mathcal{SE}_{0}},\rho_{\mathcal{SE}_{0}}\left(t\right)\right]-k\rho_{\mathcal{SE}\left(0\right)}\left(t\right)\right)\ket{\phi_j}\\
 & =\left(-i\Delta\omega_{i,j}-k\right)\braket{\phi_i|\rho_{\mathcal{SE}_{0}}\left(t\right)|\phi_j},
\end{align}
where $\Delta\omega_{i,j}=\omega_{i}-\omega_{j}$. Integrating over time $t\in\left[0,\tau_{se}\right)$ gives
\begin{align}
\braket{\phi_i|\rho_{\mathcal{SE}_{0}}\left(t\right)|\phi_j} & =e^{-kt}e^{-i\omega_{i}t}e^{i\omega_{j}t}\braket{\phi_i|\rho_{\mathcal{SE}_{0}}\left(0\right)|\phi_j}.
\end{align}
By using the unital operator $\mathbb{U}_{\tau}=\sum_{i}e^{-i\omega_i \tau}\ket{\phi_i}\bra{\phi_i}$, we can write this as: $\rho_{\mathcal{SE}_{0}}\left(t\right) = e^{-kt}\mathbb{U}_{t}\rho_{\mathcal{SE}_{0}}\left(0\right)\mathbb{U}_{t}^{\dagger}$.

As similar process occurs for times $\left[\tau_{se},\tau_{se}+\tau_{ee}\right)$
and so forth.

This leads to the recursive solution: starting from the initial state $\rho_{\mathcal{SE}_{0}}\left(0\right)=\rho_{\mathcal{S}}\left(0\right)\otimes\varrho_{E}$, for times $t\in\left[nT,nT+\tau_{se}\right)$ and $t\in\left[nT+\tau_{se},\left(n+1\right)T\right)$ respectively:
\begin{align}
\rho_{\mathcal{SE}_{n}}\left(t\right) & =e^{-k\left(t-nT\right)}\mathbb{U}_{t-nT}\rho_{\mathcal{SE}_{n}}\left(nT\right)\mathbb{U}_{t-nT}^{\dagger},\\
\rho_{\mathcal{SE}_{n}}\left(t\right) & =e^{-k\left(t-\left(nT+\tau_{se}\right)\right)}\mathbb{W}_{t-\left(nT+\tau_{se}\right)}\nonumber \\
 & \qquad\times\rho_{\mathcal{SE}_{n}}\left(nT+\tau_{se}\right)\mathbb{W}_{t-\left(nT+\tau_{se}\right)}^{\dagger}.
\end{align}

\subsection{Proof of Lemma \ref{lem:RPM_unital_coherence}}

\begin{proof}
To see this, we consider the converse, i.e., if $\rho_{\mathcal{S}}\left(0\right)=\left(\id_{2}/2\right)^{\otimes3}$ and where $\rho_{\mathcal{E}_{n,}}\left(0\right)=\left(\id_{2}/2\right)^{\otimes3}$ (as all subsystems are two-level). Then, the state evolves as:
\begin{align}
\rho_{\mathcal{SE}_{0}}\left(t\right) & =e^{-kt}\mathbb{U}_{t}\rho_{\mathcal{SE}_{0}}\left(0\right)\mathbb{U}_{t}^{\dagger}\\
 & =e^{-kt}\left(\dfrac{\id_{2}}{2}\right)^{\otimes 6},\quad t\in\left[0,\tau_{se}\right)\\
\rho_{\mathcal{SE}_{0}}\left(t\right) & =e^{-k\left(t-\tau_{se}\right)}\mathbb{W}_{t-\tau_{se}}\rho_{\mathcal{SE}_{0}}\left(\tau_{se}\right)\mathbb{W}_{t-\tau_{se}}^{\dagger}\\
 & =e^{-k\left(t-\tau_{se}\right)}e^{-k\tau_{se}}\left(\dfrac{\id_{2}}{2}\right)^{\otimes 6},\quad t\in\left[\tau_{se},T\right),
\end{align}
and so forth, where $\mathbb{U}$ and $\mathbb{W}$ are the unitary operator components of the system-environment evolution given in the \hyperref[sec:methods]{Methods} subsection above for general Hamiltonians. The system state is always:
\begin{equation}
\rho_{\mathcal{S}}\left(t\right)=e^{-kt}\left(\dfrac{\id_{2}}{2}\right)^{\otimes 3},
\end{equation}
for all field angles $\left(\theta,\phi\right)$. Hence, the singlet yield is the same for all field angles, leading to zero performance $\Delta [\varphi_\text{singlet}] = 0$.
\end{proof}


\subsection{Proof of Lemma \ref{lemma:necessary}}

\begin{proof}
Recall that the complete system-environment evolution is given in Eq.~(\ref{eq:evolution}), with unitary component given by $\hat{H}_{total} = \hat{H}_0+\hat{V}_{\mathcal{SE}}+\hat{H}_{\vec{\boldsymbol{B}}}(\theta,\phi)$. 

If either $P_{\text{singlet}}^{\left(A,B\right)}$ or $\rho_{\mathcal{SE}}\left(0\right)$ commute with the unitary evolution, then they become fixed points of the evolution \citep{Arias2002}, in which case $\tr[P_{\text{singlet}}^{\left(A,B\right)}\rho_{\mathcal{SE}}\left(t\right)]=e^{-kt}\tr[P_{\text{singlet}}^{\left(A,B\right)}\rho_{\mathcal{SE}}\left(0\right)]$ is the same for all angles [where the $e^{-kt}$ decay factor comes from the recombination in Eq.~(\ref{eq:evolution})]. Hence non-commutativity for both are necessary, i.e. $[P_{\text{singlet}}^{\left(A,B\right)},\hat{H}_{total}]\neq0$ and $[\rho_{\mathcal{SE}}\left(0\right),\hat{H}_{total}]\neq0$. Since $[P_{\text{singlet}}^{\left(A,B\right)},\hat{H}_{\vec{\boldsymbol{B}}}]=0$ commutes for all angles (due to the SU(2) symmetry in $P_{\text{singlet}}^{\left(A,B\right)}$ \citep{Keens2018}), this reduces to the form Eq.~(\ref{eq:necessary_non_trivial_P_H_V}) given in the lemma.

Finally, if the Zeeman interaction $\hat{H}_{\vec{\boldsymbol{B}}} (\theta,\phi)$ commutes with all the other interactions, $\hat{H}_\text{other} = \hat{H}_{0}+\hat{V}_{\mathcal{SE}}$ (which would mean that the other interactions have local SU(2) symmetry on the system), then $e^{-it(\hat{H}_{\vec{\boldsymbol{B}}} + \hat{H}_\text{other})}=e^{-it\hat{H}_{\vec{\boldsymbol{B}}}} e^{-it \hat{H}_\text{other}}$, and the effect of Zeeman interaction is unseen due to the singlet's symmetry:
\begin{align}
& \tr [P_{\text{singlet}}^{\left(A,B\right)} e^{-it(\hat{H}_{\vec{\boldsymbol{B}}} + \hat{H}_\text{other})} \rho_{\mathcal{SE}}\left(0\right) e^{it(\hat{H}_{\vec{\boldsymbol{B}}} + \hat{H}_\text{other})}] \nonumber \\
& = \tr \left[e^{it\hat{H}_{\vec{\boldsymbol{B}}}} P_{\text{singlet}}^{\left(A,B\right)} e^{-it\hat{H}_{\vec{\boldsymbol{B}}}} e^{-it \hat{H}_\text{other}} \rho_{\mathcal{SE}}\left(0\right) e^{it \hat{H}_\text{other}}\right] \\
& =  \tr \left[P_{\text{singlet}}^{\left(A,B\right)} e^{-it \hat{H}_\text{other}} \rho_{\mathcal{SE}}\left(0\right) e^{it \hat{H}_\text{other}}\right]
\end{align}
and subsequently the recombination yield would have no dependence on field angle.
\end{proof}

\subsection{Results in support for the necessity and sufficiency of basis-independent system-environment coherence and non-trivial system-environment interactions \label{methods:theorem_statements} }

In this subsection, we state the propositions and lemmas that support Proposition \ref{prop:necessary_conditions_hold}.

\begin{proposition}
For the model we consider (Sec. \ref{sec:Quantum-magnetic-sensor} with parameters in Tab.~\ref{tab:Parameters-and-configurations}), $[\rho_{\mathcal{S}}\otimes\id_{\mathcal{E}}/d_{\mathcal{E}},\hat{H}_{ex}+\hat{V}_{\mathcal{SE}}+\hat{H}_{\vec{\boldsymbol{B}}}\left(\theta,\phi\right)]=0$ if and only if $\rho_{\mathcal{S}}=\id_{\mathcal{S}}/d_{\mathcal{S}}$. 
\label{prop:rho_S_H_our_model}
\end{proposition}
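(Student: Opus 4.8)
The plan is to prove the two implications separately, the forward one being immediate. If $\rho_{\mathcal{S}}=\id_{\mathcal{S}}/d_{\mathcal{S}}$ then $\rho_{\mathcal{S}}\otimes\id_{\mathcal{E}}/d_{\mathcal{E}}=\id_{\mathcal{SE}}/d_{\mathcal{SE}}$ is proportional to the identity and so commutes with $\hat{H}_{ex}+\hat{V}_{\mathcal{SE}}+\hat{H}_{\vec{\boldsymbol{B}}}(\theta,\phi)$ for every $(\theta,\phi)$.

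For the converse I would read the hypothesis as: the commutator vanishes for \emph{all} field directions — this is the statement actually needed in Proposition~\ref{prop:necessary_conditions_hold}, and it is genuinely the right reading, since for a single direction there are non-trivial solutions (e.g. for the field along $z$ the rank-one projector onto the fully spin-up product state commutes with $\hat{H}_{ex}$, $\hat{H}_{\vec{\boldsymbol{B}}}$, and $\hat{V}_{\mathcal{SE}}$ simultaneously). Expand $\rho_{\mathcal{S}}=\sum_{i,j,k\in\{0,x,y,z\}}p_{ijk}\,\sigma_i^A\otimes\sigma_j^B\otimes\sigma_k^C$ and write out $[\rho_{\mathcal{S}}\otimes\id_{\mathcal{E}}/d_{\mathcal{E}},\,\hat{H}_{ex}+\hat{V}_{\mathcal{SE}}+\hat{H}_{\vec{\boldsymbol{B}}}]$. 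The key observation is that $\hat{H}_{ex}$ and $\hat{H}_{\vec{\boldsymbol{B}}}$ act as the identity on $\mathcal{E}$, whereas $\hat{V}_{\mathcal{SE}}=\tfrac{J_{se}}{2}\sum_\alpha(\id-\sigma_z)^\alpha\otimes(\id-\sigma_x^{\mathcal{E}_{n,\alpha}})$ contains only $\id_{\mathcal{E}}$ and the single-qubit operators $\sigma_x^{\mathcal{E}_{n,\alpha}}$ on the environment, each of which commutes with $\id_{\mathcal{E}}/d_{\mathcal{E}}$. Regrouping, the commutator equals $\big([\rho_{\mathcal{S}},\hat{H}_{ex}+\hat{H}_{\vec{\boldsymbol{B}}}]-\tfrac{J_{se}}{2}\sum_\alpha[\rho_{\mathcal{S}},\sigma_z^\alpha]\big)\otimes\id_{\mathcal{E}}+\tfrac{J_{se}}{2}\sum_\alpha[\rho_{\mathcal{S}},\sigma_z^\alpha]\otimes\sigma_x^{\mathcal{E}_{n,\alpha}}$, up to the overall factor $1/d_{\mathcal{E}}$. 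Since $\{\id_{\mathcal{E}},\sigma_x^{\mathcal{E}_{n,A}},\sigma_x^{\mathcal{E}_{n,B}},\sigma_x^{\mathcal{E}_{n,C}}\}$ are linearly independent operators on $\mathcal{E}$, requiring the commutator to vanish forces each system-operator coefficient to vanish: the $\sigma_x^{\mathcal{E}_{n,\alpha}}$ coefficients give $[\rho_{\mathcal{S}},\sigma_z^\alpha]=0$ for $\alpha=A,B,C$, and the $\id_{\mathcal{E}}$ coefficient then collapses to $[\rho_{\mathcal{S}},\hat{H}_{ex}+\hat{H}_{\vec{\boldsymbol{B}}}(\theta,\phi)]=0$ for all $(\theta,\phi)$.

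The conditions $[\rho_{\mathcal{S}},\sigma_z^\alpha]=0$ for each $\alpha$ are equivalent, reading off the Pauli expansion, to $p_{ijk}=0$ whenever any of $i,j,k$ lies in $\{x,y\}$, i.e. $\rho_{\mathcal{S}}$ is diagonal in the computational basis. I would then feed two directions into $[\rho_{\mathcal{S}},\hat{H}_{ex}+\hat{H}_{\vec{\boldsymbol{B}}}(\theta,\phi)]=0$: for $\phi=0$ the field is along $z$, so $\hat{H}_{\vec{\boldsymbol{B}}}\propto\sum_\alpha\sigma_z^\alpha$ commutes with the now-diagonal $\rho_{\mathcal{S}}$ and the condition reduces to $[\rho_{\mathcal{S}},\hat{H}_{ex}]=0$; subtracting this from the condition at $(\theta,\phi)=(0,\pi/2)$ (field along $x$) eliminates $\hat{H}_{ex}$ and leaves $[\rho_{\mathcal{S}},\sum_\alpha\sigma_x^\alpha]=0$. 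Finally, writing the diagonal $\rho_{\mathcal{S}}=\sum_{i,j,k\in\{0,z\}}p_{ijk}\sigma_i^A\sigma_j^B\sigma_k^C$ and using $[\sigma_z,\sigma_x]=2i\sigma_y$, $[\id,\sigma_x]=0$, the commutator $[\rho_{\mathcal{S}},\sum_\alpha\sigma_x^\alpha]$ is a sum of Pauli strings each carrying exactly one factor $\sigma_y$ — in the slot where a $\sigma_z$ got replaced — with the remaining factors in $\{\id,\sigma_z\}$; these strings are pairwise distinct, so every coefficient must vanish, yielding $p_{ijk}=0$ unless $i=j=k=0$. Hence $\rho_{\mathcal{S}}=p_{000}\id_{\mathcal{S}}$, and $\tr\rho_{\mathcal{S}}=1$ fixes $p_{000}=1/d_{\mathcal{S}}$.

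I do not expect a real obstacle: once the commutator is organised along the environment operator basis the rest is elementary Pauli-algebra bookkeeping. The two points that need care are (i) the separation into the $\id_{\mathcal{E}}$ and $\sigma_x^{\mathcal{E}_{n,\alpha}}$ sectors and the appeal to their linear independence, which is what extracts $[\rho_{\mathcal{S}},\sigma_z^\alpha]=0$, and (ii) the use of \emph{all} angles rather than one — it is the transverse component of $\hat{H}_{\vec{\boldsymbol{B}}}$, together with the diagonality already forced by $\hat{V}_{\mathcal{SE}}$, that finally squeezes $\rho_{\mathcal{S}}$ down to the maximally mixed state. A small bonus of taking the $z$--$x$ difference is that $\hat{H}_{ex}$ drops out entirely, so the conclusion does not in fact depend on the values, or equality, of the couplings $J_{\alpha\beta}$.
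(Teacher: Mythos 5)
Your proof is correct, and it takes a genuinely cleaner route to the converse than the paper's. Both proofs use the same opening move: organise $[\rho_{\mathcal{S}}\otimes\id_{\mathcal{E}}/d_{\mathcal{E}},\hat{V}_{\mathcal{SE}}]$ along the linearly independent environment operators $\sigma_x^{\mathcal{E}_\alpha}$, read off $[\rho_{\mathcal{S}},\sigma_z^\alpha]=0$ for each $\alpha$, and conclude that $\rho_{\mathcal{S}}$ must live in $\operatorname{span}\{\id,\sigma_z\}^{\otimes 3}$. From there the paths diverge. The paper then substitutes the $z$-directed field to isolate $[\rho_{S|z},\hat{H}_{ex}]=0$, expands this commutator term by term to derive the constraints $p_{0zk}=p_{z0k}$, $p_{0jz}=p_{zj0}$, $p_{i0z}=p_{iz0}$, further reduces $\rho_{\mathcal{S}}$, and only then brings in the $x$-directed field to kill the remaining parameters. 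You instead subtract the $z$-field condition from the $x$-field condition; this eliminates $\hat{H}_{ex}$ before you ever have to expand it, leaving $[\rho_{S|z},\sum_\alpha\sigma_x^\alpha]=0$, which alone forces $p_{ijk}=0$ unless $i=j=k=0$ because each surviving Pauli string carries a unique $\sigma_y$. This buys two things. First, it is shorter: the paper's explicit $\hat{H}_{ex}$ computation is made redundant by the stronger constraint that follows from the transverse field, so the intermediate form $\rho'_{S|z}$ never needs to appear. Second, as you note, the conclusion becomes manifestly independent of the couplings $J_{\alpha\beta}$ (including the isotropy assumption $J_{\alpha\beta}=J_{ABC}$ in Table~\ref{tab:Parameters-and-configurations}), which the paper's route partially obscures. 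Your preliminary remark that the hypothesis must be read as holding for all $(\theta,\phi)$, together with the rank-one counterexample $\ket{\uparrow\uparrow\uparrow}\bra{\uparrow\uparrow\uparrow}$ for a fixed $z$-field, is also a useful clarification that the paper leaves implicit.
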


The proof is given in Appendix \ref{sec:proof_of_rho_S_H_our_model}. Briefly, it proceeds by writing $\rho_{\mathcal{S}} = \sum_{ijk=0,x,y,z} p_{ijk} \sigma_i^A \otimes \sigma_j^B \otimes \sigma_k^C$ and looks at the coefficients of independent terms $\sigma_i^A \otimes \sigma_j^B \otimes \sigma_k^B \otimes \sigma_l^{\mathcal{E}_A} \otimes \sigma_m^{\mathcal{E}_B} \otimes \sigma_n^{\mathcal{E}_C}$. Our particular system-environment interaction mimics decoherence on the system and imposes the $\{\id,\sigma_z\}$ basis on the system and the form of $\hat{H}_{ex}$ and $\hat{H}_{\vec{\boldsymbol{B}}}$ reduce the required system down to the maximally mixed state.

\begin{lem}
If $\hat{V}_{\mathcal{SE}}=\hat{V}_{A\mathcal{E}_{A}}+\hat{V}_{B\mathcal{E}_{B}}+\hat{V}_{C\mathcal{E}_{C}}\neq0$ has nontrivial interaction between system $\mathcal{S}$ and environment $\mathcal{E}$ (i.e. contains two-body interaction terms), then $[P^{(A,B)}_{\text{singlet}},\hat{V}_{\mathcal{SE}}]\neq0$.\label{lemma:necessary-VSE}
\end{lem}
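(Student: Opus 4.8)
The plan is to reduce the statement to a single‑qubit fact: the only one‑site operators on radical $A$ (equivalently $B$) that commute with $P^{(A,B)}_{\text{singlet}}$ are scalar multiples of the identity. First I would observe that $\hat V_{C\mathcal E_C}$ acts only on the tensor factors $C$ and $\mathcal E_C$, which are disjoint from $\{A,B\}$, so $[P^{(A,B)}_{\text{singlet}},\hat V_{C\mathcal E_C}]=0$ automatically; hence a genuine two‑body system–environment term relevant to the claim must act on radical $A$ or $B$, and by the $A\leftrightarrow B$ symmetry of $P^{(A,B)}_{\text{singlet}}$ I may assume $\hat V_{A\mathcal E_A}$ contains one. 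Expanding in Pauli strings, $\hat V_{A\mathcal E_A}=\sum_{\mu,\nu\in\{0,x,y,z\}}c_{\mu\nu}\,\sigma_\mu^A\otimes\sigma_\nu^{\mathcal E_A}$ with $\sigma_0=\id$, the hypothesis ``contains a genuine two‑body term'' means $c_{\mu_0\nu_0}\neq0$ for some $\mu_0,\nu_0\neq0$.

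Next I would isolate the part of the commutator that acts nontrivially on $\mathcal E_A$. Both $[P^{(A,B)}_{\text{singlet}},\hat V_{B\mathcal E_B}]$ and $[P^{(A,B)}_{\text{singlet}},\hat V_{C\mathcal E_C}]$ act as $\id_{\mathcal E_A}$ on the factor $\mathcal E_A$, so the projection onto the $\mathcal E_A$‑traceless sector, $X\mapsto X-\tfrac12\,\id_{\mathcal E_A}\otimes\tr_{\mathcal E_A}X$, annihilates both and sends $[P^{(A,B)}_{\text{singlet}},\hat V_{\mathcal{SE}}]$ to
\[
\sum_{\nu\neq0}\big[P^{(A,B)}_{\text{singlet}},\,M_\nu^A\big]\otimes\sigma_\nu^{\mathcal E_A},\qquad M_\nu:=\sum_{\mu}c_{\mu\nu}\,\sigma_\mu^A .
\]
Since $\{\sigma_\nu^{\mathcal E_A}\}_{\nu\neq0}$ are linearly independent, $[P^{(A,B)}_{\text{singlet}},\hat V_{\mathcal{SE}}]=0$ would force $[P^{(A,B)}_{\text{singlet}},M_\nu^A]=0$ for every $\nu\neq0$; in particular for $\nu=\nu_0$, where $M_{\nu_0}$ has $\sigma_{\mu_0}^A$‑component $c_{\mu_0\nu_0}\neq0$ with $\mu_0\neq0$, so $M_{\nu_0}$ is not a multiple of the identity.

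The remaining ingredient is the single‑qubit computation. Writing a one‑site operator as $M=a_0\id+\vec a\cdot\vec\sigma$ on $A$ and using $P^{(A,B)}_{\text{singlet}}=\tfrac14(\id-\vec\sigma^A\cdot\vec\sigma^B)$ together with $[\sigma_i,\sigma_j]=2i\epsilon_{ijk}\sigma_k$, one finds that $[P^{(A,B)}_{\text{singlet}},M^A\otimes\id^B]$ equals a nonzero constant times $\sum_k\big((\vec a\times\hat e_k)\cdot\vec\sigma^A\big)\otimes\sigma_k^B$, where $\hat e_k$ is the unit vector along axis $k$. Because the three $\sigma_k^B$ are linearly independent, this vanishes iff $\vec a\times\hat e_k=0$ for $k=x,y,z$, i.e. iff $\vec a=0$. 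Thus a non‑scalar one‑site operator never commutes with $P^{(A,B)}_{\text{singlet}}$, contradicting $[P^{(A,B)}_{\text{singlet}},M_{\nu_0}^A]=0$, and the lemma follows.

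The argument is short and I do not expect a serious obstacle; the two points that need care are (i) the bookkeeping that prevents the $\mathcal E_A$‑supported part of $[P^{(A,B)}_{\text{singlet}},\hat V_{A\mathcal E_A}]$ from being cancelled by the $B$‑ and $C$‑contributions — which is exactly what the traceless‑sector projection handles — and (ii) the degenerate case in which the only genuine two‑body term sits in $\hat V_{C\mathcal E_C}$: there the hypothesis should be read as the system–environment coupling acting on the radical pair itself, and the $C$‑only situation is instead covered, together with $\hat H_0$, by the companion Lemma~\ref{lem:P_AB_H_VSE}. The final Pauli‑algebra step is routine.
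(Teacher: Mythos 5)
Your proof is correct, and it is cleaner in organisation than the paper's. The published argument expands $\hat{V}_{A\mathcal{E}_A}$ in Pauli strings, first treats the two conditions $[P^{(A,B)}_{\text{singlet}},\hat{V}_{A\mathcal{E}_A}]=0$ and $[P^{(A,B)}_{\text{singlet}},\hat{V}_{B\mathcal{E}_B}]=0$ separately, and then has to argue again for the general case that the $\sigma_j^{\mathcal{E}_A}\otimes\id^{\mathcal{E}_B}$ and $\id^{\mathcal{E}_A}\otimes\sigma_j^{\mathcal{E}_B}$ blocks cannot cross-cancel — essentially the same linear-independence-of-Pauli-factors observation applied twice. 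Your projection onto the $\mathcal{E}_A$-traceless sector packages this in one step: it kills the $B$- and $C$-contributions at the outset, so the cross-cancellation case never needs to be mentioned, and it reduces the whole lemma to the crisp one-qubit fact that only scalar multiples of $\id^A$ commute with $P^{(A,B)}_{\text{singlet}}$. The paper's proof contains that same computation (it shows $g_{yj}=g_{zj}=0$, then $g_{xj}=g_{zj}=0$, then $g_{xj}=g_{yj}=0$) but never isolates it as a lemma, so your formulation makes the mechanism more transparent. Your vector-form commutator $\sum_k\bigl((\vec a\times\hat e_k)\cdot\vec\sigma^A\bigr)\otimes\sigma_k^B$ is correct up to an overall nonzero factor of $\pm i/2$.

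Your caveat (ii) is also well taken and is, in fact, a genuine imprecision in the lemma as stated: if the \emph{only} two-body term sits in $\hat{V}_{C\mathcal{E}_C}$ (say $\sigma_z^C\otimes\sigma_z^{\mathcal{E}_C}$ with $\hat{V}_{A\mathcal{E}_A}=\hat{V}_{B\mathcal{E}_B}=0$), then $[P^{(A,B)}_{\text{singlet}},\hat{V}_{\mathcal{SE}}]=0$ even though $\hat{V}_{\mathcal{SE}}$ contains a nontrivial system--environment term. The paper's proof silently discards $\hat{V}_{C\mathcal{E}_C}$ in its first line and never returns to it, so it tacitly assumes at least one of $\hat{V}_{A\mathcal{E}_A},\hat{V}_{B\mathcal{E}_B}$ is genuinely two-body. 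You make the same assumption, but you flag it, which the paper does not. A precise statement of the lemma should read ``\ldots has nontrivial interaction between at least one of $A,B$ and its environment.''
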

The proof of Lemma \ref{lemma:necessary-VSE} is given in Appendix \ref{methods:LemmaVSE}. Briefly, we write $\hat{V}_{A\mathcal{E}_{A}}=\sum_{ij=0,\ldots,3}g_{ij}\sigma_{i}^{A}\otimes\sigma_{j}^{\mathcal{E}_{A}}$, for $\sigma_{i}\in\left\{ \id,\sigma_{x},\sigma_{y},\sigma_{z}\right\} $ (and so forth for $\hat{V}_{B\mathcal{E}_{B}}$). Due to how each $\hat{V}_{A\mathcal{E}_{A}}$, $\hat{V}_{B\mathcal{E}_{B}}$ and $P^{(A,B)}_\text{singlet}$ all act on slightly different combinations of systems and environment, their commutator is only zero when $\hat{V}_{A\mathcal{E}_{A}}$ act \emph{locally} on the system $A$ and locally on environment $\mathcal{E}_A$, which contradicts the fact that it is an interaction between system and environment.

\begin{lem}
If $\hat{V}_{\mathcal{SE}}\neq0$ has nontrivial interaction between system and environment, then $[P^{(A,B)}_{\text{singlet}},\hat{H}_0]\neq-[P^{(A,B)}_{\text{singlet}},\hat{V}_{\mathcal{SE}}]$ for any system-Hamiltonian $\hat{H}_0$. \label{lem:P_AB_H_VSE}
\end{lem}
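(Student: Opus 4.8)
The plan is to prove the equivalent statement $[P^{(A,B)}_{\text{singlet}},\hat{H}_0+\hat{V}_{\mathcal{SE}}]\neq0$ by isolating a single Pauli-basis ``sector'' on which the two sides of the claimed identity must disagree, no matter what $\hat{H}_0$ is. I would expand every operator on $\mathcal{SE}$ in the product Pauli basis $\{\sigma^A_i\otimes\sigma^B_j\otimes\sigma^C_k\otimes\sigma^{\mathcal{E}_A}_l\otimes\sigma^{\mathcal{E}_B}_m\otimes\sigma^{\mathcal{E}_C}_n\}$ and split it into an \emph{environment-trivial} part (the span of terms with $l=m=n=0$, i.e.\ operators of the form $X_{\mathcal S}\otimes\id_{\mathcal E}$) and a complementary \emph{environment-nontrivial} part. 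Since $\hat{H}_0$ and $P^{(A,B)}_{\text{singlet}}$ both act only on $\mathcal{S}$, the commutator $[P^{(A,B)}_{\text{singlet}},\hat{H}_0]$ is entirely environment-trivial. Hence the environment-nontrivial part of $[P^{(A,B)}_{\text{singlet}},\hat{H}_0+\hat{V}_{\mathcal{SE}}]$ equals that of $[P^{(A,B)}_{\text{singlet}},\hat{V}_{\mathcal{SE}}]$ alone, and it suffices to show the latter is nonzero, which then settles all $\hat{H}_0$ at once.

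To do this I would reuse the bookkeeping already carried out in the proof of Lemma~\ref{lemma:necessary-VSE}. Write $\hat{V}_{\mathcal{SE}}=\hat{V}_{A\mathcal{E}_A}+\hat{V}_{B\mathcal{E}_B}+\hat{V}_{C\mathcal{E}_C}$ with $\hat{V}_{A\mathcal{E}_A}=\sum_{ij}g^A_{ij}\,\sigma^A_i\otimes\sigma^{\mathcal{E}_A}_j$ (and analogously for $B,C$). Because $P^{(A,B)}_{\text{singlet}}$ is supported on $A,B$ only, the $C$ term commutes with it and drops out; the $A$ term contributes only to sectors with nontrivial $\mathcal{E}_A$ and trivial $\mathcal{E}_B,\mathcal{E}_C$, while the $B$ term contributes only to sectors with nontrivial $\mathcal{E}_B$ and trivial $\mathcal{E}_A,\mathcal{E}_C$ --- these are orthogonal, so contributions from the two pieces cannot cancel one another. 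Within the $A$ piece, the component lying in the $\sigma^{\mathcal{E}_A}_j$ sector ($j\neq0$) is $[P^{(A,B)}_{\text{singlet}},M^A_j]\otimes\sigma^{\mathcal{E}_A}_j$ with $M^A_j=\sum_{i\neq0}g^A_{ij}\sigma^A_i$, and $[P^{(A,B)}_{\text{singlet}},M^A_j]=-\frac{1}{4}[\vec{\sigma}^A\cdot\vec{\sigma}^B,M^A_j]$. Using $[\vec{\sigma}^A\cdot\vec{\sigma}^B,\sigma^A_x]=2i(\sigma^A_y\sigma^B_z-\sigma^A_z\sigma^B_y)$ and its cyclic images, one checks that $[\vec{\sigma}^A\cdot\vec{\sigma}^B,M^A_j]$ vanishes if and only if $M^A_j=0$ (the six operators $\sigma^A_i\sigma^B_{i'}$, $i\neq i'$, carry the coefficients $\pm g^A_{ij}$ on distinct basis elements). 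A genuine two-body term in $\hat{V}_{A\mathcal{E}_A}$ --- some $g^A_{ij}\neq0$ with $i,j\neq0$ --- therefore forces $M^A_j\neq0$ for that $j$, hence a nonzero environment-nontrivial component; the same reasoning applies if the genuine two-body term instead sits in $\hat{V}_{B\mathcal{E}_B}$. Either way, $[P^{(A,B)}_{\text{singlet}},\hat{V}_{\mathcal{SE}}]$ has a nonzero environment-nontrivial part.

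Combining the two steps, $[P^{(A,B)}_{\text{singlet}},\hat{H}_0+\hat{V}_{\mathcal{SE}}]$ has a nonzero environment-nontrivial part whereas $[P^{(A,B)}_{\text{singlet}},\hat{H}_0]$ has none, so the two cannot be negatives of one another for any system Hamiltonian $\hat{H}_0$ --- which is exactly the claim. I expect the main obstacle to be the middle step, namely ruling out an accidental cancellation inside $[P^{(A,B)}_{\text{singlet}},\hat{V}_{\mathcal{SE}}]$; this is handled by two observations, that the $A\mathcal{E}_A$- and $B\mathcal{E}_B$-supported contributions occupy disjoint Pauli sectors, and that within each sector the map $M\mapsto[\vec{\sigma}^A\cdot\vec{\sigma}^B,M]$ is injective on the traceless single-qubit operators of $A$. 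Since both facts are precisely the structural content already extracted in the proof of Lemma~\ref{lemma:necessary-VSE}, in the write-up I would present the crucial nonvanishing as a refinement of that proof rather than redoing the Pauli algebra from scratch.
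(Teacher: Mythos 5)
Your proposal is correct and follows essentially the same route as the paper's own proof: decompose the commutator into the environment-trivial sector (where $[P^{(A,B)}_{\text{singlet}},\hat{H}_0]\otimes\id_{\mathcal E}$ lives) and its complement, then invoke the content of Lemma~\ref{lemma:necessary-VSE} to see that $[P^{(A,B)}_{\text{singlet}},\hat{V}_{\mathcal{SE}}]$ has a nonzero component with a nontrivial environment Pauli, so linear independence of $\id_{\mathcal E}$ and $\sigma^{\mathcal E}_{j\neq 0}$ forbids cancellation. The one place you go beyond the paper is in explicitly verifying the injectivity of $M\mapsto[\vec\sigma^A\cdot\vec\sigma^B,M]$ on traceless $A$-operators and noting the disjointness of the $\mathcal{E}_A$- and $\mathcal{E}_B$-supported sectors --- the paper leans on ``from the previous Lemma'' and leaves this to the reader, whereas you make the non-cancellation airtight.
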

\begin{proof}
From the previous Lemma \ref{lemma:necessary-VSE}, we see that $[P^{(A,B)}_{\text{singlet}},\hat{V}_{\mathcal{SE}}]$ has at least one nontrivial term on the environment $\sigma_{j}^{\mathcal{E}_A}$, $j\in\{x,y,z\}$ that is nonzero. In contrast, $[P^{(A,B)}_{\text{singlet}},\hat{H}_0]=[P^{(A,B)}_{\text{singlet}},\hat{H}_0]\otimes \id_{\mathcal{E}}$. Since $\id_{\mathcal{E}}$ and $\sigma_{j=x,y,z}^\mathcal{E}$ are linearly independent, the two commutators contain independent terms and cannot be equal.
\end{proof}

This shows that the necessary condition $\left[P^{(A,B)}_{\text{singlet}},\hat{H}_0+\hat{V}_{\mathcal{SE}}\right]\neq0$ from Lemma \ref{lemma:necessary} is \emph{always} true provided the
nontrivial interaction $\hat{V}_{\mathcal{SE}}$.

\begin{lem}
$[\hat{H}_{\vec{\boldsymbol{B}}} (\theta,\phi),\hat{H}_0+\hat{V}_{\mathcal{SE}}]\neq 0$ for all nontrivial $\hat{V}_{\mathcal{SE}}$.
\label{lem:H_B_H_0_V}
\end{lem}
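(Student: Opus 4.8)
The plan is to show that the Zeeman Hamiltonian $\hat{H}_{\vec{\boldsymbol{B}}}(\theta,\phi)$ cannot commute with $\hat{H}_0 + \hat{V}_{\mathcal{SE}}$ for every field angle, by exploiting the crucial structural fact that the magnetic field acts only on the system and not on the environment. First I would note that $\hat{H}_{\vec{\boldsymbol{B}}}(\theta,\phi) = \gamma \vec{\boldsymbol{B}}_0 \cdot \sum_\alpha \boldsymbol{\hat{S}}^\alpha$ is a purely collective system operator of the form $\hat{L}_{\vec n} \otimes \id_{\mathcal{E}}$ where $\hat{L}_{\vec n}$ is the total spin projection along the field direction $\vec n$. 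Suppose for contradiction that $[\hat{H}_{\vec{\boldsymbol{B}}}(\theta,\phi), \hat{H}_0 + \hat{V}_{\mathcal{SE}}] = 0$ for all $(\theta,\phi)$. Since the unit vectors $\vec n$ span all directions, and $\hat{H}_{\vec{\boldsymbol{B}}}$ is linear in $\vec n$, this forces $[S^\alpha_a \text{-summed collective component}, \hat{H}_0 + \hat{V}_{\mathcal{SE}}] = 0$ for each Cartesian component $a = x,y,z$ separately, i.e. the total collective spin $\boldsymbol{\hat{J}}_{\mathcal{S}} = \sum_\alpha \boldsymbol{\hat{S}}^\alpha$ (acting as $\boldsymbol{\hat{J}}_{\mathcal{S}} \otimes \id_{\mathcal{E}}$) commutes with $\hat{H}_0 + \hat{V}_{\mathcal{SE}}$.

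Next I would split $\hat{H}_0 + \hat{V}_{\mathcal{SE}}$ into its component that is supported purely on the system and its genuinely system-environment-coupling part. Expanding in Pauli strings, write $\hat{H}_0 + \hat{V}_{\mathcal{SE}} = \hat{G}_{\mathcal{S}} \otimes \id_{\mathcal{E}} + \sum_{k} \hat{A}_k \otimes \hat{B}_k$ where each $\hat{B}_k$ is a nonidentity environment operator and the $\hat{B}_k$ are linearly independent; by hypothesis the second sum is nonzero because $\hat{V}_{\mathcal{SE}}$ is a nontrivial interaction. The commutator with $\boldsymbol{\hat{J}}_{\mathcal{S}} \otimes \id_{\mathcal{E}}$ then separates: $[\boldsymbol{\hat{J}}_{\mathcal{S}}, \hat{G}_{\mathcal{S}}] \otimes \id_{\mathcal{E}} + \sum_k [\boldsymbol{\hat{J}}_{\mathcal{S}}, \hat{A}_k] \otimes \hat{B}_k$. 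Because $\id_{\mathcal{E}}$ and the $\hat{B}_k$ are linearly independent on the environment, vanishing of the whole commutator forces $[\boldsymbol{\hat{J}}_{\mathcal{S}}, \hat{A}_k] = 0$ for every $k$ and for every Cartesian component of $\boldsymbol{\hat{J}}_{\mathcal{S}}$. Thus each system operator $\hat{A}_k$ appearing in the interaction would have to be a scalar under global SU(2) rotations of the system spins.

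I would then derive the contradiction: the only operators on three spin-$1/2$'s that commute with all three components of the total spin are built from the identity and the pairwise dot products $\boldsymbol{\hat{\sigma}}^\alpha \cdot \boldsymbol{\hat{\sigma}}^\beta$ (and the scalar triple-product term $\boldsymbol{\hat{\sigma}}^A \cdot (\boldsymbol{\hat{\sigma}}^B \times \boldsymbol{\hat{\sigma}}^C)$) — exactly the SU(2)-invariant subspace already catalogued in Lemma~\ref{lemma:reduced_rho_S_HB_only}. For the model at hand $\hat{V}_{\mathcal{SE}}$ in Eq.~(\ref{eq:V_SE}) has system operators proportional to $(\id - \sigma_z)^\alpha$, which is manifestly not SU(2)-invariant (it picks out the $z$-axis), so $[\boldsymbol{\hat{J}}_{\mathcal{S}}, \sigma_z^\alpha] \neq 0$, contradicting the requirement. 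More generally, any interaction term $\hat{A}_k$ that couples to the environment and is not a global spin scalar yields the contradiction directly; and an interaction built only from global spin scalars tensored with environment operators would not be an "anisotropy-breaking" interaction in the relevant sense — but the cleanest route is to observe that $\hat{V}_{\mathcal{SE}}$ for our model explicitly contains $\sigma_z^\alpha$ terms. Hence the assumption fails and $[\hat{H}_{\vec{\boldsymbol{B}}}(\theta,\phi), \hat{H}_0 + \hat{V}_{\mathcal{SE}}] \neq 0$ for at least one angle.

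The main obstacle is the bookkeeping in the second step: one must carefully justify that linear independence of the environment factors $\{\id_{\mathcal{E}}, \hat{B}_k\}$ lets one peel off the condition $[\boldsymbol{\hat{J}}_{\mathcal{S}}, \hat{A}_k] = 0$ term by term, which requires first collecting the Pauli-string expansion so that distinct environment strings are genuinely independent, and then separately handling the possibility that cancellations occur between $[\boldsymbol{\hat{J}}_{\mathcal{S}}, \hat{G}_{\mathcal{S}}]$ and the $\id_{\mathcal{E}}$-component of the interaction commutator. Once that decomposition is clean, the rest is the standard representation-theoretic fact about SU(2)-invariant operators on a few qubits, for which Lemma~\ref{lemma:reduced_rho_S_HB_only} already does the heavy lifting, plus the trivial observation that $\sigma_z^\alpha$ is not rotationally invariant.
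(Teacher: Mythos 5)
Your argument follows the same route as the paper's: vary the field direction to deduce that each Cartesian component of the total system spin $\boldsymbol{\hat{J}}_{\mathcal{S}}=\sum_\alpha\boldsymbol{\hat{S}}^\alpha$ must commute with $\hat{H}_0+\hat{V}_{\mathcal{SE}}$, then expand the interaction in Pauli strings and use linear independence of the non-identity environment factors to isolate the condition $[\boldsymbol{\hat{J}}_{\mathcal{S}},\hat{A}_k]=0$ on each system factor that multiplies a genuine environment operator. The paper does exactly this, just less abstractly (choosing $\vec{n}=\hat{x}$, etc., and reading off the vanishing of the coupling coefficients $g_{ij}$).

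However, there is a genuine gap, which you half-acknowledge at the end. By letting each $\hat{A}_k$ be an arbitrary operator on all three system spins, the condition $[\boldsymbol{\hat{J}}_{\mathcal{S}},\hat{A}_k]=0$ only says $\hat{A}_k$ lies in the SU(2)-commutant on three qubits, which you correctly catalogue as nontrivial (dot products and the scalar triple product). At that level of generality the lemma is actually \emph{false}: take $\hat{V}_{\mathcal{SE}}=\bigl(\boldsymbol{\hat{\sigma}}^A\cdot\boldsymbol{\hat{\sigma}}^B\bigr)\otimes\sigma_x^{\mathcal{E}_A}$. This is a genuine system--environment coupling, yet it commutes with $\hat{H}_{\vec{\boldsymbol{B}}}(\theta,\phi)$ for every field direction, since its system factor is rotationally invariant and the field does not act on $\mathcal{E}$; and $\hat{H}_{ex}$ is isotropic too, so $[\hat{H}_{\vec{\boldsymbol{B}}},\hat{H}_0+\hat{V}_{\mathcal{SE}}]=0$. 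The assertion that such an interaction ``would not be anisotropy-breaking in the relevant sense'' is precisely the content the lemma must establish --- it cannot be assumed as an aside.

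The paper avoids this by keeping the structural hypothesis already used in Lemma~\ref{lemma:necessary-VSE}: $\hat{V}_{\mathcal{SE}}=\hat{V}_{A\mathcal{E}_A}+\hat{V}_{B\mathcal{E}_B}+\hat{V}_{C\mathcal{E}_C}$, each term coupling a \emph{single} system spin to its own environment spin. Under that restriction every $\hat{A}_k$ in your decomposition acts on one qubit only, and the only single-qubit operator commuting with all three Pauli matrices is the identity, so the coupling degenerates to local terms, which is the contradiction sought. Adding this explicit restriction to your Pauli-string expansion makes your proof correct and essentially the same as the paper's; without it, the representation-theoretic step does not close.
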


The proof is given in Appendix~\ref{sec:proof_of_lem_HB_H0_V}. It proceeds by contradiction and shows that the if the commutator is zero, then  $\hat{V}_{\mathcal{SE}}$ must be trivial (i.e. consisting of only local terms on the systems  $\mathcal{S}$ and environments $\mathcal{E}$). The proof also assumes that the magnetic field $\hat{H}_{\vec{\boldsymbol{B}}}$ does not act on the environment.

\section*{Acknowledgements}

This work was supported by the Engineering and Physical Sciences Research Council {[}grant number EP/L015242/1{]} and by the Gordon and Betty Moore Foundation [Grant number GBMF8820].

\section*{Author Contributions}

AOC conceived the presented idea. TPL performed the numerical computations and calculations. All authors discussed the results and contributed to the final manuscript.

\phantomsection
\addcontentsline{toc}{section}{References}

\bibliographystyle{apsrev4-1}
\bibliography{biblio}


\appendix

\begin{widetext}

\section{Various proofs}

\subsection{Proof of Lemma \ref{lemma:reduced_rho_S_HB_only}\label{sec:Lemmarho_HB}}

Recall the statement of Lemma \ref{lemma:reduced_rho_S_HB_only}: $[\rho_{\mathcal{S}}\left(0\right),\hat{H}_{\vec{\boldsymbol{B}}}(\theta,\phi)]=0$ for all angles $(\theta,\phi)$ if and only if
\begin{align}
\rho_{\mathcal{S}} =&\dfrac{\id_{ABC}}{8}+p_{AB}\vec{\sigma}^{A}\cdot\vec{\sigma}^{B}+p_{AC}\vec{\sigma}^{A}\cdot\vec{\sigma}^{C}+p_{BC}\vec{\sigma}^{B}\cdot\vec{\sigma}^{C} \nonumber \\
&+p_{ABC}\left(\sigma_{x}\otimes\sigma_{y}\otimes\sigma_{z}+\sigma_{z}\otimes\sigma_{x}\otimes\sigma_{y}+\sigma_{y}\otimes\sigma_{z}\otimes\sigma_{x}\right) \nonumber\\
&-p_{ABC}\left(\sigma_{x}\otimes\sigma_{z}\otimes\sigma_{y}+\sigma_{y}\otimes\sigma_{x}\otimes\sigma_{z}+\sigma_{z}\otimes\sigma_{y}\otimes\sigma_{x}\right).
\end{align}

The state above satisfies the zero commutation relation---this can be seen by straightforward calculation.

In the reverse direction: we write the system state in the following general form $\rho_{\mathcal{S}}=\sum_{ijk=0,x,y,z}p_{ijk}\sigma_{i}\otimes\sigma_{j}\otimes\sigma_{k}$, where $\sigma_0 = \id$. The commutator expands as follows:
\begin{align}
\left[\rho_{\mathcal{S}}\left(0\right),\hat{H}_{\vec{\boldsymbol{B}}}(\theta,\phi)\right]= & \dfrac{\gamma}{2}\sum_{ijk=0,x,y,z}p_{ijk}\left[\sigma_{i}\otimes\sigma_{j}\otimes\sigma_{k},\vec{B}\cdot\vec{\sigma}^{A}+\vec{B}\cdot\vec{\sigma}^{B}+\vec{B}\cdot\vec{\sigma}^{C}\right]\overset{!}{=}0.
\end{align}
By writing $\vec{B}\cdot\vec{\sigma}=B_{x}\sigma_{x}+B_{y}\sigma_{y}+B_{z}\sigma_{z}$, we further expand collect/separate the linearly independent terms, i.e. in the matrix basis $\left\{ \id,\sigma_{x},\sigma_{y},\sigma_{z}\right\} ^{\otimes3}$,
\begin{align}
& \left[\rho_{\mathcal{S}}\left(0\right),  \hat{H}_{\vec{\boldsymbol{B}}}(\theta,\phi)\right] \\
&  =  i\gamma \left(\sigma_{x}\left(p_{y00}B_{z}-p_{z00}B_{y}\right)+\sigma_{y}\left(p_{z00}B_{x}-p_{x00}B_{z}\right)+\sigma_{z}\left(p_{x00}B_{y}-p_{y00}B_{x}\right)\right)\otimes\id\otimes\id\label{eq:rho_HB_1}\\
 & +i\gamma\id\otimes\left(\sigma_{x}\left(p_{0y0}B_{z}-p_{0z0}B_{y}\right)+\sigma_{y}\left(p_{0z0}B_{x}-p_{0x0}B_{z}\right)+\sigma_{z}\left(p_{0x0}B_{y}-p_{0y0}B_{x}\right)\right)\otimes\id\label{eq:rho_HB_2}\\
 & +i\gamma\id\otimes\id\otimes\left(\sigma_{x}\left(p_{00y}B_{z}-p_{00z}B_{y}\right)+\sigma_{y}\left(p_{00z}B_{x}-p_{00x}B_{z}\right)+\sigma_{z}\left(p_{00x}B_{y}-p_{00y}B_{x}\right)\right)\label{eq:rho_HB_3}\\
 & +i\gamma\id\otimes\left[\begin{array}{l}
\sum_{k=x,y,z}\left(\sigma_{x}\left(p_{0yk}B_{z}-p_{0zk}B_{y}\right)+\sigma_{y}\left(p_{0zk}B_{x}-p_{0xk}B_{z}\right)+\sigma_{z}\left(p_{0xk}B_{y}-p_{0yk}B_{x}\right)\right)\otimes\sigma_{k}\\
+\sum_{j=x,y,z}\sigma_{j}\otimes\left(\sigma_{x}\left(p_{0jy}B_{z}-p_{0jz}B_{y}\right)+\sigma_{y}\left(p_{0jz}B_{x}-p_{0jx}B_{z}\right)+\sigma_{z}\left(p_{0jx}B_{y}-p_{0jy}B_{x}\right)\right)
\end{array}\right]\label{eq:rho_HB_4}\\
 & +i\gamma\left[\begin{array}{l}
\sum_{k=x,y,z}\left(\sigma_{x}\left(p_{y0k}B_{z}-p_{z0k}B_{y}\right)+\sigma_{y}\left(p_{z0k}B_{x}-p_{x0k}B_{z}\right)+\sigma_{z}\left(p_{x0k}B_{y}-p_{y0k}B_{x}\right)\right)\otimes\id\otimes\sigma_{k}\\
+\sum_{i=x,y,z}\sigma_{i}\otimes\id\otimes\left(\sigma_{x}\left(p_{i0y}B_{z}-p_{i0z}B_{y}\right)+\sigma_{y}\left(p_{i0z}B_{x}-p_{i0x}B_{z}\right)+\sigma_{z}\left(p_{i0x}B_{y}-p_{i0y}B_{x}\right)\right)
\end{array}\right]\label{eq:rho_HB_5}\\
 & +i\gamma\left[\begin{array}{l}
\sum_{j=x,y,z}\left(\sigma_{x}\left(p_{yj0}B_{z}-p_{zj0}B_{y}\right)+\sigma_{y}\left(p_{zj0}B_{x}-p_{xj0}B_{z}\right)+\sigma_{z}\left(p_{xj0}B_{y}-p_{yj0}B_{x}\right)\right)\otimes\sigma_{j}\\
+\sum_{i=,x,y,z}\sigma_{i}\otimes\left(\sigma_{x}\left(p_{iy0}B_{z}-p_{iz0}B_{y}\right)+\sigma_{y}\left(p_{iz0}B_{x}-p_{ix0}B_{z}\right)+\sigma_{z}\left(p_{ix0}B_{y}-p_{iy0}B_{x}\right)\right)
\end{array}\right]\otimes\id\label{eq:rho_HB_6}\\
 & +i\gamma\left\{ \begin{array}{l}
\sum_{jk=x,y,z}\left(\sigma_{x}\left(p_{yjk}B_{z}-p_{zjk}B_{y}\right)+\sigma_{y}\left(p_{zjk}B_{x}-p_{xjk}B_{z}\right)+\sigma_{z}\left(p_{xjk}B_{y}-p_{yjk}B_{x}\right)\right)\otimes\sigma_{j}\otimes\sigma_{k}\\
+\sum_{ik=x,y,z}\sigma_{i}\otimes\left(\sigma_{x}\left(p_{iyk}B_{z}-p_{izk}B_{y}\right)+\sigma_{y}\left(p_{izk}B_{x}-p_{ixk}B_{z}\right)+\sigma_{z}\left(p_{ixk}B_{y}-p_{iyk}B_{x}\right)\right)\otimes\sigma_{k}\\
+\sum_{ij=x,y,z}\sigma_{i}\otimes\sigma_{j}\otimes\left(\sigma_{x}\left(p_{ijy}B_{z}-p_{ijz}B_{y}\right)+\sigma_{y}\left(p_{ijz}B_{x}-p_{ijx}B_{z}\right)+\sigma_{z}\left(p_{ijx}B_{y}-p_{ijy}B_{x}\right)\right)
\end{array}\right\} \label{eq:rho_HB_7}\\
 &  \overset{!}{=}0.
\end{align}
Each of the lines above are linearly independent from each other. Thus, we will consider each separately.

For Eq. (\ref{eq:rho_HB_1}):
\begin{align}
 & i\gamma\left(\sigma_{x}\left(p_{y00}B_{z}-p_{z00}B_{y}\right)+\sigma_{y}\left(p_{z00}B_{x}-p_{x00}B_{z}\right)+\sigma_{z}\left(p_{x00}B_{y}-p_{y00}B_{x}\right)\right)\otimes\id\otimes\id\overset{!}{=}0\nonumber \\
\implies & p_{y00}B_{z}\overset{!}{=}p_{z00}B_{y},\qquad p_{z00}B_{x}\overset{!}{=}p_{x00}B_{z},\qquad p_{x00}B_{y}\overset{!}{=}p_{y00}B_{x}.
\end{align}
We wish to find the parameters $p_{ijk}$ in which these equations hold for all allowed $\left(B_{x},B_{y},B_{z}\right)$ magnetic fields. If,
for example the magnetic field is in the $z$ direction, meaning that $B_z\neq0$ and $B_x = B_y = 0$, then  $p_{y00}B_{z}=p_{z00}B_{y}$ implies that $p_{y00}=0$ etc. By considering different directions, the above holds for all
field angles if and only if
\begin{equation}
p_{x00}=p_{y00}=p_{z00}=0.
\end{equation}
Similarly, Eqs. (\ref{eq:rho_HB_2}) and (\ref{eq:rho_HB_3}) give:
\begin{align}
p_{0x0} & =p_{0y0}=p_{0z0}=0\\
p_{00x} & =p_{00y}=p_{00z}=0.
\end{align}

Now consider Eq. (\ref{eq:rho_HB_4}), which expands out to:
\begin{align}
0\overset{!}{=} & \id\otimes\sigma_{x}\otimes\sigma_{x}\left(\left(p_{0yx}B_{z}-p_{0zx}B_{y}+p_{0xy}B_{z}-p_{0xz}B_{y}\right)\right)\\
&+\id\otimes\sigma_{x}\otimes\sigma_{y}\left(p_{0yy}B_{z}-p_{0zy}B_{y}+p_{0xz}B_{x}-p_{0xx}B_{z}\right) \\
& +\id\otimes\sigma_{x}\otimes\sigma_{z}\left(p_{0yz}B_{z}-p_{0zz}B_{y}+p_{0xx}B_{y}-p_{0xy}B_{x}\right)\\
& +\id\otimes\sigma_{y}\otimes\sigma_{x}\left(p_{0zx}B_{x}-p_{0xx}B_{z}+p_{0yy}B_{z}-p_{0yz}B_{y}\right)\\
&+\id\otimes\sigma_{y}\otimes\sigma_{y}\left(p_{0zy}B_{x}-p_{0xy}B_{z}+p_{0yz}B_{x}-p_{0yx}B_{z}\right)\\
&+\id\otimes\sigma_{y}\otimes\sigma_{z}\left(p_{0zz}B_{x}-p_{0xz}B_{z}+p_{0yx}B_{y}-p_{0yy}B_{x}\right)\\
& +\id\otimes\sigma_{z}\otimes\sigma_{x}\left(p_{0xx}B_{y}-p_{0yx}B_{x}+p_{0zy}B_{z}-p_{0zz}B_{y}\right)\\
&+\id\otimes\sigma_{z}\otimes\sigma_{y}\left(p_{0xy}B_{y}-p_{0yy}B_{x}+p_{0zz}B_{x}-p_{0zx}B_{z}\right)\\
& +\id\otimes\sigma_{z}\otimes\sigma_{z}\left(p_{0xz}B_{y}-p_{0yz}B_{x}+p_{0zx}B_{y}-p_{0zy}B_{x}\right),
\end{align}
and each of those coefficients must be zero. For the coefficient of $\id\otimes\sigma_{x}\otimes\sigma_{x}$,
\begin{align}
p_{0yx}B_{z}-p_{0zx}B_{y}+p_{0xy}B_{z}-p_{0xz}B_{y} & \overset{!}{=}0\\
\implies B_{z}\left(p_{0yx}+p_{0xy}\right) & =B_{y}\left(p_{0zx}+p_{0xz}\right),
\end{align}
 which holds for all magnetic field directions iff
\begin{equation}
p_{0yx}=-p_{0xy},\qquad p_{0zx}=-p_{0xz},
\end{equation}
(i.e. by considering the value of the parameters when the field is \emph{only} in $x,y$, or $z$ direction). For the coefficient of $\id\otimes\sigma_{x}\otimes\sigma_{y}$,
\begin{align}
p_{0yy}B_{z}-p_{0zy}B_{y}+p_{0xz}B_{x}-p_{0xx}B_{z} & \overset{!}{=}0\\
\implies\left(p_{0yy}-p_{0xx}\right)B_{z} & =p_{0zy}B_{y}-p_{0xz}B_{x},\\
\implies & p_{0yy}=p_{0xx},\quad p_{0xz}=0,\quad p_{0zy}=0.
\end{align}
By considering all the linearly independent coefficients of  Eq. (\ref{eq:rho_HB_4}), we obtain in summary:
\begin{align}
p_{0xx} & =p_{0yy}=p_{0zz}\\
p_{0jk} & =0\quad\text{for }j\neq k,\quad j,k\in\left\{ x,y,z\right\} .
\end{align}
Similarly (by considering the permutations), Eqs. (\ref{eq:rho_HB_5}) and (\ref{eq:rho_HB_6}) give us:
\begin{align}
p_{x0x} & =p_{y0y}=p_{z0z}\\
p_{i0k} & =0\quad\text{for }i\neq k,\quad i,k\in\left\{ x,y,z\right\} \\
p_{xx0} & =p_{yy0}=p_{zz0}\\
p_{ij0} & =0\quad\text{for }i\neq j,\quad i,j\in\left\{ x,y,z\right\} .
\end{align}
At this point, our current reduced system state is 
\begin{equation}
\rho_{\mathcal{S}}=p_{000}\id\otimes\id\otimes\id+p_{xx0}\vec{\sigma}^{A}\cdot\vec{\sigma}^{B}\otimes\id^{C}+p_{x0x}\vec{\sigma}^{A}\cdot\vec{\sigma}^{C}\otimes\id^{B}+p_{0xx}\id^{A}\otimes\vec{\sigma}^{B}\cdot\vec{\sigma}^{C}+\sum_{i,j,k=,x,y,z}p_{ijk}\sigma_{i}\otimes\sigma_{j}\otimes\sigma_{k}.
\end{equation}

Finally, for Eq. (\ref{eq:rho_HB_7}):
\begin{align}
0\overset{!}{=} & \sum_{ijk=x,y,z}\sigma_{i}\left(p_{i\oplus1,jk}B_{i\oplus2}-p_{i\oplus2,jk}B_{i\oplus1}\right)\otimes\sigma_{j}\otimes\sigma_{k} +\sum_{ijk=x,y,z}\sigma_{i}\otimes\left(\sigma_{j}\left(p_{i,j\oplus1,k}B_{j\oplus2}-p_{i,j\oplus2,k}B_{j\oplus1}\right)\right)\otimes\sigma_{k}\nonumber \\
 & +\sum_{ijk=x,y,z}\sigma_{i}\otimes\sigma_{j}\otimes\sigma_{k}\left(p_{i,j,k\oplus1}B_{k\oplus2}-p_{i,j,k\oplus2}B_{k\oplus1}\right)\nonumber \\
= & \sum_{ijk=x,y,z}\sigma_{i}\otimes\sigma_{j}\otimes\sigma_{k}\left\{  \begin{array}{r}
p_{i\oplus1,jk}B_{i\oplus2}-p_{i\oplus2,jk}B_{i\oplus1}+p_{i,j\oplus1,k}B_{j\oplus2}-p_{i,j\oplus2,k}B_{j\oplus1}\\
+p_{i,j,k\oplus1}B_{k\oplus2}-p_{i,j,k\oplus2}B_{k\oplus1}
\end{array}\right\} \\
\implies0\overset{!}{=} & 
p_{i\oplus1,jk}B_{i\oplus2}-p_{i\oplus2,jk}B_{i\oplus1}+p_{i,j\oplus1,k}B_{j\oplus2}-p_{i,j\oplus2,k}B_{j\oplus1} \nonumber\\
& \phantom{p_{i\oplus1,jk}B_{i\oplus2}-p_{i\oplus2,jk}B_{i\oplus1}} +p_{i,j,k\oplus1}B_{k\oplus2}-p_{i,j,k\oplus2}B_{k\oplus1}, \qquad\forall i,j,k\in\left\{ x,y,z\right\},
\end{align}
where the addition is modulo $3$, ordered $\left\{ x,y,z\right\} $ so that, for example, $z\oplus1=x$.

Now consider if $i=j$, $k=i\oplus1$:
\begin{align}
0\overset{!}{=} & B_{i\oplus1}\left(-p_{i\oplus2,i,i\oplus1}-p_{i,i\oplus2,i\oplus1}\right)+B_{i\oplus2}\left(p_{i\oplus1,i,i\oplus1}+p_{i,i\oplus1,i\oplus1}\right)+p_{i,i,i\oplus2}B_{i}-p_{i,i,i}B_{i\oplus2}\\
= & B_{i\oplus1}\left(-p_{i\oplus2,i,i\oplus1}-p_{i,i\oplus2,i\oplus1}\right)+B_{i\oplus2}\left(p_{i\oplus1,i,i\oplus1}+p_{i,i\oplus1,i\oplus1}-p_{i,i,i}\right)+p_{i,i,i\oplus2}B_{i}.
\end{align}
This must hold for all magnetic field directions, including if $B_{i}=B_{i\oplus1}=0$
or if $B_{i}=B_{i\oplus2}=0$ or if $B_{i\oplus1}=B_{i\oplus2}=0$.
This leads to the following conditions:
\begin{align}
p_{i\oplus1,i,i\oplus1}+p_{i,i\oplus1,i\oplus1}=p_{i,i,i},\qquad p_{i\oplus2,i,i\oplus1} & =-p_{i,i\oplus2,i\oplus1},\qquad p_{i,i,i\oplus2}=0.\label{eq:rho_HB_7_iii+1}
\end{align}
Similarly, with permutation, i.e. for $i=k$, $j=i\oplus1$ and for
$j=k$, $i=j\oplus1$, we obtain the following conditions:
\begin{eqnarray}
p_{i\oplus1,i\oplus1,i}+p_{i,i\oplus1,i\oplus1}=p_{i,i,i},\qquad & p_{i\oplus2,i\oplus1,i}=-p_{i,i\oplus1,i\oplus2}, & \quad p_{i,i\oplus2,i}=0,\label{eq:rho_HB_7_ii+1i}\\
p_{j\oplus1,j\oplus1,j}+p_{j\oplus1,j,j\oplus1}=p_{j,jj},\qquad & p_{j\oplus1,j\oplus2,j}=-p_{j\oplus1,j,j\oplus2}, & \quad p_{j\oplus2,jj}=0.\label{eq:rho_HB_7_i+1ii}
\end{eqnarray}

Since $p_{i,i,i\oplus1}=p_{i,i\oplus1,i}=p_{i\oplus1,i,}=p_{i,i,i\oplus2}=p_{i,i\oplus2,i}=p_{i\oplus2,i,i}=0$,
we also have that $p_{i\oplus1,i,i\oplus1}+p_{i,i\oplus1,i\oplus1}=p_{i,i,i}=0$.
Hence, for, $i,j,k\in\left\{ x,y,z\right\} $, the only nonzero $p_{ijk}$
terms are when all $i\neq j\neq k$. The second conditions of Eqs.
(\ref{eq:rho_HB_7_iii+1}), (\ref{eq:rho_HB_7_ii+1i}), and (\ref{eq:rho_HB_7_i+1ii})
give:
\begin{align}
 & p_{i\oplus2,i,i\oplus1}=-p_{i,i\oplus2,i\oplus1}\implies p_{i,i\oplus1,i\oplus2}=-p_{i\oplus1,i,i\oplus2}\\
 & p_{i\oplus2,i\oplus1,i}=-p_{i,i\oplus1,i\oplus2}\implies p_{i,i\oplus1,i\oplus2}=-p_{i\oplus2,i\oplus1,i}\implies p_{i\oplus2,i,i\oplus1}=-p_{i\oplus1,i,i\oplus2}\\
 & p_{j\oplus1,j\oplus2,j}=-p_{j\oplus1,j,j\oplus2}\implies p_{i,i\oplus1,i\oplus2}=-p_{i,i\oplus2,i\oplus1}\implies p_{i\oplus1,i\oplus2,i}=-p_{i\oplus1,i,i\oplus2},
\end{align}
so we have that:
\begin{equation}
p_{i,i\oplus1,i\oplus2}=p_{i\oplus2,i,i\oplus1}=p_{i\oplus1,i\oplus2,i}=-p_{i\oplus1,i,i\oplus2}=-p_{i\oplus2,i\oplus1,i}=-p_{i,i\oplus2,i\oplus1}.
\end{equation}
 This, in short, gives us that 
\begin{equation}
p_{xyz}=p_{yzx}=p_{zxy}=-p_{xzy}=-p_{yxz}=-p_{zyx}.
\end{equation}
Our reduced state is now
\begin{align}
\rho_{\mathcal{S}}= &p_{000}\id\otimes\id\otimes\id+p_{xx0}\vec{\sigma}^{A}\cdot\vec{\sigma}^{B}\otimes\id^{C}+p_{x0x}\vec{\sigma}^{A}\cdot\vec{\sigma}^{C}\otimes\id^{B}+p_{0xx}\id^{A}\otimes\vec{\sigma}^{B}\cdot\vec{\sigma}^{C} \nonumber \\
& +p_{xyz}\sum_{i=,x,y,z}\sigma_{i}\otimes\sigma_{i\oplus1}\otimes\sigma_{i\oplus2}-p_{xyz}\sum_{i=,x,y,z}\sigma_{i}\otimes\sigma_{i\oplus2}\otimes\sigma_{i\oplus1},
\end{align}
and with straightforward calculation, satisfies $[\rho_{\mathcal{S}},\hat{H}_{\vec{\boldsymbol{B}}}\left(\theta,\phi\right)]=0$ for all $\left(\theta,\phi\right)$. Finally, state normalisation gives $p_{000}=\id_{8}/8$. \hfill $\square$

\subsection{Proof of Proposition \ref{prop:rho_S_H_our_model}\label{sec:proof_of_rho_S_H_our_model}}

Recall that we want to show that  $[\rho_{\mathcal{S}}\otimes\dfrac{\id_{\mathcal{E}}}{d_{\mathcal{E}}},\hat{H}_{ex}+\hat{H}_{\vec{\boldsymbol{B}}}\left(\theta,\phi\right)+\hat{V}_{\mathcal{SE}}]=0$ if and only if $\rho_{\mathcal{S}}=\id_{\mathcal{S}}/d_{\mathcal{S}}$, when the initial environment state $\rho_{\mathcal{E}}=\id_{\mathcal{E}}/d_{\mathcal{E}}$, and with interactions
\begin{align}
\hat{V}_{\mathcal{SE}} & =J_{se}\sum_{\alpha=A,B,C}\ket{1}\bra{1}^{\alpha}\otimes\left(\id-\sigma_{x}\right)^{\mathcal{E}_{\alpha}},\\
\hat{H}_{ex} & =-\dfrac{J_{ABC}}{2}\sum_{\alpha<\beta}\left(\id+\boldsymbol{\hat{\sigma}}^\alpha\cdot\boldsymbol{\hat{\sigma}}^\beta\right).
\end{align}

\begin{proof}
It is immediate that $\rho_{\mathcal{S}}=\id_{\mathcal{S}}/d_{\mathcal{S}}$ satisfies the zero commutator.

To prove the other direction, note that we are taking our model in particular with system-environment interaction that we can write as
$\hat{V}_{\mathcal{SE}} =\frac{J_{se}}{2}\sum_{\alpha=A,B,C}\left(\id-\sigma_{z}\right)^{\alpha}\otimes\left(\id-\sigma_{x}\right)^{\mathcal{E}_{\alpha}}.$
Now, 
\begin{equation}
\left[\rho_{\mathcal{S}}\otimes\dfrac{\id_{\mathcal{E}}}{d_{\mathcal{E}}},\hat{H}_{ex}+\hat{H}_{\vec{\boldsymbol{B}}}\left(\theta,\phi\right)+\hat{V}_{\mathcal{SE}}\right]=\left[\rho_{\mathcal{S}},\hat{H}_{ex}+\hat{H}_{\vec{\boldsymbol{B}}}\left(\theta,\phi\right)\right]\otimes\dfrac{\id_{\mathcal{E}}}{d_{\mathcal{E}}}+\left[\rho_{\mathcal{S}}\otimes\dfrac{\id_{\mathcal{E}}}{d_{\mathcal{E}}},\hat{V}_{\mathcal{SE}}\right]\overset{!}{=}0.
\end{equation}
The second commutator contains terms of form $\sigma_{x,y,z}^{\mathcal{E}}$ that are absent from the first commutator. Hence, the coefficients of $\sigma_{x,y,z}^{\mathcal{E}}$ in the second commutator must independently (due to linear independence) be zero for the statement to hold. Thus, let us first consider the second commutator and determine the reduced
form of $\rho_{\mathcal{S}}$ needed. Let us write the general system state as $\rho_{\mathcal{S}} =\sum_{ijk=0,x,y,z}p_{ijk}\sigma_{i}^{A}\otimes\sigma_{j}^{B}\otimes\sigma_{k}^{C}$. 

Consider the interaction between system $A$ and its local bath: $\hat{V}_{A\mathcal{E}_{A}}=\dfrac{J_{se}}{2}\left(\id-\sigma_{z}\right)^{A}\otimes\left(\id-\sigma_{x}\right)^{\mathcal{E}_{A}}$ (the results for $\hat{V}_{B\mathcal{E}_{B}}$ and $\hat{V}_{C\mathcal{E}_{C}}$ will be analogous):
\begin{align}
\left[\rho_{\mathcal{S}}\otimes\dfrac{\id_{\mathcal{E}}}{d_{\mathcal{E}}},\hat{V}_{A\mathcal{E}_{A}}\right] & =\dfrac{J_{se}}{2}\sum_{ijk=0,x,y,z}p_{ijk}\left[\sigma_{i}^{A}\otimes\sigma_{j}^{B}\otimes\sigma_{k}^{C}\otimes\dfrac{\id_{\mathcal{E}}}{d_{\mathcal{E}}},\left(\id-\sigma_{z}\right)^{A}\otimes\left(\id-\sigma_{x}\right)^{\mathcal{E}_{A}}\right]\\
& =\dfrac{J_{se}}{2}\sum_{ijk=0,x,y,z}p_{ijk}\left[\sigma_{i}^{A}\otimes\dfrac{\id^{\mathcal{E}_A}}{2},\left(\id-\sigma_{z}\right)^{A}\otimes\left(\id-\sigma_{x}\right)^{\mathcal{E}_A}\right]\otimes\sigma_{j}^{B}\otimes\sigma_{k}^{C}\otimes\dfrac{\id^{\mathcal{E}_B\mathcal{E}_C}}{d_{\mathcal{E}_B\mathcal{E}_C}}\\
& =\dfrac{J_{se}}{2}\sum_{ijk=0,x,y,z}p_{ijk}\left[\sigma_{z}^{A},\sigma_{i}^{A}\right]\otimes\sigma_{j}^{B}\otimes\sigma_{k}^{C}\otimes\left(\dfrac{\id-\sigma_{x}}{2}\right)^{\mathcal{E}_A}\otimes\dfrac{\id^{\mathcal{E}_B\mathcal{E}_C}}{d_{\mathcal{E}_B\mathcal{E}_C}}.
\end{align}
The (only) term containing $\sigma_{x}^{\mathcal{E}_A}$ is the following, and it must be zero:
\begin{align}
\dfrac{J_{se}}{2} & \sum_{ijk=0,x,y,z}p_{ijk}\left[\sigma_{z}^{A},\sigma_{i}^{A}\right]\otimes\sigma_{j}^{B}\otimes\sigma_{k}^{C}\otimes\dfrac{-\sigma_{x}^{\mathcal{E}_A}}{2}\otimes\dfrac{\id^{\mathcal{E}_B\mathcal{E}_C}}{d_{\mathcal{E}_B\mathcal{E}_C}} \nonumber \\
=& \dfrac{J_{se}}{2}  \sum_{jk=0,x,y,z}\left(p_{xjk}2i\sigma_{y}^{A}-p_{yjk}2i\sigma_{x}^{A}\right)\otimes\sigma_{j}^{B}\otimes\sigma_{k}^{C}\otimes\dfrac{-\sigma_{x}^{\mathcal{E}_A}}{2}\otimes\dfrac{\id^{\mathcal{E}_B\mathcal{E}_C}}{d_{\mathcal{E}_B\mathcal{E}_C}}\overset{!}{=}0\\
\implies & p_{xjk}=p_{yjk}=0, \qquad \forall j,k\in\{0,x,y,z\},
\end{align}
due to linear independence of the different terms. Similarly, by  examining $\hat{V}_{B\mathcal{E}_B}$ and $\hat{V}_{C\mathcal{E}_C}$, we find that $p_{ixk}=p_{iyk}=0$ $\forall i,k\in \{0,x,y,z\}$ and $p_{ijx}=p_{ijy}=0$ $\forall i,j\in \{0,x,y,z\}$. This means that the system state has only terms with $\id=\sigma_{0}$ and $\sigma_{z}$:
\begin{equation}
\rho_{S|z}\coloneqq\sum_{ijk=0,z}p_{ijk}\sigma_{i}^{A}\otimes\sigma_{j}^{B}\otimes\sigma_{k}^{C}.
\end{equation}

Now in fact, this $\rho_{S|z}$ form has $\left[\rho_{S|z}\otimes \id_{\mathcal{E}}/d_{\mathcal{E}},\hat{V}_{\mathcal{SE}}\right]=0$. Thus, we now need to consider
\begin{align}
\left[\rho_{S|z},\hat{H}_{ex}+\hat{H}_{\vec{\boldsymbol{B}}}\left(\theta,\phi\right)\right] & \overset{!}{=}0.
\end{align}
Now if $\left[\rho_{S|z},\hat{H}_{ex}+\hat{H}_{\vec{\boldsymbol{B}}}\left(\theta,\phi\right)\right]\overset{!}{=}0$ at all angles, then consider if $\hat{H}_{\vec{\boldsymbol{B}}}\left(0,0\right)=\dfrac{\gamma B_{0}}{2}\left(\sigma_{z}^{A}+\sigma_{z}^{B}+\sigma_{z}^{C}\right)$. In this case,
\begin{equation}
\left[\rho_{S|z},\hat{H}_{\vec{\boldsymbol{B}}}\left(0,0\right)\right]=\left[\sum_{ijk=0,z}p_{ijk}\sigma_{i}^{A}\otimes\sigma_{j}^{B}\otimes\sigma_{k}^{C},\dfrac{\gamma B_{0}}{2}\left(\sigma_{z}^{A}+\sigma_{z}^{B}+\sigma_{z}^{C}\right)\right]=0,
\end{equation}
since all terms commute with each other. Hence, $\left[\rho_{S|z},\hat{H}_{ex}+\hat{H}_{\vec{\boldsymbol{B}}}\left(0,0\right)\right]\overset{!}{=}0$ implies $\left[\rho_{S|z},\hat{H}_{ex}\right]\overset{!}{=}0$. Expanding,
\begin{align}
&\left[\rho_{S|z},\hat{H}_{ex}\right] \nonumber \\
& =-\dfrac{J_{ABC}}{2}\sum_{ijk=0,z}\left[p_{ijk}\sigma_{i}^{A}\otimes\sigma_{j}^{B}\otimes\sigma_{k}^{C},3\id+\vec{\sigma}^{A}\cdot\vec{\sigma}^{B}+\vec{\sigma}^{A}\cdot\vec{\sigma}^{C}+\vec{\sigma}^{B}\cdot\vec{\sigma}^{C}\right]\\
 & =-\dfrac{J_{ABC}}{2}\sum_{ijk=0,z}p_{ijk}\left(\left[\sigma_{i}^{A}\otimes\sigma_{j}^{B},\vec{\sigma}^{A}\cdot\vec{\sigma}^{B}\right]\otimes\sigma_{k}^{C}+\left[\sigma_{i}^{A}\otimes\sigma_{k}^{C},\vec{\sigma}^{A}\cdot\vec{\sigma}^{C}\right]\otimes\sigma_{j}^{B}+\sigma_{i}^{A}\otimes\left[\sigma_{j}^{B}\otimes\sigma_{k}^{C},\vec{\sigma}^{B}\cdot\vec{\sigma}^{C}\right]\right). \label{eq:rhoS_Hex_expand}
\end{align}
Consider the first terms in Eq.~(\ref{eq:rhoS_Hex_expand}):
\begin{align}
& \sum_{ijk=0,z}p_{ijk}\left[\sigma_{i}^{A}\otimes\sigma_{j}^{B},\vec{\sigma}^{A}\cdot\vec{\sigma}^{B}\right]\otimes\sigma_{k}^{C}\nonumber \\
& =\sum_{ijk=0,z}p_{ijk}\left(\left[\sigma_{i}^{A}\otimes\sigma_{j}^{B},\sigma_{x}\otimes\sigma_{x}\right]+\left[\sigma_{i}^{A}\otimes\sigma_{j}^{B},\sigma_{y}\otimes\sigma_{y}\right]+\cancel{\left[\sigma_{i}^{A}\otimes\sigma_{j}^{B},\sigma_{z}\otimes\sigma_{z}\right]}\right)\otimes\sigma_{k}^{C}\\
 & =\sum_{k=0,z}\left(\begin{array}{l}
   \left[p_{0zk}\id\otimes\sigma_{z}+p_{z0k}\sigma_{z}\otimes\id+p_{zzk}\sigma_{z}\otimes\sigma_{z},\sigma_{x}\otimes\sigma_{x}\right]\\
   +\left[p_{0zk}\id\otimes\sigma_{z}+p_{z0k}\sigma_{z}\otimes\id + p_{zzk}\sigma_z\otimes\sigma_z,\sigma_{y}\otimes\sigma_{y}\right]
 \end{array} \right)\otimes\sigma_{k}^{C}\\
 & =\sum_{k=0,z}\left(p_{0zk}\sigma_{x}\otimes2i\sigma_{y}+p_{z0k}2i\sigma_{y}\otimes\sigma_{x}+p_{0zk}\sigma_{y}\otimes\left(-2i\sigma_{x}\right)+p_{z0k}\left(-2i\sigma_{x}\right)\otimes\sigma_{y}\right)\otimes\sigma_{k}^{C}\\
 & =2i\sum_{k=0,z}\left(\left(p_{0zk}-p_{z0k}\right)\sigma_{x}^{A}\otimes\sigma_{y}^{B}+\left(p_{z0k}-p_{0zk}\right)\sigma_{y}^{A}\otimes\sigma_{x}^{B}\right)\otimes\sigma_{k}^{C}.
\end{align}
Similarly,
\begin{align}
\sum_{ijk=0,z}p_{ijk}\left[\sigma_{i}^{A}\otimes\sigma_{k}^{C},\vec{\sigma}^{A}\cdot\vec{\sigma}^{C}\right]\otimes\sigma_{j}^{B} & =2i\sum_{j=0,z}\left(\left(p_{0jz}-p_{zj0}\right)\sigma_{x}^{A}\otimes\sigma_{y}^{C}+\left(p_{zj0}-p_{0jz}\right)\sigma_{y}^{A}\otimes\sigma_{x}^{C}\right)\otimes\sigma_{j}^{B}\\
\sum_{ijk=0,z}p_{ijk}\sigma_{i}^{A}\otimes\left[\sigma_{j}^{B}\otimes\sigma_{k}^{C},\vec{\sigma}^{B}\cdot\vec{\sigma}^{C}\right] & =2i\sum_{i=0,z}\sigma_{i}^{A}\otimes\left(\left(p_{i0z}-p_{iz0}\right)\sigma_{x}^{B}\otimes\sigma_{y}^{C}+\left(p_{iz0}-p_{i0z}\right)\sigma_{y}^{B}\otimes\sigma_{x}^{C}\right).
\end{align}
Hence,
\begin{align}
\left[\rho_{S|z},\hat{H}_{ex}\right] & =-J_{ABC}i\sum_{k=0,z}\left(\left(p_{0zk}-p_{z0k}\right)\sigma_{x}^{A}\otimes\sigma_{y}^{B}+\left(p_{z0k}-p_{0zk}\right)\sigma_{y}^{A}\otimes\sigma_{x}^{B}\right)\otimes\sigma_{k}^{C}\nonumber \\
 & \qquad-J_{ABC}i\sum_{j=0,z}\left(\left(p_{0jz}-p_{zj0}\right)\sigma_{x}^{A}\otimes\sigma_{y}^{C}+\left(p_{zj0}-p_{0jz}\right)\sigma_{y}^{A}\otimes\sigma_{x}^{C}\right)\otimes\sigma_{j}^{B}\nonumber \\
 & \qquad-J_{ABC}i\sum_{i=0,z}\sigma_{i}^{A}\otimes\left(\left(p_{i0z}-p_{iz0}\right)\sigma_{x}^{B}\otimes\sigma_{y}^{C}+\left(p_{iz0}-p_{i0z}\right)\sigma_{y}^{B}\otimes\sigma_{x}^{C}\right)\overset{!}{=}0.
\end{align}
However, note that each of the terms are linearly independent, and so the coefficients must be zero, which implies that
\begin{align}
p_{0zk} & =p_{z0k}\quad\text{for }k=0,z,\\
p_{0jz} & =p_{zj0}\quad\text{for }j=0,z,\\
p_{i0z} & =p_{iz0}\quad\text{for }i=0,z.
\end{align}
Hence we can reduce the system state again:
\begin{align}
\rho_{S|z}^{\prime}= & p_{000}\id\otimes\id\otimes\id+p_{zzz}\sigma_{z}\otimes\sigma_{z}\otimes\sigma_{z}\nonumber \\
 & +p_{z00}\left(\id\otimes\id\otimes\sigma_{z}+\id\otimes\sigma_{z}\otimes\id+\sigma_{z}\otimes\id\otimes\id\right)\\
 & +p_{zz0}\left(\id\otimes\sigma_{z}\otimes\sigma_{z}+\sigma_{z}\otimes\id\otimes\sigma_{z}+\sigma_{z}\otimes\sigma_{z}\otimes\id\right).\nonumber 
\end{align}

Now going back to $\left[\rho_{S|z}^{\prime},\hat{H}_{ex}+\hat{H}_{\vec{\boldsymbol{B}}}\left(\theta,\phi\right)\right]\overset{!}{=}0$,
since $\left[\rho_{S|z}^{\prime},\hat{H}_{ex}\right]=0$, consider $\left[\rho_{S|z}^{\prime},\hat{H}_{\vec{\boldsymbol{B}}}\left(\theta,\phi\right)\right]$.
Consider if the magnetic field is pointed in the $x$ direction, such that  $\hat{H}_{\vec{\boldsymbol{B}}}\left(0,\dfrac{\pi}{2}\right)=\dfrac{\gamma B_{0}}{2}\left(\sigma_{x}^{A}+\sigma_{x}^{B}+\sigma_{x}^{C}\right)$.
In this case, the commutator expands out to:
\begin{align}
\left[\rho_{S|z}^{\prime},\hat{H}_{\vec{\boldsymbol{B}}}\left(0,\dfrac{\pi}{2}\right)\right]
 & =\dfrac{\gamma B_{0}}{2}2i\sigma_{y}^{A}\otimes\left(p_{z00}\id^{B}\otimes\id^{C}+p_{zz0}\left(\id\otimes\sigma_{z}+\sigma_{z}\otimes\id\right)+p_{zzz}\sigma_{z}\otimes\sigma_{z}\right) \nonumber\\
 &\phantom{=} +\dfrac{\gamma B_{0}}{2}2i\sigma_{y}^{B}\otimes\left(p_{z00}\id^{A}\otimes\id^{C}+p_{zz0}\left(\id\otimes\sigma_{z}+\sigma_{z}\otimes\id\right)+p_{zzz}\sigma_{z}\otimes\sigma_{z}\right) \nonumber\\
 &\phantom{=} +\dfrac{\gamma B_{0}}{2}2i\sigma_{y}^{C}\otimes\left(p_{z00}\id^{A}\otimes\id^{B}+p_{zz0}\left(\id\otimes\sigma_{z}+\sigma_{z}\otimes\id\right)+p_{zzz}\sigma_{z}\otimes\sigma_{z}\right) \overset{!}{=}0\\
\implies & p_{z00}=0,\qquad p_{zz0}=0,\qquad p_{zzz}=0,
\end{align}
again due to the linear independence of the terms and subsequent requirements for zero coefficients. Thus, the reduced system state is $\rho_{\mathcal{S}}=p_{000}\id\otimes\id\otimes\id=\dfrac{\id_{\mathcal{S}}}{d_{\mathcal{S}}}$ with the $d_{\mathcal{S}}$ term for normalisation.
\end{proof}

\subsection{Proof of Lemma \ref{lemma:necessary-VSE} \label{methods:LemmaVSE}}

Let $P^{(A,B)}_{\text{singlet}}=P^{AB}$ to condense notation.
Recall the statement of Lemma  \ref{lemma:necessary-VSE}: \emph{If $\hat{V}_{\mathcal{SE}}=\hat{V}_{A\mathcal{E}_A}+\hat{V}_{B\mathcal{E}_B}+\hat{V}_{C\mathcal{E}_C}\neq0$ has nontrivial interaction between $S$ and $E$ (i.e. contains two-body interaction terms), then $\left[P^{AB},\hat{V}_{\mathcal{SE}}\right]\neq0$.}

\begin{proof}
We can simplify $[P^{AB},\hat{V}_{\mathcal{SE}}]=[P^{AB},\hat{V}_{A\mathcal{E}_A}]+[P^{AB},\hat{V}_{B\mathcal{E}_B}]$, as $P^{AB}$ commutes with $C$.

Now, $[P^{AB},\hat{V}_{\mathcal{SE}}]$ would be zero if either:
\begin{enumerate}
    \item $[P^{AB},\hat{V}_{A\mathcal{E}_A}]\overset{!}{=}0$ and $[P^{AB},\hat{V}_{B\mathcal{E}_B}]\overset{!}{=}0$ are separately zero, or
    \item if $[P^{AB},\hat{V}_{A\mathcal{E}_A}]\overset{!}{=}-[P^{AB},\hat{V}_{B\mathcal{E}_B}]$.
\end{enumerate}
 
In general, we can write $\hat{V}_{A\mathcal{E}_A}=\sum_{ij=0,\ldots,3}g_{ij}\sigma_{i}^{A}\otimes\sigma_{j}^{\mathcal{E}_A}$, for $\sigma_{i}\in\left\{ \id,\sigma_{x},\sigma_{y},\sigma_{z}\right\} $. Then,
\begin{align}
[P^{AB},\hat{V}_{A\mathcal{E}_A}] & =-\dfrac{1}{4}\sum_{ij=0,\ldots,3}g_{ij}\left[\sigma_{x},\sigma_{i}^{A}\right]\otimes\sigma_{x}^{B}\otimes\sigma_{j}^{\mathcal{E}_A}\nonumber  -\dfrac{1}{4}\sum_{ij=0,\ldots,3}g_{ij}\left[\sigma_{y},\sigma_{i}^{A}\right]\otimes\sigma_{y}^{B}\otimes\sigma_{j}^{\mathcal{E}_A}\nonumber \\
 & \phantom{=} -\dfrac{1}{4}\sum_{ij=0,\ldots,3}g_{ij}\left[\sigma_{z},\sigma_{i}^{A}\right]\otimes\sigma_{z}^{B}\otimes\sigma_{j}^{\mathcal{E}_A}.
\end{align}
Due to the linear independence of the $\left\{ \sigma_{x}^{B},\sigma_{y}^{B},\sigma_{z}^{B}\right\} $ terms, in order for $[P^{AB},\hat{V}_{A\mathcal{E}_A}]\overset{!}{=}0$
to be true, we would require
\begin{align}
\sum_{ij=0,\ldots,3}g_{ij}\left[\sigma_{x},\sigma_{i}^{A}\right]\otimes\sigma_{x}^{B}\otimes\sigma_{j}^{\mathcal{E}_A} &\overset{!}{=}0 \label{eq:separated_sets_1}\\
\sum_{ij=0,\ldots,3}g_{ij}\left[\sigma_{y},\sigma_{i}^{A}\right]\otimes\sigma_{y}^{B}\otimes\sigma_{j}^{\mathcal{E}_A}&\overset{!}{=}0 \label{eq:separated_sets_2}\\
\sum_{ij=0,\ldots,3}g_{ij}\left[\sigma_{z},\sigma_{i}^{A}\right]\otimes\sigma_{z}^{B}\otimes\sigma_{j}^{\mathcal{E}_A}&\overset{!}{=}0.
\label{eq:separated_sets_3}
\end{align}
Consider Eq.~(\ref{eq:separated_sets_1}):
\begin{align}
\sum_{ij=0,\ldots,3}g_{ij}  \left[\sigma_{x},\sigma_{i}^{A}\right]\otimes\sigma_{x}^{B}  \otimes\sigma_{j}^{\mathcal{E}_A} =& \sum_{j=0,\ldots,3}g_{yj}2i\sigma_{z}^{A}\otimes\sigma_{x}^{B}\otimes\sigma_{j}^{\mathcal{E}_A}  +\sum_{j=0,\ldots,3}g_{zj}\left(-2i\right)\sigma_{y}^{A}\otimes\sigma_{x}^{B}\otimes\sigma_{j}^{\mathcal{E}_A}.
\end{align}
Again, due to linearly independence of the Pauli matrices, this is zero iff $g_{yj}=g_{zj}=0$ for all $j=0,\ldots,3$. The only possible nonzero terms are $g_{xj}$ and $g_{0j}$. Similarly, consider Eq. (\ref{eq:separated_sets_2}):
\begin{align}
\sum_{ij=0,\ldots,3}g_{ij} \left[\sigma_{y},\sigma_{i}^{A}\right]\otimes\sigma_{y}^{B}\otimes\sigma_{j}^{\mathcal{E}_A} = &\sum_{j=0,\ldots,3}g_{xj}\left(-2i\right)\sigma_{z}^{A}\otimes\sigma_{y}^{B}\otimes\sigma_{j}^{\mathcal{E}_A} +\sum_{j=0,\ldots,3}g_{zj}\left(2i\right)\sigma_{x}^{A}\otimes\sigma_{y}^{B}\otimes\sigma_{j}^{\mathcal{E}_A}.
\end{align}
which is zero iff $g_{xj}=g_{zj}=0$ for all $j=0,\ldots,3$. Similarly, Eq. (\ref{eq:separated_sets_3}) gives $g_{xj}=g_{yj}=0$ for all $j=0,\ldots,3$.

Hence, if $[P^{AB},\hat{V}_{A\mathcal{E}_A}]\overset{!}{=}0$ then we must have  $\hat{V}_{A\mathcal{E}_A}=\sum_{j=0,\ldots,3}g_{0j}\id_{A}\otimes\sigma_{j}^{\mathcal{E}_A}$. This is trivial because this is a local interaction on the environment $\mathcal{E}_B$ and contradicts the fact that $\hat{V}_{A\mathcal{E}_A}$ must be an interaction between system and environment. Similarly, $[P^{AB},\hat{V}_{B\mathcal{E}_B}]\overset{!}{=}0$ implies that $\hat{V}_{B\mathcal{E}_B}=\sum_{j=0,\ldots,3}\id_{B}\otimes\sigma_{j}^{\mathcal{E}_B}$ is a local interaction on the environment $\mathcal{E}_B$.

Finally, in general, 
\begin{align}
  [P^{AB},\hat{V}_{\mathcal{SE}}] & =-\dfrac{1}{4}\sum_{ij=0,\ldots,3}g_{ij}\left[\sigma_{x},\sigma_{i}^{A}\right]\otimes\sigma_{x}^{B}\otimes\sigma_{j}^{\mathcal{E}_A}\otimes\id^{\mathcal{E}_B}\nonumber \\
 & \phantom{=}-\dfrac{1}{4}\sum_{ij=0,\ldots,3}g_{ij}\left[\sigma_{y},\sigma_{i}^{A}\right]\otimes\sigma_{y}^{B}\otimes\sigma_{j}^{\mathcal{E}_A}\otimes\id^{\mathcal{E}_B}\nonumber \\
 & \phantom{=}-\dfrac{1}{4}\sum_{ij=0,\ldots,3}g_{ij}\left[\sigma_{z},\sigma_{i}^{A}\right]\otimes\sigma_{z}^{B}\otimes\sigma_{j}^{\mathcal{E}_A}\otimes\id^{\mathcal{E}_B}\nonumber \\
 & \phantom{=}-\dfrac{1}{4}\sum_{ij=0,\ldots,3}f_{ij}\sigma_{x}^{A}\otimes\left[\sigma_{x},\sigma_{i}^{B}\right]\otimes\id^{\mathcal{E}_A}\otimes\sigma_{j}^{\mathcal{E}_B}\nonumber \\
 & \phantom{=}-\dfrac{1}{4}\sum_{ij=0,\ldots,3}f_{ij}\sigma_{y}^{A}\otimes\left[\sigma_{y},\sigma_{i}^{B}\right]\otimes\id^{\mathcal{E}_A}\otimes\sigma_{j}^{\mathcal{E}_B}\nonumber \\
 & \phantom{=}-\dfrac{1}{4}\sum_{ij=0,\ldots,3}f_{ij}\sigma_{z}^{A}\otimes\left[\sigma_{z},\sigma_{i}^{B}\right]\otimes\id^{\mathcal{E}_A}\otimes\sigma_{j}^{\mathcal{E}_B}.
\end{align}
All the terms where $j=x,y,z$ are separately linearly independent (i.e. each line above is linearly independent from the others). Analously, we find that this is zero iff $g_{xj}=g_{yj}=g_{zj}=0$ for $j=x,y,z$ and $f_{xj}=f_{yj}=f_{zj}=0$ for $j=x,y,z$. This leaves us with $\hat{V}_{A\mathcal{E}_A}=\sum_{j=0,\ldots,3}g_{0j}\id_{A}\otimes\sigma_{j}^{\mathcal{E}_A}+\sum_{i=0,\ldots,3}g_{i0}\sigma_{i}^{A}\otimes\id^{\mathcal{E}_A}$ and similarly for $\hat{V}_{B\mathcal{E}_B}$, i.e. they contain only local terms, which is a contradiction.
\end{proof}

\subsection{Proof of Lemma~\ref{lem:H_B_H_0_V}\label{sec:proof_of_lem_HB_H0_V}}

Recall that we want to show that $[\hat{H}_{\vec{\boldsymbol{B}}} (\theta,\phi),\hat{H}_0+\hat{V}_{\mathcal{SE}}]\neq 0$ for all nontrivial $\hat{V}_{\mathcal{SE}}$.

\begin{proof}
Now, 
\begin{equation}
\left[\hat{H}_0+\hat{V}_{\mathcal{SE}},\hat{H}_{\vec{\boldsymbol{B}}}\left[\theta,\phi\right]\right]=\left[\hat{H}_0,\hat{H}_{\vec{\boldsymbol{B}}}\left(\theta,\phi\right)\right]+\left[\hat{V}_{\mathcal{SE}},\hat{H}_{\vec{\boldsymbol{B}}}\left(\theta,\phi\right)\right].
\end{equation}
The first term on the right hand side contains only nonzero terms on the system $ABC$ while the $\hat{V}_{\mathcal{SE}}$ has environment terms. For this to be zero overall, it must be that any nontrivial environment terms in the second commutator are zero. 
\begin{equation}
\left[\hat{V}_{\mathcal{SE}},\hat{H}_{\vec{\boldsymbol{B}}}\left(\theta,\phi\right)\right]=\left[\hat{V}_{A\mathcal{E}_A},\hat{H}_{\vec{\boldsymbol{B}}|A}\right]+\left[\hat{V}_{B\mathcal{E}_B},\hat{H}_{\vec{\boldsymbol{B}}|B}\right]+\left[\hat{V}_{C\mathcal{E}_C},\hat{H}_{\vec{\boldsymbol{B}}|C}\right].
\end{equation}
Clearly each term is independent, so let's just look at the first (and the other two follow analogously):
\begin{align}
\left[\hat{V}_{A\mathcal{E}_A},\hat{H}_{\vec{\boldsymbol{B}}|A}\right] & =\left[\sum_{ij=0,\ldots,3}g_{ij}\sigma_{i}^{A}\otimes\sigma_{j}^{\mathcal{E}_A},\hat{H}_{\vec{\boldsymbol{B}}|A}\right] =\sum_{ij=0,\ldots,3}g_{ij}\left[\sigma_{i}^{A},\hat{H}_{\vec{\boldsymbol{B}}|A}\right]\otimes\sigma_{j}^{\mathcal{E}_A}.
\end{align}
For the indices where $j\neq0$ (i.e. for the non-identity terms on the environment $\mathcal{E}_A$), we want the coefficients to be zero, i.e. 
\begin{equation}
\sum_{i=0,\ldots,3}g_{ij}\left[\sigma_{i}^{A},\hat{H}_{\vec{\boldsymbol{B}}|A}\right]\otimes\sigma_{j}^{\mathcal{E}_A}\overset{!}{=}0\quad\forall j=x,y,z.
\end{equation}
If $\hat{H}_{\vec{\boldsymbol{B}}|A}=\dfrac{\gamma B_{0}}{2}\sigma_{x}$ for example, then this would imply that 
\begin{align}
0 & \overset{!}{=}\sum_{i=0,\ldots,3}g_{ij}\gamma B_{0}\left[\sigma_{i}^{A},\sigma_{x}\right]  =g_{yj}\gamma B_{0}\left(-2i\sigma_{z}\right)+g_{zj}\gamma B_{0}2i\sigma_{y}\\
\implies & g_{yj}=g_{zj}=0\quad\forall j=x,y,z.
\end{align}
We can similarly show that $g_{xj}=0$ for all $j=x,y,z$, in which case $V_{A\mathcal{E}_A}=\id_{A}\otimes\sum_{j=0,x,y,z}g_{0j}\sigma_{j}^{\mathcal{E}_A}+\sum_{i=x,y,z}\sigma_{i}^{A}\otimes\id^{\mathcal{E}_A}$, i.e. containing only local terms.
Similarly for $\hat{V}_{B\mathcal{E}_B}$ and $\hat{V}_{C\mathcal{E}_C}$. This is a contradiction because $\hat{V}_{\mathcal{SE}}$ must be a nontrivial interaction. Hence, provided that $\hat{V}_{\mathcal{SE}}$ is a nontrivial interaction, we always have $[\hat{H}_0+\hat{V}_{\mathcal{SE}},\hat{H}_{\vec{\boldsymbol{B}}}\left[\theta,\phi\right]]\neq0$.
\end{proof}
\end{widetext}

\section{Another system-environment interaction: the SWAP\label{app:SWAP}}

What of more general system-environment interactions and more general environment initial states? If so, the condition of $\rho_{\mathcal{S}}\left(0\right)\neq\id_{\mathcal{S}}/d_{\mathcal{S}}$ is not necessary. In the \hyperref[sec:discussion]{Discussion} in the main text, we mentioned that the ultimate generation of coherence is key. In the following example, we can do this by transferring environment coherence into the system.

\begin{figure*}
\begin{centering}
\includegraphics[width=1.0\textwidth]{"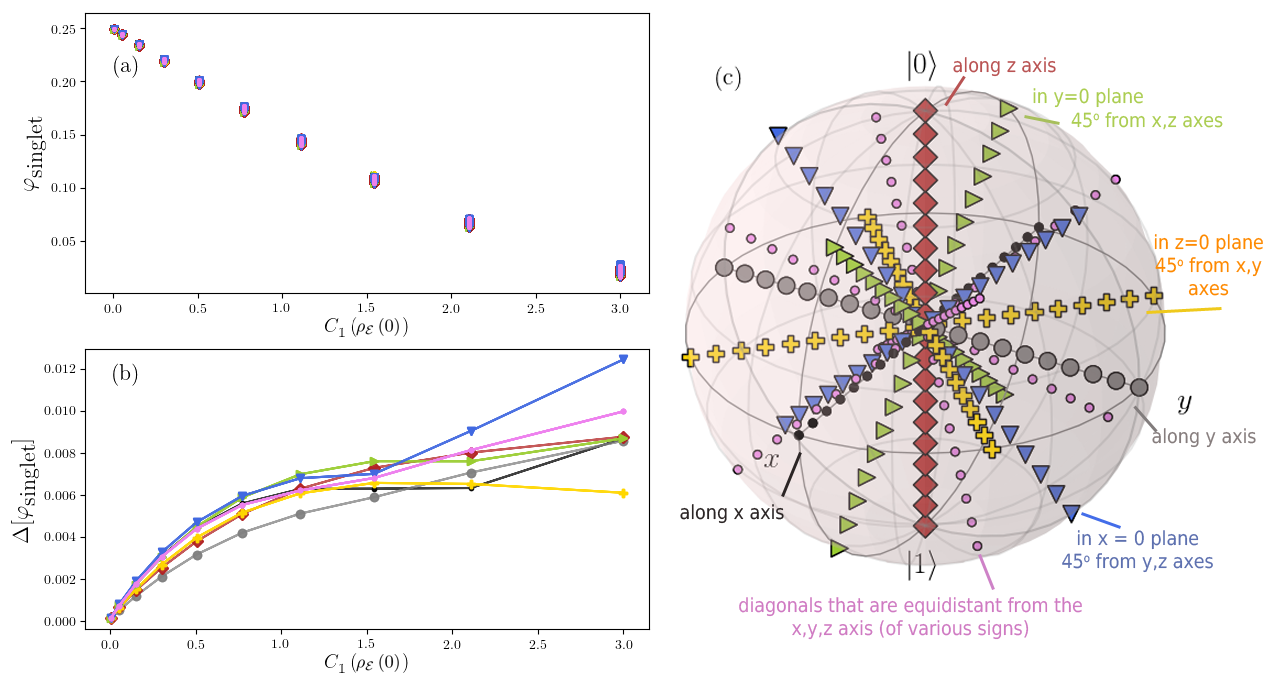"}
\par\end{centering}
\caption{Magnetic sensor performance for a \textbf{SWAP interaction} where the system is initially maximally mixed $\rho_{\mathcal{S}}\left(0\right)=\id/d_{\mathcal{S}}$ but the environment is not (see Example \ref{example:SWAP}). (a) The recombination value $\varphi_{\text{singlet}}$ for 500 randomly chosen initial $\rho_{\mathcal{E}}\left(0\right)$ versus the initial basis-independent coherence $C_{\id}\left(\rho_{\mathcal{E}}\left(0\right)\right)$ (b) Absolute anisotropy $\Delta\left[\varphi_{\text{singlet}}\right]$. (c) Visualisation of the initial environment states on the Bloch sphere. Some points have the same colouring and shape, as their behaviour in the absolute anisotropy follow the same trend. Model and all other parameters not mentioned explicitly here listed in Table \ref{tab:Parameters-and-configurations}.\label{fig:l1coherence_vs_varphi_environment}}
\end{figure*}

\begin{example}
\label{example:SWAP}The initial state $\rho_{\mathcal{S}}\left(0\right)=\id/d_{\mathcal{S}}$ on the system can lead to nontrivial sensing if (for example) $\hat{V}_{\mathcal{SE}}$ is a SWAP interaction between system and environment, and the initial environment states are not maximally mixed.  The SWAP interaction will introduce the environment coherence into the system. Consider the isotropic interaction
\begin{equation}
\hat{V}_{X\mathcal{E}_{X}}=J_{se} \sum_{i=x,y,z}\sigma_{i}^{X}\otimes\sigma_{i}^{\mathcal{E}_{X}}, \quad \text{for }X=A,B,C.
\end{equation}
Then $\exp[-it_{SE}\hat{V}_{A\mathcal{E}_A}]$ performs a perfect SWAP if $J_{se}t_{se}=\pi/4$ \citep{Pezzutto2019}. We can see in Fig. \ref{fig:l1coherence_vs_varphi_environment} that the coherence in the environment leads to a small change in the recombination value and hence a nontrivial magnetic sensor.
\end{example}

Fig. \ref{fig:l1coherence_vs_varphi_environment}(b) also shows that there is an initial sharp rise in the sensor perform versus initial environment coherence, before a partial plateau emerges. Maximum initial coherence does not lead to the same performance--there is some dependence on the basis. However, these results suggest that small environment coherences are most effective, in which case the basis does not matter as much.

Note that non-maximally-mixed states means that the effective interaction between system and environment will not reduce to the standard decohering Lindblad master equation.

This example shows that \emph{environment} coherence can contribute to the performance of the magnetic sensor. Here, the environment coherence introduces effective non-unital dynamics on the system, which then creates anisotropy in the system state\textemdash which then allows the magnetic field to affect it.


\section{Coherence yield \label{sec:coherence_yield_results}}

In the main text, we compared how the \emph{initial} coherence affected the subsequent magnetic sensor performance. However, it is also interesting to examine the coherence \emph{yield}, i.e. how much coherence is relevant during the recombination process itself.

Due to the trace-decreasing nature of $\rho_{\mathcal{SE}}\left(t\right)$ over time in our model, we will use the modified version as the correct distance to the trace-decreasing maximally mixed state:
\begin{align}
C_{\id}^{*}\left(\rho\right) & =S\left(\rho\biggl|\biggr|\dfrac{\id_{d}}{d}\tr\rho\right)\\
 & =\left(\log_{2}d-\log_{2}\tr\rho\right)\tr\rho-S\left(\rho\right).
\end{align}

\begin{figure*}
\begin{centering}
\includegraphics[width=1\textwidth]{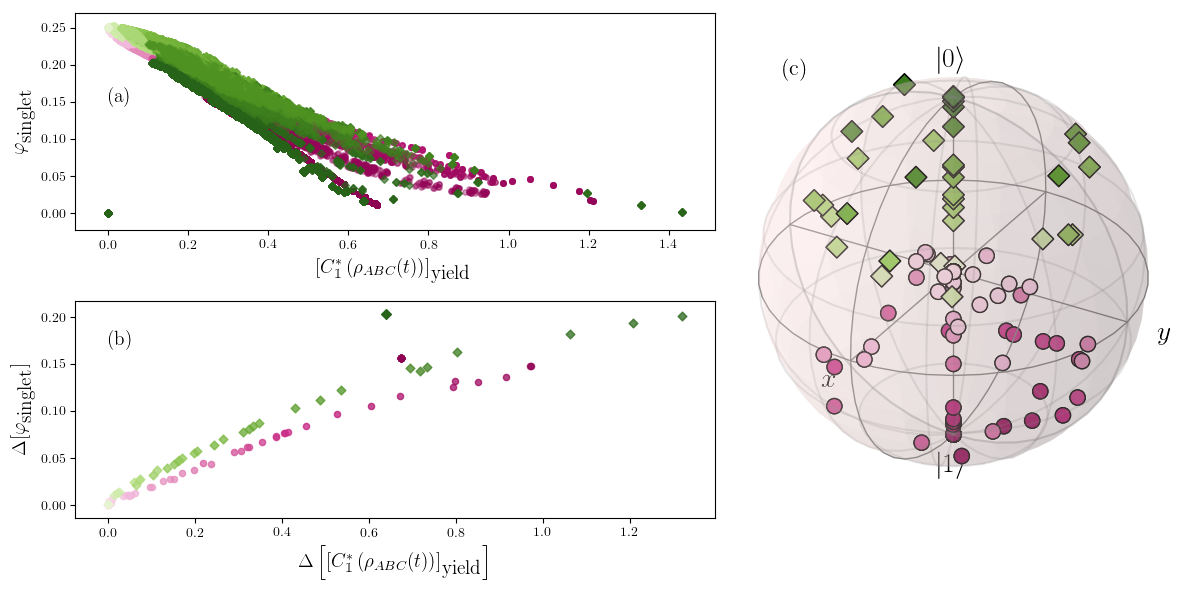}
\par\end{centering}
\caption{Relationship between the basis-independent quantum
coherence yield vs sensor performance. (a) Yield of basis-independent coherence $\left[C_{\id}^{*}\left(\rho_{ABC}\right)\right]_{\text{yield}}$ vs singlet recombination yield $\varphi_{\text{singlet }}$. (b) Absolute anisotropy of the basis-independent coherence $\Delta\left[C_{\id}^{*}\left(\rho_{ABC}\right)\right]_{\text{yield}}$ vs absolute anisotropy of the singlet recombination yield $\Delta\left[\varphi_{\text{singlet }}\right]$.
(c) Bloch sphere depicting the initial state $\rho_A$, where $\rho_{ABC}(0) = \rho_A \otimes \rho_A \otimes \id/2$.
All other model parameters listed in Table \ref{tab:Parameters-and-configurations}. \label{fig:basis-independent-coherence}}
\end{figure*}

Fig. \ref{fig:basis-independent-coherence} shows the basis-independent coherence yield versus the singlet recombination yield. There is a clear association between the two: greater coherence yield leads to a \emph{lower} recombination. However, the range of these yields is important for the sensor performance. We see also in  Fig. \ref{fig:basis-independent-coherence} that in which there is a strong positive linear relationship between the anisotropies of coherence and recombination yields. The subtrends can be well separated by considering the sign of the $z$ coordinate on the Bloch sphere, i.e. $\tr[\sigma_z \rho_A (0)]$, of the initial state.

These results show that basis-independent coherence, both initial and during, suppresses the raw value of the recombination yield. However, larger initial coherence also leads to greater \emph{variation} in both recombination yield and coherence yield.


\section{System-environment correlations and quantum Darwinism \label{sec:quantum_Darwinism}}

One of the problems with quantum coherence is that it is typically basis-dependent. One way to get around the basis-dependence is to consider basis-independent coherence, which we did in the main paper. A second solution is to consider correlations between states. In this appendix, we analyse some of the quantum discord and classical correlations between the system radicals and the environment.

The quantum mutual information between two systems $X,Y$ with joint state $\rho_{XY}$ is 
\begin{align}
I\left(X:Y\right) & =S\left(\rho_{X}\right)+S\left(\rho_{Y}\right)-S\left(\rho_{XY}\right),
\end{align}
where the von Neumann entropy is $S(\rho) = -\tr[\rho \log\rho]$, and the reduced states are $\rho_{X}=\tr_{Y}\left[\rho_{XY}\right]$ and $\rho_{Y}=\tr_{X}\left[\rho_{XY}\right]$. The quantum mutual information describes the correlations between $X$ and $Y$, and can be decomposed into quantum discord and classical information. The informational measure of quantum discord \citep{Henderson2001,Ollivier2001} is 
\begin{equation}
\mathcal{D}\left(X:Y\right)=\min_{\left\{ \Pi_{X|i}\right\} _{i}}\left[\sum_{i}p_{i}S\left(\rho_{X|i}\right)+S\left(\rho_{X}\right)-S\left(\rho_{XY}\right)\right],
\end{equation}
where $\rho_{Y|i}$ is the conditional state on $Y$ after measurement result $i$ on $X$ using the von Neumann measurement $\left\{ \Pi_{X|i}\right\} _{i}$ with probability $p_{i}$:\footnote{The measurement can be generalised to positive-operator valued measures (POVMs).}
\begin{align}
\rho_{B|i} & =\dfrac{\tr_{X}\left[\Pi_{X|i}\rho_{XY}\left(\Pi_{X|i}\right)^{\dagger}\right]}{p_{i}},\\
p_{i} & =\tr\left[\Pi_{X|i}\rho_{XY}\left(\Pi_{X|i}\right)^{\dagger}\right].
\end{align}
The classical Holevo information can then be defined as the difference between the quantum mutual information and the quantum discord \citep{Zwolak2013}:
\begin{align}
\chi\left(X:Y\right) & =I\left(X:Y\right)-\mathcal{D}\left(X:Y\right).
\end{align}

In quantum Darwinism \citep{Zurek2009}, there is an expanded notion of classicality called \emph{objectivity}: a system is objective when its information is broadcasted to the environment \citep{Horodecki2015}. Objectivity (in the framework of strong quantum Darwinism \citep{Le2019}) is characterised by zero quantum discord and maximal classical correlations between the system and environment\citep{Le2019}:
\begin{equation}
\mathcal{D}\left(\mathcal{S}:\mathcal{E}\right)=0,\quad\chi\left(\mathcal{S}:\mathcal{E}\right)=S\left(\rho_{\mathcal{S}}\right).
\end{equation}
We say that a system is \emph{non-objective} if it has nonzero quantum discord with the environment or non-maximal classical correlations with the environment.

\begin{figure*}
\begin{centering}
\includegraphics[width=0.85\textwidth]{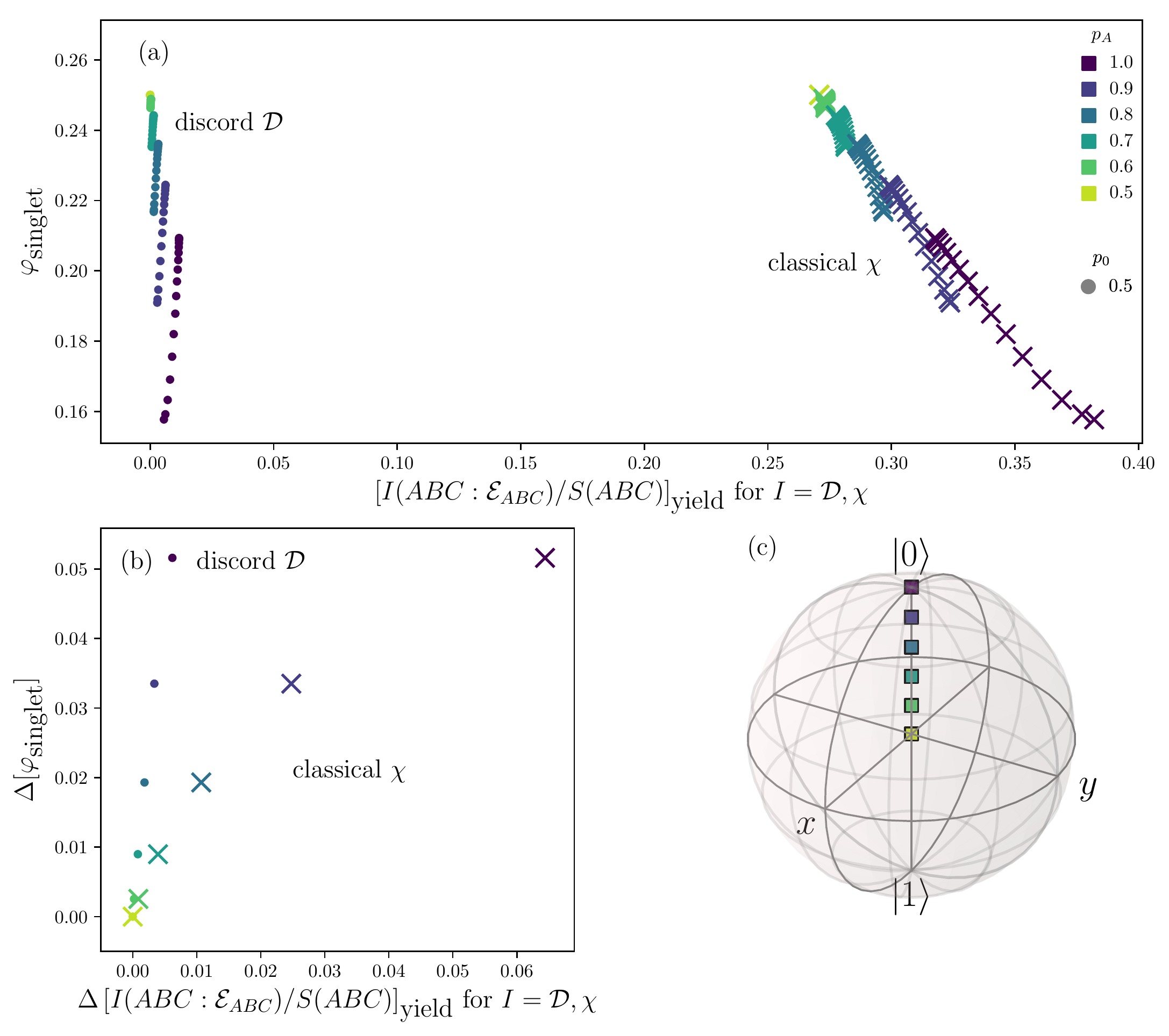}
\par\end{centering}
\caption{Relationship between system-environment correlations and sensor performance. (a) Yield of normalised classical and quantum information yields between radicals $ABC$
and their environments $\mathcal{E}_{ABC}= \mathcal{E}_A \mathcal{E}_B \mathcal{E}_C$, $\left[I\left(ABC:\mathcal{E}_{ABC}\right)/S\left(ABC\right)\right]_{\text{yield}}$, versus the singlet recombination yield $\varphi_{\text{singlet}}$, where $\chi$ denotes the classical Holevo information and $\mathcal{D}$ denotes the quantum discord. Note that the discord yield very close to zero.
(b) Absolute anisotropy of the classical and quantum information yields, $\Delta\left[I\left(ABC:\mathcal{E}_{ABC}\right)/S\left(ABC\right)\right]_{\text{yield}}$, versus recombination yield $\Delta\left[\varphi_{\text{singlet }}\right]$.
(c) Depiction of the initial states on the Bloch sphere, i.e. $\rho_{ABC} (0) = \rho_A \otimes \rho_A \otimes \id/2$, where $\rho_A = p_A |0\rangle\langle 0| + (1-p_A)|1\rangle\langle 1|$. The initial environment states are $\rho_{n,E}=(\id/2)^{\otimes 3}$. All other model parameters are listed in Table \ref{tab:Parameters-and-configurations}.  \label{fig:Information-yields-vs-varphi} }
\end{figure*}

In Fig. \ref{fig:Information-yields-vs-varphi}(a) and Fig. \ref{fig:Information-yields-vs-varphi_dif_E}(a) we plot the normalised classical and quantum information yields with the singlet recombination, where the information yield of $I=\chi,\mathcal{D}$ is defined analogously to Eq.~(\ref{eq:coherence_yield}):
\begin{equation}
I_{\text{yield}}\left[\theta,\phi\right]=\dfrac{k}{\varphi_{\text{singlet}}}\int_{0}^{\infty}d\tau  I\left(\tau\right)\tr[P_{\text{singlet}}^{\left(A,B\right)}\rho_{\mathcal{S}}\left(\tau\right)].
\end{equation}
The initial system-environment states are product with each other $\rho_{\mathcal{S}}\otimes\rho_{\mathcal{E}}$, hence have no initial correlations.
We use the information yield: we can measure the amount of correlations generated during evolution ($\chi(t),\mathcal{D}(t)$), modulated by how much it contributed to the final recombination yield (which is done by the $tr[P^{(A,B)}_\text{singlet}\rho_S(t)]$ term). 

We consider the information between system radicals $ABC$ and the environment spins $\mathcal{E}_{ABC}$. We compare two different initial environment states, $\rho_\mathcal{E} = \id/2$ like in the rest of the paper in Fig. \ref{fig:Information-yields-vs-varphi}, versus $\rho_\mathcal{E}=\ket{0}\bra{0}$ in Fig. \ref{fig:Information-yields-vs-varphi_dif_E}. This is because different environment states are known to affect subsequent quantum Darwinism \cite{Giorgi2015,Zwolak2009,Zwolak2010,Balaneskovic2015,Balaneskovic2016}.

In Fig. \ref{fig:Information-yields-vs-varphi}(a), the quantum discord yield is very close to zero, while the classical information is \emph{not} maximal (maximum normalised classical information will have value $=1$). Thus, the radicals $ABC$ are non-objective and close to ``classical'' in the sense of lacking quantum correlations with their environment. Furthermore, the greatest classical information yield  corresponds to the lowest recombination yield in the bottom right of Fig.~\ref{fig:Information-yields-vs-varphi}(a). This suggests that a fully objective state would be useless as a quantum magnetic sensor: if the state is fully objective, then the environment is perfectly monitoring the system, corresponding to a quantum-Zeno-like situation. Thus, non-objectivity appears to be necessary for a nontrivial magnetic sensor.

In Fig. \ref{fig:Information-yields-vs-varphi}(b) we plot the absolute anisotropy for the classical and quantum correlation yields versus the absolute anisotropy of the singlet recombination. It shows that greater variation in classical information is correlated with greater sensor performance. There is a relationship between (non)-objectivity and sensor performance, which can be explained as follows: the initial states with the greatest basis-independent coherence vary more across different field angles, leading to greater variation in the recombination yield and greater variation in the classical correlations. 

This leads to an interesting point: the spins in the magnetic sensor can happily operate in the space in which it is instantaneously classical (i.e. lacking/close-to-zero quantum correlations), and yet non-objective. This is tempered by the results of the main text, where the spins contain quantum coherence, both in the initial state and during evolution. Thus, these results pin-point the level of ``quantumness'' in the magnetic sensor.

\begin{figure*}
\begin{centering}
\includegraphics[width=0.85\textwidth]{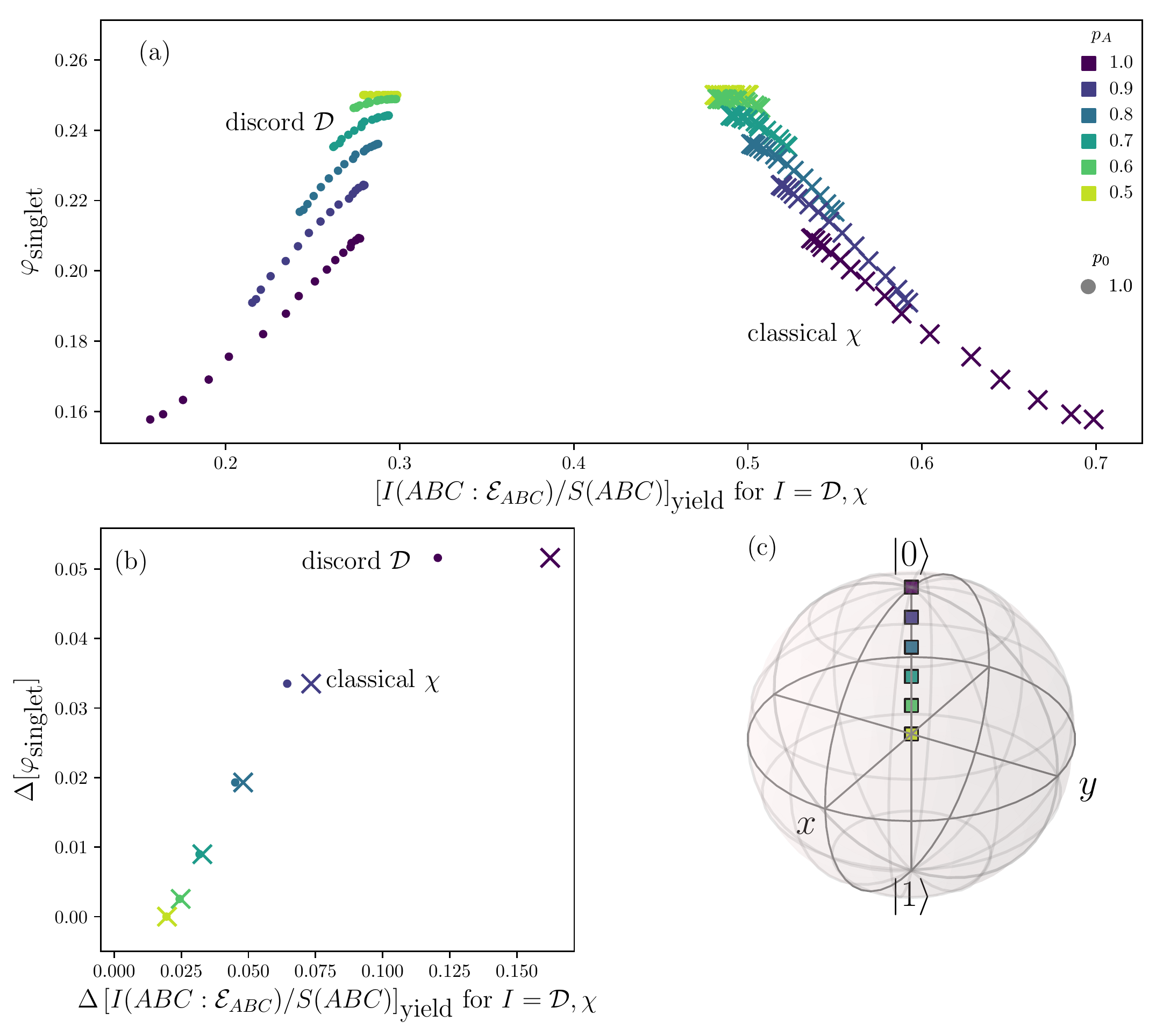}
\par\end{centering}
\caption{Relationship between system-environment correlations and sensor performance, with pure environment states. Exactly as Fig.~\ref{fig:Information-yields-vs-varphi}, except that the initial environment states are $\rho_{n,E}=|0\rangle\langle 0|^{\otimes 3}$.
(a) Yields of normalised information versus singlet recombination yield. (b) Absolute anisotropy of information versus absolute anisotropy of recombination yield. (c) Depiction of system initial states  $\rho_{ABC} (0) = \rho_A \otimes \rho_A \otimes \id/2$, where $\rho_A = p_A |0\rangle\langle 0| + (1-p_A)|1\rangle\langle 1|$ on the Bloch sphere.\label{fig:Information-yields-vs-varphi_dif_E} }
\end{figure*}

In contrast, in Fig.~\ref{fig:Information-yields-vs-varphi_dif_E} where the initial environment states are pure, $\ket{0}\bra{0}$, there is much more quantum discord over time. However, if we compare Fig.~\ref{fig:Information-yields-vs-varphi}(b) Fig.~\ref{fig:Information-yields-vs-varphi_dif_E}(b), the sensor performance given by $\Delta[\varphi_{\text{singlet}}]$ are very similar. This shows that in the model considered, quantum magnetic sensor performance relies primarily on the initial coherence in the system radicals. This is primarily due to how the magnetic field does not directly interact with the environment state. (Not shown here: when the magnetic field acts on the environment, the environment initial coherence affects the recombination yield.)

The combined results of Fig.~\ref{fig:Information-yields-vs-varphi} and  Fig.~\ref{fig:Information-yields-vs-varphi_dif_E} show that the quantum magnetic sensor always operates in a non-objective state with its environment. Hence, the presence of non-objectivity alone is not sufficient for sensor performance, though the numerical results suggest that non-objectivity may be necessary. While there is a relationship between non-objectivity and quantum sensor performance, sensor performance is much more strongly linked to initial system coherence in our model. 

In summary, the results of this appendix show the magnetic sensor can operate in a regime that has quantum coherence and is non-objective, but performance is better predicted by initial system coherence rather than system-environment correlations in our model.

\end{document}